\documentclass{article}
\usepackage{setspace}
\usepackage{braket}
\usepackage{bbm}
\usepackage{tcolorbox}
\usepackage{graphicx}
\usepackage{amsthm}
\usepackage{hyperref}
\hypersetup{
    colorlinks,
    citecolor= orange,
    filecolor= blue,
    linkcolor= blue,
    urlcolor= green
}
\usepackage{multicol}
\usepackage{mathrsfs}
\usepackage{appendix}
\usepackage{amssymb}
\usepackage{relsize}
\usepackage{amsthm}
\usepackage{ amssymb }
\usepackage{amsmath}
\usepackage[utf8]{inputenc}
\usepackage[english]{babel}
\usepackage[a4paper, total={6in, 9.2in}]{geometry}
\usepackage{amsmath}

\newtheorem{definition}{Definition}
\newtheorem{definition2}{Lemma}

\newtheorem{definition4}{Theorem}

\newtheorem{Co}{Corollary}
\usepackage[T1]{fontenc}
\usepackage{lmodern}

\usepackage{tikz-cd}

\begin{titlepage}

\title{SBSCV}
\author{Alberto Acevedo}

\date{2023}

\end{titlepage}

\begin{document}

\thispagestyle{plain}
\begin{center}
   
    \textbf{Monitoring the process of system-information broadcasting in time for continuous variables}

    \vspace{0.4cm}
    
    \textbf{Alberto Acevedo$^{1,3}$, Jarek Korbicz$^{2}$, Janek Wehr $^{1}$ }

      \vspace{0.4cm}
      \textbf{$^{1}$The University of Arizona, Tucson, USA}
      
      \vspace{0.2cm}
      
      \textbf{$^{2}$Center for Theoretical Physics, Polish Academy of Science, Warsaw, Poland}

 	\vspace{0.2cm}
      
      \textbf{$^{3}$ La Universidad CEU Cardenal Herrera ,  Valencia, Spain}

    \vspace{0.9cm}
    \textbf{Abstract:}
In this paper, we contribute to the mathematical foundations of the recently established theory of Spectrum Broadcast Structures (SBS). These are multipartite quantum states, encoding an operational notion of objectivity and exhibiting a more advanced form of decoherence. We study SBS and asymptotic convergence to SBS in the case of a central system interacting with $N$ environments via the von Neumann-type measurement interactions, ubiquitous in the theory of open quantum systems. We will be focusing on the case where the system is modeled by an infinite-dimensional Hilbert space and the operators associated with the system in the Hamiltonian have purely continuous spectrum. Such a setup yields mathematical complications that have hitherto not been addressed in the theory of SBS.
\end{center}
\onehalfspacing

\section{Background and Fundamental Concepts}
\;\;\; In recent times significant attention has been given to a family of multipartite states named \emph{Spectrum Broadcast Structures} (SBS) \cite{JKone} \cite{JKtwo} \cite{JKthree} \cite{kor5}. Since its creation, the theory of SBS has been used as a tool in the discipline of \emph{quantum foundations}; particularly in the theories of \emph{quantum decoherence} and  \emph{quantum darwinism}\cite{schloss}\cite{schloss2}\cite{Zurek}\cite{Zurek2}. Recently, quantum darwinism and SBS theory have been shown to be equivalent under certain technical assumptions \cite{Olaya}. Motivating the theory of quantum darwinism and the theory of SBS is the question of objectivity in the quantum world. To avoid philosophical contention, \cite{JKone} \cite{JKtwo} and\cite{JKthree} provide a definition of objectivity motivated by properties of classical dynamical systems. A multipartite quantum mechanical state satisfying such properties is called a SBS. The definition of objectivity proposed in \cite{Zurek} is:

\begin{definition}
\label{def:objectivity} A state of the system $S$ exists objectively if many observers can find out the state of $S$ independently, and without perturbing it.
\end{definition}

There are two clauses in the definition above that need to be made more precise, namely, "can find out the state of $S$" and "without perturbing it". The first of these means that any of the observers may locally solve a \emph{quantum state discrimination} optimization problem (QSD) \cite{bae} \cite{qiu} \cite{hellstrom} that allows the observer to identify the state of the system $S$ by proxy; we include a brief discussion regarding QSD in appendix \ref{app:QSD}. The second clause, "without perturbing it" may be formalized by introducing a distance measure. In this work, we will be using the trace distance, associated with the norm $\|\mathbf{\hat{A}}\big\|_{1}:=\sqrt{\mathbf{\hat{A}}^{\dagger}\mathbf{\hat{A}}}$.  The following definition proposed in \cite{JKone} is a mathematical formalization of Definition \ref{def:objectivity}. 
\begin{definition}{\textbf{SBS}:}
\label{eqn:turk4}
A Spectrum Broadcast Structure is a multipartite state of a central system $S$ and an environment 
 $E$, consisting of sub-environments $E^{1}, E^{2},..., E^{N_{E}}$:
\begin{equation}
\label{eqn:morocco}
\boldsymbol{\hat{\rho}} = \sum_{i}p_{i}|i\rangle\langle i|\otimes \bigotimes_{k= 1}^{N_{E}}\boldsymbol{\hat{\rho}}_{i}^{E^{k}}
\end{equation}
where $\{|i\rangle\}_{i}$ is some basis in the system's space, $p_{i}$ are probabilities summing up to one, and all states $\boldsymbol{\hat{\rho}}_{i}^{E^{k}}$ are perfectly distinguishable in the following sense:
\begin{equation}
F\big(\boldsymbol{\hat{\rho}}_{i}^{E^{k}},\boldsymbol{\hat{\rho}}_{j}^{E^{k}}\big) =  0
\end{equation}
for all $i\neq j$ and for all $k = 1,...,N_{E}$, where $F(\cdot,\cdot)$ is the quantum fidelity defined as $F\big(\boldsymbol{\hat{\rho}},\boldsymbol{\hat{\sigma}}\big):=\big\|\sqrt{\boldsymbol{\hat{\rho}}}\sqrt{\boldsymbol{\hat{\sigma}}}\big\|_{1}^{2}$ \cite{fuchs}.
\end{definition}
\subsection{Non-unitary dynamics from von Neumann type interactions in the case of the discrete variable}
\;\;\; In this paper, we will study non-unitary dynamics in the \emph{quantum-measurement} limit (this is the name given to the limit where the interaction Hamiltonian dominates the dynamics, see \cite{schloss}) generated in  by von Neumann-type interaction Hamiltonians \cite{zeh}, namely those of the form
\begin{equation}
\label{end:dynamicss}
\mathbf{\hat{H}}_{int} = \mathbf{\hat{X}}\otimes\sum_{k}g_{k}\mathbf{\hat{B}}_{k}
\end{equation}
where $\mathbf{\hat{X}}$ and the $\mathbf{\hat{B}}_{k}$ are self-adjoint operators acting in their respective Hilbert spaces $\mathscr{H}_{S}$ and $\mathscr{H}_{E^{k}}$ for each $k$, and $g_k$ are (real) coupling constants. We study a quantum system interacting with $N$ macroscopic environments, and assume that the joint initial state has the product form: 
\begin{equation}
\label{eqn:snake}
\boldsymbol{\hat{\rho}}=\boldsymbol{\hat{\rho}}_{S_{0}}\otimes\bigotimes_{k=1}^{N}\boldsymbol{\hat{\rho}}^{E^{k},0}
\end{equation}
In the state (\ref{eqn:snake}) we write the subscript $0$ of $\hat{\boldsymbol{\rho}}$ followed by a comma, i.e. $E^{k},0$, to emphasize that this is the initial state of the $k$th environment $E^{k}$; similarly we use the subscript in $S_{0}$ to highlight the initial state of the system. We evolve our total initial state using the dynamics generated by (\ref{end:dynamicss}), i.e  
\begin{equation}
\label{eqn:zelda10}
\boldsymbol{\hat{\rho}}_{t} = \bigg(e^{-it\mathbf{\hat{X}}\otimes\sum_{k=1}^{N}g_{k}\mathbf{\hat{B}}_{k}}\bigg)\boldsymbol{\hat{\rho}}_{S_{0}}\otimes\bigotimes_{k=1}^{N}\boldsymbol{\hat{\rho}}^{E^{k},0}\bigg(e^{it\mathbf{\hat{X}}\otimes\sum_{k=1}^{N}g_{k}\mathbf{\hat{B}}_{k}}\bigg).
\end{equation}
To study the state of the subsystem formed by the system $S$ and the first $N_{E}$ environments, we take the partial trace of the time-evolved density operator over the remaining $M_{E}:=N-N_{E}$ environments. For the case where $\mathbf{dim}\big(\mathscr{H}_{S}\big) = d_{S}<\infty $ is a finite-dimensional Hilbert space and the operator $\mathbf{\hat{X}}$ has full rank, the result is 
\begin{equation}
\label{eqn:decmode}
\sum_{i,j=1}^{d_{S}}\sigma_{i,j}\Gamma(i,j,t)|i\rangle\langle j| \otimes\bigotimes_{k=1}^{N_{E}} \boldsymbol{\hat{\rho}}_{x_{i},x_{j}}^{E^{k},t}
\end{equation}
where, $\{\big|i\big\rangle\}_{i=1}^{d_{s}}$ are the eigenvectors of $\mathbf{\hat{X}}$, with corresponding eigenvalues $\{x_{i}\}_{i=1}^{d_{S}}$ and we have used the following notation
\begin{equation}
\boldsymbol{\hat{\rho}}_{x,y}^{E^{k},t}:= e^{-itxg_{k}\mathbf{\hat{B}}_{k}}\boldsymbol{\hat{\rho}}^{E^{k},0}e^{ityg_{k}\mathbf{\hat{B}}_{k}} \: (k  =1,2,..., N_{E}) 
\end{equation}
\begin{equation}
\label{eqn:purepreserve}
\boldsymbol{\hat{\rho}}_{x}^{E^{k},t}:= e^{-itxg_{k}\mathbf{\hat{B}}_{k}}\boldsymbol{\hat{\rho}}^{E^{k},0}e^{itxg_{k}\mathbf{\hat{B}}_{k}} \: (k = 1,2,...,N_{E}). 
\end{equation}
\begin{equation}
\sigma_{i,j} := \langle i|\boldsymbol{\hat{\rho}}_{S_{0}}|j\rangle 
\end{equation}
\begin{equation}
\label{thegamm}
\Gamma(i,j,,t):=\prod_{n=N_{E}+1}^{N}\gamma_{i,j}^{n}(t)
\end{equation}
where
\begin{equation}
\gamma_{i,j}^{k}(t):= Tr\big\{ \boldsymbol{\hat{\rho}}_{x_{i},x_{j}}^{E^{k},t} \big\}
\end{equation}
We may also write (\ref{eqn:decmode}) as 
\begin{equation}
\label{eqn:decmodechan}
\Lambda_{t}\Big(\boldsymbol{\hat{\rho}}_{S_{0}}\otimes\bigotimes_{k=1}^{N_{E}}\boldsymbol{\hat{\rho}}^{E^{k},0}\Big):=
\sum_{i,j=1}^{d_{S}}\sigma_{i,j}\Gamma(i,j,t)|i\rangle\langle j| \otimes\bigotimes_{k=1}^{N_{E}} \boldsymbol{\hat{\rho}}_{x_{i},x_{j}}^{E^{k},t}
\end{equation}
where $\Lambda_{t}$ is a quantum channel (see \cite{Nielsen} for a discussion on quantum channels) defined as follows.
\begin{equation}
\label{eqn:morocco7}
\Lambda_{t}\big(\boldsymbol{\hat{\rho}}\big):=\mathscr{U}_{t}\circ\mathscr{E}_{t}\big(\boldsymbol{\hat{\rho}}\big)
\end{equation}
where 
\begin{equation}
\mathscr{U}_{t}\big(\mathbf{\hat{A}}\big):= e^{-it\mathbf{\hat{X}}\otimes  \sum_{k=1}^{N_{E}}g_{k}\mathbf{\hat{B}}_{k}}\big( \mathbf{\hat{A}}\big)e^{it\mathbf{\hat{X}}\otimes\sum_{k=1}^{N_{E}}g_{k}\mathbf{\hat{B}}_{k}}
\end{equation}
acts non-trivially in $\mathcal{S}\big(\mathscr{H}_{S}\otimes \bigotimes_{k=1}^{N_{E}}\mathscr{H}_{k}\big)$ ( $\mathcal{S}(\mathscr{H})$ is the space of density operators acting in $\mathscr{H}$) 
and  $\mathscr{E}_{t}$ acts non-trivially only in $\mathcal{S}\big(\mathscr{H}_{S})$ as follows. 
\begin{equation}
\label{eqn:decfinitecase}
\mathscr{E}_{t}(\mathbf{\hat{C}}):= \sum_{i,j=1}^{d_{S}}\langle i|\mathbf{\hat{C}}|j\rangle\Gamma(i,j,t)|i\rangle\langle j|
\end{equation}
In words, the trace-preserving quantum map $\Lambda_{t}$ is a composition of two trace-preserving quantum maps $\mathscr{U}_{t}$ and $\mathscr{E}_{t}$: a unitary map acting on $S$ and the environmental degrees of freedom that were not traced out and a non-unitary map acting locally in $S$.\\

Refocusing our attention to (\ref{eqn:decmode}), the question of whether or not such a joint state of the system and its $N_{E}$ environments converges to an SBS state, becomes the mathematical question of whether or not the (\ref{eqn:decmode}) gets arbitrarily close to an SBS state as $t$ becomes large, where closeness is quantified by the trace norm. In \cite{JKone} \cite{JKthree} and \cite{kor5} this question has been treated in the case where $\mathscr{H}_{S}$ is finite-dimensional and $\mathbf{\hat{X}}$ is full rank. So far, there has been no attempt to rigorously tackle the scenario where $\mathscr{H}_{S}$ is infinite-dimensional and $\mathbf{\hat{X}}$ is a self-adjoint operator with continuous spectrum.

\section{SBS for Continuous Variables}
\label{eqn:ptrace}
\;\;\; In this section, we will generalize the concept of SBS to include the case where the operator $\mathbf{\hat{X}}$ is a position operator acting in $\mathscr{H}_{S}= L^{2}\big( \mathbb{R} \big)$. To adapt SBS theory to such a case, we will need to study the decoherence of superpositions of the generalized eigenstates of $\mathbf{\hat{X}}$. In particular, for a state $\boldsymbol{\hat{\rho}}_{S_{0}}$ in $\mathcal{S}\big(L^{2}\big( \mathbb{R} \big)\big)$, and for any $x$ and $x'$ thought of as elements of the spectrum of $\hat{\mathbf{X}}$, we will study the behaviour of  
\begin{equation}
\label{eqn:coherenceterms}
\big\langle x^\prime\big|\mathscr{E}_{t}\big(\boldsymbol{\hat{\rho}}_{S}\big)\big|x\big\rangle 
\end{equation}
as a function of $t$ , where $\mathscr{E}_{t}$ is a quantum map that produces non-unitary dynamics, including dissipation and decoherence effects \cite{schloss}. The terms (\ref{eqn:coherenceterms}), also known as coherences, should be compared to analogous expressions (\ref{eqn:decfinitecase}) studied in the previous section, in which $\hat{\mathbf{C}}$ was a density operator decomposed in the basis of a finite rank operator.  The number of coherence terms was in that case equal $d_{S}(d_{S}-1)$. We henceforth refer to the case where $\mathbf{\hat{X}}$ has purely continuous spectrum as the SBS theory for continuous variables (SBSCV).\\

As in the previous section, we assume \emph{quantum-measurement limit}:
\begin{equation}
\mathbf{\hat{H}}_{tot} \approx \mathbf{\hat{H}}_{int}
\end{equation}
We will also assume an interaction Hamiltonian of the von Neumann type. Hence,
\begin{equation}
\label{eqn:intham2}
\mathbf{\hat{H}}_{int} = \mathbf{\hat{X}}\otimes\sum_{k=1}^{N}g_{k}\mathbf{\hat{B}}_{k}
\end{equation}
Here $\mathbf{\hat{X}}$ will be taken to be the position operator. An example, which we will explore in-depth, is the case where all of the $\mathbf{\hat{B}}_{k}$ are either position or momentum operators. The time evolution operator corresponding to (\ref{eqn:intham2}) is:
\begin{equation}
\label{eqn:timeev1}
\mathbf{\hat{U}}_{t} = e^{-it\mathbf{\hat{X}}\otimes\sum_{k=1}^{N}g_{k}\mathbf{\hat{B}}_{k}}.
\end{equation}
Considering a product state, acting on the appropriate tensor product Hilbert space, as our initial state, as we did in (\ref{eqn:snake}), we apply the time evolution operator (\ref{eqn:timeev1}). 
\begin{equation}
\label{end:toky}
\boldsymbol{\hat{\rho}}_{t} = \bigg(e^{-it\mathbf{\hat{X}}\otimes\sum_{k=1}^{N}g_{k}\mathbf{\hat{B}}_{k}}\bigg)\bigg(\boldsymbol{\hat{\rho}}_{S_{0}}\otimes \bigotimes_{k=1}^{N}\boldsymbol{\hat{\rho}}^{E^{k},0}\bigg)\bigg(e^{it\mathbf{\hat{X}}\otimes\sum_{k=1}^{N}g_{k}\mathbf{\hat{B}}_{k}}\bigg).
\end{equation}

In order to study the state of the subsystem formed by the system $S$ and the first $N_{E}$ environments, we take the partial trace of the time-evolved density operator over the remaining $M_{E}:=N-N_{E}$ environments. Using Lemma \ref{eqn:partialtrace} in appendix \ref{thelemmainthispape} and (\ref{end:toky}), the result of partially tracing (\ref{end:toky}) over $M_{E}$ environments is
\begin{equation}
\label{eqn:coca4}
\mathscr{U}_{N_{E},t}\Bigg(\mathscr{E}_{t}^{M_{E}}\big(\boldsymbol{\hat{\rho}}_{S_{0}}\big)\otimes\bigotimes_{k=1}^{N_{E}}\boldsymbol{\hat{\rho}}^{E^{k},0}\Bigg)
\end{equation}
where 
\begin{equation}
 \mathscr{U}_{n,t}\big( \mathbf{\hat{A}}\big) : = e^{-it\mathbf{\hat{X}}\otimes  \hat{\mathbf{S}}_{n}}\big( \mathbf{\hat{A}}\big)e^{it\mathbf{\hat{X}}\otimes\hat{\mathbf{S}}_{n}} 
\end{equation}
\begin{equation}
\hat{\mathbf{S}}_{n} := \sum_{k=1}^{n}g_{k}\mathbf{\hat{B}}_{k} 
\end{equation}
and
\begin{equation}
\label{eqn:thequantummap}
 \mathscr{E}_{t}^{M_{E}}\{ \boldsymbol{\hat{\sigma}}\} : =\int\int \langle x|\boldsymbol{\hat{\sigma}}|y\rangle\Gamma_{M_{E}}(t,x,y)|x\rangle\langle y| dxdy.
\end{equation}
where
\begin{equation}
\label{eqn:scrmyo}
\Gamma_{M_{E}}(t,x,y):=  \prod_{k=N_{E}+1}^{N}Tr_{k}\bigg\{\big(e^{-itxg_{k}\mathbf{\hat{B}}_{k}}\big) \boldsymbol{\hat{\rho}}^{E^{k},0}\big(e^{it yg_{k}\mathbf{\hat{B}}_{k}}\big)\bigg\} = 
\end{equation}
\begin{equation}
\prod_{k=N_{E}+1}^{N}Tr_{k}\bigg\{\big(e^{-it(x-y)g_{k}\mathbf{\hat{B}}_{k}}\big) \boldsymbol{\hat{\rho}}^{E^{k},0}\bigg\}   
\end{equation}
with $M_{E}= N-N_{E}$, the number of traces being taken in equation (\ref{eqn:scrmyo}). Note that $\gamma_{M_{E}}(t,x,y)=\gamma_{M_{E}}(t,x-y)$ is a function of $x-y$.
To simplify the notation for now, we shall forgo all but two environments, i.e. $N=2$, $N_{E} = M_{E}= 1$ (generalization to arbitrary $M_{E}$, $N_{E}$ and $N$ is straightforward). 
In this case, after partial
tracing over one of the environments we obtain the following density operator. 
\begin{equation}
\label{eqn:evo12}
\boldsymbol{\hat{\rho}}_{t}: = \mathscr{U}_{1,t}\big(\mathscr{E}^{1}_{t}\{\boldsymbol{\hat{\rho}}_{S_{0}}\}\otimes\boldsymbol{\hat{\rho}}^{E^{1},0}\big)
\end{equation}
The map $\mathscr{E}^{1}_{t}$ is a decoherence quantum map and $\mathscr{U}_{1,t}$ is a unitary map obtained from the Hamiltonian (\ref{eqn:intham2}) for the case $ N = 2$ and $N_{E}=1$ (here, $n=1$).\\

The primary divergence from the techniques presented in the previous section is that we now need to partition the operator $\mathscr{E}^{1}_{t}\{\boldsymbol{\hat{\rho}}_{S_{0}}\}$. For the case where $\mathbf{\hat{X}}$ is a position operator, the partitions of interest are of the form: 
\begin{equation}
\label{eqn:partitionuzbek}
\mathscr{E}^{1}_{t}\{\boldsymbol{\hat{\rho}}_{S_{0}}\big\} =\sum_{i}\sum_{j} \mathbf{\hat{P}}_{\Delta_{i,t}}\mathscr{E}^{1}_{t}\{\boldsymbol{\hat{\rho}}_{S_{0}}\big\}\mathbf{\hat{P}}_{\Delta_{j,t}}
\end{equation}
where the operators $\mathbf{\hat{P}}_{\Delta_{i,t}}$ are projector operators defined as follows.  $\mathbf{\hat{P}}_{\Delta_{i,t}} = \chi_{\Delta_{i,t}}\big(\mathbf{\hat{X}}\big)$ ($\chi_{\Delta_{i,t}}\big(x)$ are indicator functions and the $\Delta_{i,t}$ are subsets of the real line). This is akin to what was done in the previous section to obtain (\ref{eqn:decmode}), where in lieu of the projectors $\mathbf{\hat{P}}_{\Delta_{i,t}}$, projectors onto the eigensubspspaces corresponding to the eigenvectors $\big\{\big|i\big\rangle\big\}_{i=1}^{d_{S}}$ are used (see discussion following (\ref{eqn:decmode})). Although using projectors onto the generalized eigensubspaces of the position operator $\mathbf{\hat{X}}$ is the most natural way of generalizing (\ref{eqn:decmode}), it has  limitations. The projectors $\mathbf{\hat{P}}_{\Delta_{i,t}}$ will be in general time-dependent, and the size of the $\Delta_{i,t}$ will be restricted. We will not explore the quantum-metrological aspects of SBSCV in this work; we simply highlight the fact that the size of the magnitudes of the $\Delta_{i,t}$ may be bounded from below 
and from above by parameters depending on by quantum-metrological and other physical limitations \cite{Wiseman}.  A physical interpretation of the limiting smallness of the subspaces $\Delta_{i,t,}$ may be deduced within the context of von Neumann's theory of quantum measurement (see chapter 3 of \cite{Nielsen}). The set of $\mathbf{\hat{P}}_{\Delta_{i,t}}$ is a PVM (projector valued measure) and therefore characterizes a von Neumann measurement on the system $S$. With the latter in mind, the sizes of the $\Delta_{i,t}$ may be interpreted as resolution limits. Indeed, resolving the position of an arbitrarily small particle would require arbitrarily larger amounts of energy as the size of the particle becomes smaller. Due to the technological limitations of monitoring apparatuses, there will always be a limit to the smallness of the resolution $\Delta_{i,t}$. When introducing the approximate SBS state for continuous variables, a specific PVM acting on the system $S$ will be assumed for every $t$ prior to estimating the respective optimization problem that ensues (see (\ref{eqn:uzbek})). It is there where the partition (\ref{eqn:partitionuzbek}) will play a key role. \\

Assuming that we have an appropriate partition (\ref{eqn:partitionuzbek}), we may now mirror our work from the previous section in order to define an appropriate SBS for the CV case. We then develop tools to study the convergence of (\ref{eqn:partitionuzbek}) to such an SBS state in $t$. First, we present a definition that generalizes Definition \ref{eqn:turk4} to the CV setting. 

\begin{definition}[\textbf{SBS, a more general definition}]
\label{eqn:turkcv}
Let $\mathscr{H}_{S}\otimes\bigotimes_{k=1}^{N_{E}}\mathscr{H}_{E^{k}}$ be a tensor product of Hilbert spaces, with $\mathscr{H}_{S}$ corresponding to the system, and $\mathscr{H}_{E^{1}}, \mathscr{H}_{E^{2}},..., \mathscr{H}_{E^{N_{E}}}$---to environmental degrees of freedom. An SBS state acting in  $\mathcal{S}\big( \mathscr{H}_{S}\otimes\bigotimes_{k=1}^{N_{E}}\mathscr{H}_{E^{k}}
\big)$ is a density operator of the form
\begin{equation}
\boldsymbol{\hat{\rho}}_{SBS}:=\sum_{i}\Bigg(\mathbf{\hat{P}}_{S_{i}}\otimes\bigotimes_{k=1}^{N_{E}}\mathbb{I}_{E^{k}}\Bigg) \boldsymbol{\hat{\rho}}\Bigg(\mathbf{\hat{P}}_{S_{i}}\otimes\bigotimes_{k=1}^{N_{E}}\mathbb{I}_{E^{k}}\Bigg)
\end{equation}
satisfying the following properties (in the following, $\hat{\mathbf{P}}_{S_{i}}$ are projection operators.\\

Property 1) 

There exist projectors $\big\{\mathbf{\hat{P}}_{S_{i}}\big\}_{i}$ such that 
\begin{equation}
F\big(\boldsymbol{\hat{\rho}}_{i}^{E^{k}}, \boldsymbol{\hat{\rho}}_{j}^{E^{k}}\big) = 0 \;\; \forall \;\; i\neq j
\end{equation}
where $\big\{\mathbf{\hat{P}}_{S_{i}}\big\}$ is a PVM acting in $\mathscr{H}_{S}$ and
\begin{equation}
\boldsymbol{\hat{\rho}}_{i}^{E^{k}}:= T_{S}\Bigg\{T_{ E_{k^{'}\neq k}}\Bigg\{\Bigg(\mathbf{\hat{P}}_{S_{i}}\otimes\bigotimes_{k=1}^{N_{E}}\mathbb{I}_{E^{k}}\Bigg) \boldsymbol{\hat{\rho}}\Bigg(\mathbf{\hat{P}}_{S_{i}}\otimes\bigotimes_{k=1}^{N_{E}}\mathbb{I}_{E^{k}}\Bigg)\Bigg\}\Bigg\}
\end{equation}
($Tr_{E^{k^{\prime}\neq k}}$ means that we trace over all environments with the exception of the $k$th environment).\\

Property 2) $Tr_{S}\big\{\boldsymbol{\hat{\rho}}\big\}$ is a separable state. i.e. it is of the form 
\begin{equation}
Tr_{S}\big\{\boldsymbol{\hat{\rho}}\big\} = \sum_{i}p_{i}\bigotimes_{k=1}^{N_{E}} \boldsymbol{\hat{\rho}}^{E^{k}}_{i} \;\; \Big(\sum_{i}p_{i} = 1\Big)
\end{equation}
or
\begin{equation}
Tr_{S}\big\{\boldsymbol{\hat{\rho}}\big\} = \int p(x)\bigotimes_{k=1}^{N_{E}} \boldsymbol{\hat{\rho}}^{E^{k}}_{x} dx \;\; \Big(\int p(x) dx = 1\Big)
\end{equation}
\end{definition}

\vspace{4mm}

\;\;\;Indeed, it can be shown that a system, satisfying Definition \ref{eqn:turk4} satisfies also the properties of Definition \ref{eqn:turkcv}. Furthermore, for every $t>0$ one may deduce an approximate SBS state, in the sense of Definition \ref{eqn:turkcv} from $\boldsymbol{\hat{\rho}}_{t}$ (i.e. from equation (\ref{eqn:evo12})) as follows. Let the PVM $\big\{ \mathbf{\hat{P}}_{\Delta_{i,t}} \big\}_{i}$ and $\big\{\mathbf{\hat{P}}_{i}^{E^{1},t}   \big\}_{i}$ be PVM characterizing von Neumann measurements for the system $S$ and the environment $E^{1}$ respectively. We may use the PVM acting on $S$ for generating the partition (\ref{eqn:partitionuzbek}) as the PVM associated with the von Neumann measurement performed on $S$ at time $t$. The state
\begin{equation}
\label{eqn:SBSCVuzbek}
\frac{1}{\mathscr{N}(t)}\sum_{i}\big(\mathbf{\hat{P}}_{\Delta_{i,t}}\otimes\mathbf{\hat{P}}_{i}^{E^{1},t}\big)\boldsymbol{\hat{\rho}}_{t}\big(\mathbf{\hat{P}}_{\Delta_{i,t}}\otimes\mathbf{\hat{P}}_{i}^{E^{1},t}\big)
\end{equation}
where $\mathscr{N}(t)$ is the normalization constant equal to the trace of $\sum_{i}\big(\mathbf{\hat{P}}_{\Delta_{i,t}}\otimes\mathbf{\hat{P}}_{i}^{E^{1},t}\big)\boldsymbol{\hat{\rho}}_{t}\big(\mathbf{\hat{P}}_{\Delta_{i,t}}\otimes\mathbf{\hat{P}}_{i}^{E^{1},t}\big)$, is an SBS state which approximates $\boldsymbol{\hat{\rho}}_{t}$ at time $t$.\\ 

To get the SBS state, constructed via the algorithm described in the previous paragraph, closest (in the trace distance sense) to $\boldsymbol{\hat{\rho}}_{t}$ for a fixed $t>0$ we must solve the optimization problem 
\begin{equation}
\label{eqn:uzbek}
\min_{PVM}\frac{1}{2}\bigg\|\boldsymbol{\hat{\rho}}_{t} -  \frac{1}{\mathscr{N}(t)}\sum_{i}\big(\mathbf{\hat{P}}_{\Delta_{i,t}}\otimes\mathbf{\hat{P}}_{i}^{E^{1},t}\big)\boldsymbol{\hat{\rho}}_{t}\big(\mathbf{\hat{P}}_{\Delta_{i,t}}\otimes\mathbf{\hat{P}}_{i}^{E^{1},t}\big) \bigg\|_{1}
\end{equation}
where the minimum is taken over all PVM acting on the environmental degree of freedom. In general, it will not be possible to solve the problem exactly;  rather, we will estimate the minimum (\ref{eqn:uzbek}) by a more managable expression. To this end, we bound (\ref{eqn:uzbek}) as follows. \\
\begin{equation}
\label{eqn:uzbek5}
\min_{PVM}\frac{1}{2}\bigg\|\boldsymbol{\hat{\rho}}_{t} -  \frac{1}{\mathscr{N}(t)}\sum_{i}\big(\mathbf{\hat{P}}_{\Delta_{i,t}}\otimes\mathbf{\hat{P}}_{i}^{E^{1},t}\big)\boldsymbol{\hat{\rho}}_{t}\big(\mathbf{\hat{P}}_{\Delta_{i,t}}\otimes\mathbf{\hat{P}}_{i}^{E^{1},t}\big) \bigg\|_{1} =
\end{equation}
\begin{equation}
\label{eqn:uzbek6}
\min_{PVM}\frac{1}{2}\bigg\|\sum_{i}\sum_{j}\mathbf{\hat{P}}_{\Delta_{i,t}}\boldsymbol{\hat{\rho}}_{t}\mathbf{\hat{P}}_{\Delta_{j,t}} -  \frac{1}{\mathscr{N}(t)}\sum_{i}\big(\mathbf{\hat{P}}_{\Delta_{i,t}}\otimes\mathbf{\hat{P}}_{i}^{E^{1},t}\big)\boldsymbol{\hat{\rho}}_{t}\big(\mathbf{\hat{P}}_{\Delta_{i,t}}\otimes\mathbf{\hat{P}}_{i}^{E^{1},t}\big) \bigg\|_{1}
\end{equation}
which in turn may be bounded by a sum of two terms as follows.  
\begin{equation}
\label{eqn:uzbek7}
\min_{PVM}\frac{1}{2}\bigg\|\sum_{i}\sum_{j}\mathbf{\hat{P}}_{\Delta_{i,t}}\boldsymbol{\hat{\rho}}_{t}\mathbf{\hat{P}}_{\Delta_{j,t}} -  \frac{1}{\mathscr{N}(t)}\sum_{i}\big(\mathbf{\hat{P}}_{\Delta_{i,t}}\otimes\mathbf{\hat{P}}_{i}^{E^{1},t}\big)\boldsymbol{\hat{\rho}}_{t}\big(\mathbf{\hat{P}}_{\Delta_{i,t}}\otimes\mathbf{\hat{P}}_{i}^{E^{1},t}\big) \bigg\|_{1}\leq
\end{equation}

\begin{equation}
\label{eqn:uzbek8}
\min_{PVM}\Bigg(\frac{1}{2}\bigg\|\sum_{i}\mathbf{\hat{P}}_{\Delta_{i,t}}\boldsymbol{\hat{\rho}}_{t}\mathbf{\hat{P}}_{\Delta_{i,t}} -  \frac{1}{\mathscr{N}(t)}\sum_{i}\big(\mathbf{\hat{P}}_{\Delta_{i,t}}\otimes\mathbf{\hat{P}}_{i}^{E^{1},t}\big)\boldsymbol{\hat{\rho}}_{t}\big(\mathbf{\hat{P}}_{\Delta_{i,t}}\otimes\mathbf{\hat{P}}_{i}^{E^{1},t}\big) \bigg\|_{1}\Bigg)+
\end{equation}
\begin{equation}
\label{eqn:uzbek9}
\frac{1}{2}\bigg\|\sum_{i}\sum_{j;j\neq i}\mathbf{\hat{P}}_{\Delta_{i,t}}\boldsymbol{\hat{\rho}}_{t}\mathbf{\hat{P}}_{\Delta_{j,t}} \bigg\|_{1} 
\end{equation}
In what follows, we estimate the terms (\ref{eqn:uzbek8}) and (\ref{eqn:uzbek9}) separately, dedicating a separate section to each one of them. We will refer to the term (\ref{eqn:uzbek8}) as the \emph{diagonal term} and to the term (\ref{eqn:uzbek9}) as the \emph{off-diagonal term} (or the coherence term). We will first briefly comment on the main mathematical difficulty arising in SBSCV theory:  motivating the partition (\ref{eqn:partitionuzbek}). We will then present bounds, useful in studying the diagonal and off-diagonal terms (\ref{eqn:uzbek8}) and (\ref{eqn:uzbek9}) respectively.

\section{Problem with definition \ref{eqn:turk4} when Introducing Continuous Variables}
\;\;\; In this section, we shed light on our reasoning behind the new definition for SBS presented as Definition \ref{eqn:turkcv}.\\ 

To appreciate the difficulties, caused by the presence of continuous spectrum, let us examine the state (\ref{eqn:evo12}).  The system's state is now a density operator $\boldsymbol{\hat{\rho}}_{S_{0}}$ in an infinite-dimensional Hilbert space;  for our purposes, it will be convenient to take this space to be $L^{2}(\mathbb{R})$. For simplicity, analogously to (\ref{eqn:intham2}), we define the interaction of the system with the environment as
\begin{equation}
H_{int}= \gamma\mathbf{\hat{X}}\otimes\mathbf{\hat{B}}
\end{equation}
where $\mathbf{\hat{X}}$ is the position operator. Being a trace-class operator, $\boldsymbol{\hat{\rho}}_{S_{0}}$ can be represented as an integral operator with a kernel $K(x,y)$. 
The expansion analogous to (\ref{eqn:decmode}) is the following:
\begin{equation}
\label{eqn:stateinquestion39}
\boldsymbol{\hat{\rho}}_{t}= \int\int K(x,y)\gamma^{2}_{x,y}(t)|x\rangle\langle y|\otimes \boldsymbol{\hat{\rho}}^{E^{1},t}_{x,y}   dxdy
\end{equation}
where as before $\boldsymbol{\hat{\rho}}_{x,y}^{E^{1},t}:= e^{-ix\gamma\mathbf{\hat{B}}t}\boldsymbol{\hat{\rho}}^{E^{1},0}e^{iy\gamma\mathbf{\hat{B}}t}$ and $\gamma^{2}_{x,y}(t):= Tr\big\{ \boldsymbol{\hat{\rho}}_{x,y}^{E^{2},t} \big\}$. 
Unlike the state (\ref{eqn:decmode}), the state (\ref{eqn:stateinquestion39}) does not have a clear decomposition into diagonal and off-diagonal terms using the spectral decomposition of the operator $\boldsymbol{\hat{X}}$ in terms of generalized eigenvectors $\big|x\big\rangle$, which we have employed to expand $\mathscr{U}_{t}\big(\mathscr{E}_{t}\big(\boldsymbol{\hat{\rho}}_{S_{0}}\big)\otimes\boldsymbol{\hat{\rho}}^{E,0}\big) = \big(e^{-it\gamma\mathbf{\hat{X}}\otimes\mathbf{\hat{B}}}\big)\big(\mathscr{E}_{t}\big(\boldsymbol{\hat{\rho}}_{S_{0}}\big)\otimes\boldsymbol{\hat{\rho}}^{E,0}\big)\big(e^{-it\gamma\mathbf{\hat{X}}\otimes\mathbf{\hat{B}}}\big)$; herein we have dropped the subscripts and superscripts indicating that we have traced over one environmental degree of freedom, per the notional conventions prescribed in equations (\ref{eqn:coca4}) through (\ref{eqn:scrmyo}). We shall forgo usage of such superscripts and subscripts from now on, unless the values of $M_{E}$ and $N_{E}$ are relevant.\\

In the finite-dimensional case, we could clearly distinguish between diagonal and off-diagonal entries in order to deduce an SBS structure approximating the state in question (see section 4 of \cite{AA}). In the continuous variable case, taking the naive approach, this method breaks down since the diagonal term is now
\begin{equation}
\label{eqn:stateinquestion40}
\boldsymbol{\hat{\rho}}_{t}= \int  K(x,x)|x\rangle\langle x|\otimes \boldsymbol{\hat{\rho}}^{E,t}_{x}dx
\end{equation}
which is not a trace class operator, since it is unitarily equivalent to a tensor product of a multiplication operator and a trace class operator---thus it cannot represent a quantum state. Being able to separate between diagonal and off-diagonal terms in \cite{AA} was a key step in the estimations that led to Theorem 4 of \cite{AA}, a bound which estimates the analog or (\ref{eqn:uzbek}) for the case of discrete variables.  To proceed similarly for the CV case we must partition the state (\ref{eqn:stateinquestion39})by applying the partition (\ref{eqn:partitionuzbek}). \\

Another difficulty arising in the continuous variable case is an increase in complexity when dealing with trace norms; starting from the fact that $\big\|\big|x\big\rangle \big\langle y\big|\big\|_{1}$ is undefined for generalized states $\big|x\big\rangle$ and $\big|y\big\rangle$.

\section{Partitioning (\ref{eqn:evo12})}
\;\;\; To formally introduce our approach for the study of SBS in the CV case we will first discuss the phenomenon of decoherence in the context of the model (\ref{eqn:evo12}). In this case, decoherence arises from the quantum map $\mathscr{E}_{t}$ in:
\begin{equation}
\label{eqn:bazinga}
\boldsymbol{\hat{\rho}}_{t}  = \big(e^{-it\gamma\mathbf{\hat{X}}\otimes \mathbf{\hat{B}}}\big)\big(\mathscr{E}_{t}\big(\boldsymbol{\hat{\rho}}_{S_{0}}\big)\otimes \boldsymbol{\hat{\rho}}^{E,0}\big)\big(e^{it\gamma \mathbf{\hat{X}}\otimes \mathbf{\hat{B}}}\big)
\end{equation}
%

As we have done in (\ref{eqn:stateinquestion39}), we use the representation 
\begin{equation}
\label{eqn:represco}
\boldsymbol{\hat{\rho}}_{S_{0}} = \int\int K(x,y)|x\rangle \langle  y|dxdy
\end{equation}
using the generalized eigenvectors of the position operator $\mathbf{\hat{X}}$. Using representation (\ref{eqn:represco}), and refering back to (\ref{eqn:thequantummap}), the effect of $\mathscr{E}_{t}$ on $\boldsymbol{\hat{\rho}}_{S_{0}}$ is hence 
\begin{equation}
\label{eqn:contractivemap}
\mathscr{E}_{t}\big(\boldsymbol{\hat{\rho}}_{S_{0}}\big)  = \int\int K(x,y)\Gamma(t,x,y)\big|x\big\rangle \big\langle  y\big|dxdy
\end{equation}
where $\Gamma(t,x,y)$ is a kernel yielding non-unitary dynamics obtained via partial tracing as seen in (\ref{eqn:scrmyo}). Substituting this into (\ref{eqn:bazinga}) we obtain 
\begin{equation}
\label{eqn:contractexpand}
\boldsymbol{\hat{\rho}}_{t}= \int\int K(x,y)\Gamma(t,x,y)\big|x\big\rangle\big\langle y\big|\otimes \boldsymbol{\hat{\rho}}^{E,t}_{x,y}dxdy 
\end{equation}
where we remind the reader that $\boldsymbol{\hat{\rho}}^{E,t}_{x,y}:= e^{-it\gamma x\mathbf{\hat{B}}}\boldsymbol{\hat{\rho}}^{E,0} e^{it\gamma y\mathbf{\hat{B}}}$.\\


For fixed $t>0$, we adopt a partition characterized by a PVM $\mathbf{\hat{P}}_{\Delta_{i,t}}:=\chi_{\Delta_{i,t}}\big(\mathbf{\hat{X}}\big)$ acting on the degree of freedom pertaining to the system, as was done in (\ref{eqn:partitionuzbek}), in order to express (\ref{eqn:contractexpand}) as follows. 
\begin{equation}
\label{eqn:jarekcom}
\boldsymbol{\hat{\rho}}_{t}= \sum_{i}\sum_{j}\mathbf{\hat{P}}_{\Delta_{i,t}}\Big( \int\int K(x,y)\Gamma(t,x,y)\big|x\big\rangle\big\langle y\big|\otimes \boldsymbol{\hat{\rho}}^{E,t}_{x,y}dxdy   \Big)\mathbf{\hat{P}}_{\Delta_{j,t}} =
\end{equation}
\begin{equation}
\sum_{i}\sum_{j}\Big( \int\int K(x,y)\Gamma(t,x,y)\mathbf{\hat{P}}_{\Delta_{i,t}}\big|x\big\rangle\big\langle y\big|\mathbf{\hat{P}}_{\Delta_{j,t}} \otimes \boldsymbol{\hat{\rho}}^{E,t}_{x,y}dxdy\Big) =
\end{equation}
\begin{equation}
\label{eqn:contractiveelegant}
\sum_{i}\sum_{j}\int_{\Delta_{i,t}}\int_{\Delta_{j,t}} K(x,y)\Gamma(t,x,y)|x\rangle\langle y|\otimes \boldsymbol{\hat{\rho}}^{E,t}_{x,y} dxdy  
\end{equation}
Once again, we call the elements of the sum (\ref{eqn:contractiveelegant}) for which $i\neq j$ the "off-diagonal terms" and the terms for which $i=j$ "the diagonal terms".

\section{Estimating the "Off-diagonal" Terms (\ref{eqn:uzbek9})}
\;\;\; We now use the (\ref{eqn:contractiveelegant}) as a candidate to fulfil Definition \ref{eqn:turkcv} and proceed by estimating (\ref{eqn:uzbek9}). 
\begin{equation}
\label{eqn:the52}
  \bigg\|\sum_{i}\sum_{j;j\neq i}\int_{\Delta_{i,t}}\int_{\Delta_{j,t}}K(x,y)\Gamma(t,x,y)\big|x\big\rangle\big\langle y\big|\otimes \boldsymbol{\hat{\rho}}^{E,t}_{x,y}dxdy\bigg\|_{1} =    
\end{equation}
\begin{equation}
   \bigg\|e^{-it\gamma\mathbf{\hat{X}}\otimes \mathbf{\hat{B}}}\Bigg(\bigg(\sum_{i}\sum_{j;j\neq i}\int_{\Delta_{i,t}}\int_{\Delta_{j,t}}K(x,y)\Gamma(t,x,y)\big|x\big\rangle\big\langle y\big|dxdy \bigg)\otimes \boldsymbol{\hat{\rho}}^{E,0}\Bigg)e^{it\gamma \mathbf{\hat{X}}\otimes \mathbf{\hat{B}}}\bigg\|_{1}  =  
\end{equation}
\begin{equation}
   \bigg\|\Bigg(\sum_{i}\sum_{j;j\neq i}\int_{\Delta_{i,t}}\int_{\Delta_{j,t}}K(x,y)\Gamma(t,x,y)\big|x\big\rangle\big\langle y\big|dxdy \Bigg)\otimes \boldsymbol{\hat{\rho}}^{E,0}\bigg\|_{1}  = 
\end{equation}
\begin{equation}
   \bigg\|\sum_{i}\sum_{j;j\neq i}\int_{\Delta_{i,t}}\int_{\Delta_{j,t}} K(x,y)\Gamma(t,x,y)\big|x\big\rangle\big\langle y\big|dxdy \bigg\|_{1}\leq 
\end{equation}
\begin{equation}
   \sum_{i}\sum_{j;j\neq i}\bigg\|\int_{\Delta_{i,t}}\int_{\Delta_{j,t}}  K(x,y)\Gamma(t,x,y)\big|x\big\rangle\big\langle y\big|dxdy\bigg\|_{1} = 
\end{equation}
\begin{equation}
\label{eqn:kupschbound}
 \sum_{i}\sum_{j;j\neq i}\bigg\|\mathbf{\hat{P}}_{\Delta_{i,t}}\mathscr{E}_{t}\big(\boldsymbol{\hat{\rho}}_{S_{0}}\big)\mathbf{\hat{P}}_{\Delta_{j,t}}\bigg\|_{1} 
\end{equation}
where again $\mathbf{\hat{P}}_{\Delta_{i,t}}:= \chi_{\Delta_{i,t}}\big(\mathbf{\hat{X}}\big) = \int_{\Delta_{i,t}}\big|x\big\rangle\big\langle x\big| dx$, i.e. the spectral projector of $\mathbf{\hat{X}}$ projecting onto the subspace corresponding to the set $\Delta_{i,t}$. The trace norms $\big\|\mathbf{\hat{P}}_{\Delta_{i}}\mathscr{E}_{t}\big(\boldsymbol{\hat{\rho}}_{S_{0}}\big)\mathbf{\hat{P}}_{\Delta_{j}}\big\|_{1} $ are in general quite difficult to estimate. We present below two approaches; one is an adaptation of the work in \cite{kupsch} and the other is an application of the main theorem of \cite{stolz}. 
\subsection{Bounds of the Kupsch Kind \cite{kupsch}}
\label{eqn:sec4.1}
\;\;\; One approach to estimating the trace norms in the inequality (\ref{eqn:kupschbound2}) below invokes some ideas from Kupsch's seminal paper on decoherence \cite{kupsch}, where it is proven that  
\begin{equation}
\label{eqn:kupschbound2}
\|P_{\Delta_{i}}\mathscr{E}_{t}\big(\boldsymbol{\hat{\rho}}_{S_{0}}\big)P_{\Delta_{j}}\|\leq C(1+\delta^{2}\psi(t))^{-\gamma}
\end{equation}
for intervals $\Delta_{j}$ and $\Delta_{i}$ separated by a distance $\delta >0$.
Here, $\psi(t)\geq 0$ is a function that diverges for $t\rightarrow \infty$, $\gamma$ an exponent which can be large, and  $C$ is some constant.  We will use a modification of (\ref{eqn:kupschbound2}), not its exact form. 
Again, we will focus on the case where $\mathbf{\hat{X}}$ is the position operator. \\

\begin{definition4}[\textbf{Adapting Kupsch's Bounds from} \cite{kupsch} \textbf{Appendix A}]
\label{eqn:theoremkupsch}
Let us fix $t>0$ and let $\boldsymbol{\hat{\rho}}_{t}$ be a density operator which may be represented, using the generalized eigenvectors of the position operator $\mathbf{\hat{X}}$, as
\begin{equation}
\boldsymbol{\hat{\rho}}_{t} = \int\int  \Gamma(t,x,y)K(x,y)\big|x\big\rangle \big\langle y\big|dxdy 
\end{equation}
where $\Gamma(t,x,y)\in \mathcal{C}^{1}\big(\mathbb{R}^{3}\big)$ and the $\mathbf{\hat{P}}_{\Delta_{i,t}}$ are defined as they were in (\ref{eqn:partitionuzbek}); $K(x,y)$ is the kernel of a state $\boldsymbol{\hat{\rho}}_{0}$. Then,

\begin{equation}
\label{eqn:kupschkupsch}
\|\mathbf{\hat{P}}_{\Delta_{i,t}}\boldsymbol{\hat{\rho}}_{t}\mathbf{\hat{P}}_{\Delta_{j,t}}\|_{1}\leq \sup_{(x,y)\in \Delta_{i,t}\times \Delta_{j,t}}\bigg(2|\Gamma(t,x,y)|+|\Delta_{j,t}||\partial_{y}\Gamma(t,x,y)|\bigg)
\end{equation}
when
\begin{equation}
\Big|\Delta_{i,t}\times\Delta_{j,t}\cap \mathbf{supp}\big\{\Gamma(t,x,y)K(x,y)\big\}\Big|\neq 0
\end{equation}
otherwise 
\begin{equation}
\|\mathbf{\hat{P}}_{\Delta_{i,t}}\boldsymbol{\hat{\rho}}_{t}\mathbf{\hat{P}}_{\Delta_{j,t}}\|_{1} = 0
\end{equation}
\end{definition4}

\begin{proof}

\vspace{4mm}

CASE 1) \\

If $\Delta_{i,t}\times\Delta_{j,t}$ is such that 
\begin{equation}
\Big|\Delta_{i,t}\times\Delta_{j,t}\cap \mathbf{supp}\big\{\Gamma(t,x,y)K(x,y)\big\}\Big| =  0
\end{equation}
then
\begin{equation}
\bigg\|\mathbf{\hat{P}}_{\Delta_{i,t}}\boldsymbol{\hat{\rho}}_{t}\mathbf{\hat{P}}_{\Delta_{j,t}}\bigg\|_{1} = \bigg\|\int_{\Delta_{i,t}}\int_{\Delta_{j,t}}\Gamma(t,x,y)K(x,y)\big|x\big\rangle\big\langle y\big| dxdy\bigg\|_{1} = 
\end{equation}
\begin{equation}
\bigg\|\int_{\Delta_{i,t}}\int_{\Delta_{j,t}}0\big|x\big\rangle\big\langle y\big| dxdy\bigg\|_{1} = 0
\end{equation}

\vspace{4mm}

CASE 2)

Now we assume that 
\begin{equation}
\Big|\Delta_{i,t}\times\Delta_{j,t}\cap \mathbf{supp} \big\{\Gamma(t,x,y)K(x,y)\big\}\Big| \neq 0
\end{equation}
Let us begin by considering the operator 
 \begin{equation}
\mathbf{\hat{T}}_{t}(y):= \int_{\Delta_{i,t}} \Gamma(t,x,y)\big|x\big\rangle\big\langle x\big|dx
 \end{equation}
where $i$ is fixed. $\mathbf{\hat{T}}_{t}(y)$ is a family of operators, differentiable with respect to $y$,  satisfying the operator norm estimate
\begin{equation} 
 \big\|\mathbf{\hat{T}}_{t}(y)\big\|\leq \sup_{x\in\Delta_{i,t}}|\Gamma(t,x,y)|
 \end{equation}
as shown by the following estimate.
 \begin{equation}
\big\|\mathbf{\hat{T}}_{t}(y)\big\|^{2} = \sup_{\||\psi\rangle\|=1}\big\|\mathbf{\hat{T}}_{t}(y)\big|\psi\big\rangle\big\|^{2}=
\end{equation}
\begin{equation}
\sup_{\||\psi\rangle\|=1}\int_{\Delta_{i,t}}\int_{\Delta_{i,t}} \Gamma(t,x^\prime,y)^{*}\Gamma(t,x,y)\big\langle \psi\big|x^\prime\big\rangle\big\langle x^\prime\big|x\rangle\big\langle x\big|\psi\big\rangle dx^\prime dx=
\end{equation}
\begin{equation}
\sup_{\||\psi\rangle\|=1}\int_{\Delta_{i,t}} |\Gamma(t,x,y)|^{2}\big\langle \psi\big|x\big\rangle\big\langle x\big|\psi\big\rangle dx\leq 
\end{equation}
\begin{equation}
\sup_{x\in \Delta_{i,t}}|\Gamma(t,x,y)|^{2}\sup_{\||\psi\rangle\|=1}\int_{\Delta_{i,t}}|\psi(x)|^{2}dx \leq \sup_{x\in \Delta_{i,t}}|\Gamma(t,x,y)|^{2}
\end{equation}
\vspace{5mm}
In a similar way, we may bound the operator norm of $\mathbf{\hat{T}}^{\prime}_{t}(y):=\int_{\Delta_{i,t}}\Gamma^{'}(t,x,y)\big|x\big\rangle\big\langle x\big|dx$
where $\Gamma^{'}(t,x,y):=\partial_{y}\Gamma(t,x,y)$. i.e. 
\begin{equation}
\big\|\mathbf{\hat{T}}^{\prime}_{t}(y)\big\|\leq \sup_{x\in \Delta_{i,t}}|\Gamma^{'}(t,x,y)|
\end{equation}
Furthermore, define $\mathbf{\hat{J}}_{t}(y):= \mathbf{\hat{T}}_{t}(y)\boldsymbol{\hat{\rho}}_{0}$ and $\mathbf{\hat{J}}_{t}^{'}(y):=\mathbf{\hat{T}}_{t}^{'}(y)\boldsymbol{\hat{\rho}}_{0} $. Using the above operator norm estimates and the inequality $\|\mathbf{\hat{A}}\mathbf{\hat{C}}\|_{1}\leq \|\mathbf{\hat{A}}\|\|\mathbf{\hat{C}}\|_{1}$ we obtain
\begin{equation}
\big\|\mathbf{\hat{J}}_{t}(y)\big\|_{1}\leq \sup_{x\in \Delta_{i,t}}|\Gamma(t,x,y)|\big\|\boldsymbol{\hat{\rho}}_{0}\big\|_{1} = \sup_{x\in \Delta_{i,t}}|\Gamma(t,x,y)|
\end{equation}
 and 
\begin{equation}
\big\|\mathbf{\hat{J}}_{t}^{'}(y)\big\|_{1}\leq \sup_{x\in \Delta_{i,t}}|\Gamma^{'}(t,x,y)|\big\| \boldsymbol{\hat{\rho}}_{0} \big\|_{1}=  \sup_{x\in \Delta_{i,t}}|\Gamma^{'}(t,x,y)|.
\end{equation}
We now clarify the relationship between the operator $\mathbf{\hat{T}}_{t}^{\prime}(y)$ and the derivative $\partial_{y}\big\langle \psi\big|\mathbf{\hat{T}}_{t}(y)\big|\phi\big\rangle$.
\begin{equation}
\partial_{y}\big\langle \psi\big|\mathbf{\hat{T}}_{t}(y)\big|\phi\big\rangle = \partial_{y}\int_{\Delta_{i,t}}\Gamma(t,x,y)\big\langle \psi\big|x\big\rangle\big\langle x\big|\phi\big\rangle  dx
\end{equation} 
Assuming that $\Gamma(t,x,y)$ is $C^{1}(\Delta_{i,t})$ with respect to $y$ we may now swap the order of the integral and the derivative. 
\begin{equation}
\partial_{y}\int_{\Delta_{i,t}}\Gamma(t,x,y)\big\langle \psi\big|x\big\rangle\big\langle x\big|\phi\big\rangle dx = \int_{\Delta_{i,t}}\partial_{y}\Gamma(t,x,y)\big\langle \psi\big|x\big\rangle\big\langle x\big|\phi\big\rangle dx = 
\end{equation}
\begin{equation}
\int_{\Delta_{i,t}}\Gamma^{'}(t,x,y)\big\langle \psi\big|x\big\rangle\big\langle x\big|\phi\big\rangle  dx  = \big\langle \psi\big|\Big(\int_{\Delta_{i,t}}\Gamma^{'}(t,x,y)\big|x\big\rangle\big\langle x\big|dx\Big)\big|\phi\big\rangle = \big\langle \psi\big|\boldsymbol{\hat{T}}_{t}^{\prime}(y)\big|\phi\big\rangle
\end{equation}
We therefore have 
\begin{equation}
 \partial_{y}\big\langle \psi\big|\mathbf{\hat{T}}_{t}(y)\big|\phi\big\rangle =   \big\langle \psi\big|\mathbf{\hat{T}}_{t}^{\prime}(y)\big|\phi \big\rangle
\end{equation}
Now, for all intervals $\Delta_{j,t}$, we have $\int_{\Delta_{j,t}}\mathbf{\hat{J}}_{t}(y)\big|y\big\rangle\big\langle y\big|dy = \mathbf{\hat{P}}_{\Delta_{i,t}}\boldsymbol{\hat{\rho}}_{t}\mathbf{\hat{P}}_{\Delta_{j,t}}$. Let us define $\Delta_{j,t}:= [a_{j}(t),b_{j}(t)]$. We will show that the following identity holds. 
\begin{equation}
\label{eqn:IBPFS}
\int_{\Delta_{j,t}}\mathbf{\hat{J}}_{t}(y)\big|y\big\rangle\big\langle y\big|dy= \mathbf{\hat{J}}_{t}(b_{j}(t))\mathbf{\hat{P}}_{(-\infty,b_{j}(t)]}-\mathbf{\hat{J}}_{t}(a_{j}(t))\mathbf{\hat{P}}_{(-\infty,a_{j}(t)]} - \int_{\Delta_{j,t}}\mathbf{\hat{J}}_{t}^\prime(y)\mathbf{\hat{P}}_{(-\infty,y]}dy.
\end{equation}
The state $\boldsymbol{\hat{\rho}}_{0}$ may be generally written as a mixture $\boldsymbol{\hat{\rho}}_{0}= \sum_{n}p_{n}\big|\xi_{0}^{n}\big\rangle\big\langle\xi_{0}^{n}\big|$. Now, for arbitrary $\big|\psi\big\rangle$ and $\big|\phi\big\rangle$  
\begin{equation}
\label{eqn:abouttodointbypart}
\big\langle\psi\big|\int_{\Delta_{j,t}}\mathbf{\hat{J}}_{t}(y)\big|y\big\rangle\big\langle y\big|dy\big|\phi\big\rangle = \int_{\Delta_{j,t}}\big\langle \psi\big|\mathbf{\hat{J}}_{t}(y)\big|y\big\rangle\big\langle y\big|\phi\big\rangle dy.
\end{equation}
By the definition of $\mathbf{\hat{J}}_{t}(y)$ one has 
\begin{equation}
  \big\langle\psi\big|\mathbf{\hat{J}}_{t}(y)\big|y\big\rangle=\big\langle\psi\big|\mathbf{\hat{T}}_{t}(y)\boldsymbol{\hat{\rho}}_{0}\big|y\big\rangle = \big\langle\psi\big|\mathbf{\hat{T}}_{t}(y)\Big(\sum_{n}p_{n}\big|\xi_{0}^{n}\big\rangle\big\langle\xi_{0}^{n}\big|\Big)\big|y\big\rangle.
\end{equation}
The right-hand side of $(\ref{eqn:abouttodointbypart})$ equals
\begin{equation}
\sum_{n}p_{n}\int_{\Delta_{j,t}}\big\langle\psi\big|\mathbf{\hat{T}}_{t}(y)\big|\xi^{n}_{0}\big\rangle\big\langle \xi^{n}_{0}\big|y\big\rangle\big\langle y\big|\phi\big\rangle dy= 
\end{equation}

\begin{equation}
 \sum_{n}p_{n}\Bigg[\big\langle\psi|\mathbf{\hat{T}}_{t}(y)\big|\xi^{n}_{0}\big\rangle\big\langle\xi^{n}_{0}\big|\bigg(\int_{-\infty}^{y}dy^\prime\big|y^\prime\big\rangle\big\langle y^\prime\big|\bigg)\big|\phi\big\rangle \Bigg]\Bigg|_{a_{j}(t)}^{b_{j}(t)}- \sum_{n}p_{n}\int_{\Delta_{j,t}}\Big(\int_{-\infty}^{y}\big\langle \xi^{n}_{0}\big|y^\prime\big\rangle \big\langle y^\prime\big|\phi\big\rangle dy^\prime \Big) d\bigg(\big\langle\psi\big|\mathbf{\hat{T}}_{t}(y)\big|\xi^{n}_{0}\big\rangle\bigg) =  
\end{equation}
\begin{equation}
\sum_{n}p_{n}\Bigg[\big\langle\psi\big|\mathbf{\hat{T}}_{t}(y)\big|\xi^{n}_{0}\big\rangle\big\langle\xi^{n}_{0}\big|P_{(-\infty,y]}\big|\phi\big\rangle \Bigg]\Bigg|_{a_{j}(t)}^{b_{j}(t)}- \sum_{n}p_{n}\int_{\Delta_{j,t}}\Big(\int_{-\infty}^{y}\big\langle \xi^{n}_{0}\big|y^\prime\big\rangle \big\langle y^\prime\big|\phi\big\rangle dy^\prime \Big) \bigg(\big\langle\psi\big|\mathbf{\hat{T}}_{t}(y)\big|\xi^{n}_{0}\big\rangle\bigg)^\prime dy =     
\end{equation}
\begin{equation}
\sum_{n}p_{n}\Bigg[\big\langle\psi\big|\mathbf{\hat{T}}_{t}(y)\big|\xi^{n}_{0}\big\rangle\big\langle\xi^{n}_{0}\big|\mathbf{\hat{P}}_{(-\infty,y]}\big|\phi\big\rangle \Bigg]\Bigg|_{a_{j}(t)}^{b_{j}(t)}- \sum_{n}p_{n}\int_{\Delta_{j,t}}\bigg(\big\langle\psi\big|\mathbf{\hat{T}}_{t}(y)\big|\xi^{n}_{0}\big\rangle\bigg)^\prime\big\langle \xi^{n}_{0}\big|\mathbf{\hat{P}}_{(-\infty,y]}\big|\phi\big\rangle dy =  
\end{equation}
\begin{equation}
\sum_{n}p_{n}\big\langle\psi\big|\Bigg[\mathbf{\hat{T}}_{t}(y)\big|\xi^{n}_{0}\big\rangle\big\langle\xi^{n}_{0}\big|\mathbf{\hat{P}}_{(-\infty,y]}\Bigg]\Bigg|_{a_{j}(t)}^{b_{j}(t)}\big|\phi\big\rangle - \sum_{n}p_{n}\big\langle\psi\big|\Bigg(\int_{\Delta_{j,t}}\mathbf{\hat{T}}^{\prime}_{t}(y)\big|\xi^{n}_{0}\big\rangle\big\langle \xi^{n}_{0}\big|\mathbf{\hat{P}}_{(-\infty,y]} dy \Bigg)\big|\phi\big\rangle =    
\end{equation}

\begin{equation}
\big\langle\psi\big|\Bigg[\mathbf{\hat{T}}_{t}(y)\boldsymbol{\hat{\rho}}_{0}\mathbf{\hat{P}}_{(-\infty,y]}\Bigg]\Bigg|_{a_{j}(t)}^{b_{j}(t)}\big|\phi\big\rangle - \big\langle\psi\big|\Bigg(\int_{\Delta_{j,t}}\mathbf{\hat{T}}^{\prime}_{t}(y)\boldsymbol{\hat{\rho}}_{0}\mathbf{\hat{P}}_{(-\infty,y]} dy \Bigg)\big|\phi\big\rangle =    
\end{equation}

\begin{equation}
\big\langle\psi\big|\Bigg(\mathbf{\hat{J}}_{t}(y)\mathbf{\hat{P}}_{(-\infty,y]}\Bigg|_{a_{j}(t)}^{b_{j}(t)}- \int_{\Delta_{j,t}}\mathbf{\hat{J}}^{\prime}_{t}(y) \mathbf{\hat{P}}_{(-\infty,y]} dy\bigg)\big|\phi\big\rangle  =  
\end{equation}
\begin{equation}
\big\langle\psi\big|\Bigg(\mathbf{\hat{J}}_{t}(b_{j}(t))\mathbf{\hat{P}}_{(-\infty,b_{j}(t)]}-\hat{J}_{t}(a_{j}(t))\mathbf{\hat{P}}_{(-\infty,a_{j}(t)]}- \int_{\Delta_{j,t}}\mathbf{\hat{J}}_{t}^{\prime}(y) \mathbf{\hat{P}}_{(-\infty,y]} dy\Bigg)\big|\phi\big\rangle   
\end{equation}
and so 
\begin{equation}
\int_{\Delta_{j,t}}\mathbf{\hat{J}}_{t}(y)\big|y\big\rangle\big\langle y\big|dy= \mathbf{\hat{J}}_{t}(b_{j}(t))\mathbf{\hat{P}}_{(-\infty,b_{j}(t)]}-\mathbf{\hat{J}}_{t}(a_{j}(t))\mathbf{\hat{P}}_{(-\infty,a_{j}(t)]} - \int_{\Delta_{j,t}}\mathbf{\hat{J}}_{t}^\prime(y)\mathbf{\hat{P}}_{(-\infty,y]}dy.
\end{equation}

Consequently
\begin{equation}
\big\|\mathbf{\hat{P}}_{\Delta_{i.t}}\boldsymbol{\hat{\rho}}_{t}\mathbf{\hat{P}}_{\Delta_{j,t}}\big\|_{1} =\bigg\|\int_{\Delta_{j,t}}\mathbf{\hat{J}}_{t}(y)|y\rangle\langle y|dy\bigg\|_{1}=  
\end{equation}
\begin{equation}
\bigg\|\mathbf{\hat{J}}_{t}(b_{j}(t))\mathbf{\hat{P}}_{(-\infty,b_{j}(t)]}-\mathbf{\hat{J}}_{t}(a_{j}(t))\mathbf{\hat{P}}_{(-\infty,a_{j}(t)]}- \int_{\Delta_{j,t}}\mathbf{\hat{J}}_{t}^{\prime}(y) \mathbf{\hat{P}}_{(-\infty,y]} dy\bigg\|_{1}\leq
\end{equation}
\begin{equation}
\big\|\mathbf{\hat{J}}_{t}(b_{j}(t))\mathbf{\hat{P}}_{(-\infty,b_{j}(t)]}\big\|_{1}+\big\|\mathbf{\hat{J}}_{t}(a_{j}(t))\mathbf{\hat{P}}_{(-\infty,a_{j}(t)]}\big\|_{1}+\bigg\| \int_{\Delta_{j,t}}\mathbf{\hat{J}}_{t}^{\prime}(y) \mathbf{\hat{P}}_{(-\infty,y]} dy\bigg\|_{1}\leq
\end{equation}
\begin{equation}
\big\|\mathbf{\hat{J}}_{t}(b_{j}(t))\big\|_{1}\|\mathbf{\hat{P}}_{(-\infty,b_{j}(t)]}\big\|+\big\|\mathbf{\hat{J}}_{t}(a_{j}(t))\big\|_{1}\big\|\mathbf{\hat{P}}_{(-\infty,a_{j}(t)]}
\big\|+\int_{\Delta_{j,t}}\big\|\mathbf{\hat{J}}_{t}(y)^\prime \mathbf{\hat{P}}_{(-\infty,y]}\big\|_{1} dy\leq
\end{equation}
\begin{equation}
\big\|\mathbf{\hat{J}}_{t}(b_{j}(t))\big\|_{1}+\big\|\mathbf{\hat{J}}_{t}(a_{j}(t))\big\|_{1}+\int_{\Delta_{j,t}}\big\|\mathbf{\hat{J}}_{t}^\prime(y)\big\|_{1}\big\| \mathbf{\hat{P}}_{(-\infty,y]}\big\|dy =
\end{equation}
\begin{equation}
\big\|\mathbf{\hat{J}}_{t}(b_{j}(t))\big\|_{1}+\big\|\mathbf{\hat{J}}_{t}(a_{j}(t))\big\|_{1}+\int_{\Delta_{j,t}}\big\|\mathbf{\hat{J}}_{t}^\prime(y)\big\|_{1}dy \leq 
\end{equation}
\begin{equation}
\big\|\mathbf{\hat{J}}_{t}(b_{j}(t))\big\|_{1}+\big\|\mathbf{\hat{J}}_{t}(a_{j}(t))\big\|_{1}+|\Delta_{j,t}|\sup_{y\in\Delta_{j,t}}\big\|\mathbf{\hat{J}}_{t}^\prime(y)\big\|_{1}dy \leq 
\end{equation}
\begin{equation}
\sup_{x\in \Delta_{i,t}}|\Gamma(t,x,b_{j}(t))|+\sup_{x\in \Delta_{i,t}}|\Gamma(t,x,a_{j}(t))|+ |\Delta_{j,t}|\sup_{\overset{x\in \Delta_{i,t}}{ y\in \Delta_{j,t}}}|\partial_{y}\Gamma(t,x,y)|\leq
\end{equation}
\begin{equation}
\sup_{\overset{(x,y)\in}{ \Delta_{i,t}\times \Delta_{j,t}}}\bigg(2|\Gamma(t,x,y)|+|\Delta_{j,t}\Gamma^{'}(t,x,y)|\bigg)
\end{equation}
\end{proof}
\subsection{Another Way to Estimate the Off-diagonal Terms (\ref{eqn:uzbek9})}
\;\;\; In the previous section, the kernel $\Gamma(t,x,y)$ characterizing the non-unitary evolution of the density operator in the hypothesis of Theorem \ref{eqn:theoremkupsch} was treated in rather general terms. However, if more is known about the kernel $\Gamma(t,x,y)$, then one may employ another technique to estimate $\big\|\mathbf{\hat{P}}_{\Delta_{i,t}}\mathscr{E}_{t}\big(\boldsymbol{\hat{\rho}}_{S_{0}}\big)\mathbf{\hat{P}}_{\Delta_{j,t}}\big\|_{1} $. Namely, we will utilize the following theorem from \cite{stolz} in order to bound the non-diagonal terms for the case where it is known how to express $\mathbf{\hat{P}}_{\Delta_{i,t}}\mathscr{E}_{t}\big(\boldsymbol{\hat{\rho}}_{S_{0}}\big)\mathbf{\hat{P}}_{\Delta_{j,t}}$ as a product of two trace class operators acting in the Hilbert space $L^{2}\big(M,\mu\big)$. We will focus on the case where $M= \mathbb{R}$ and $\mu$ is the Lebesgue measure.

\begin{definition4}
\label{eqn:stolz}

 Let $A(x,z)$ and $B(z,x)$ $\in L^{2}(\mathbb{R})\:\forall \:z\in \mathbb{R}$ and
 \begin{equation}
\int\|A(\cdot,z)\|_{L^{2}(\mathbb{R})}\|B(z,\cdot)\|_{L^{2}(\mathbb{R})}dz <\infty .
\end{equation}
Then there is a trace class operator $\mathbf{\hat{A}}\mathbf{\hat{B}}$ acting in $L^{2}(\mathbb{R})$ with kernel 
\begin{equation}
AB(x,y)= \int A(x,z)B(z,y)dz
\end{equation}
such that 
\begin{equation}
 \big\|\mathbf{\hat{A}}\mathbf{\hat{B}}\big\|_{1}\leq \int\|A(\cdot,z)\|_{L^{2}(\mathbb{R})}\|B(z,\cdot)\|_{L^{2}(\mathbb{R})}dz
 \end{equation}
The operators $\mathbf{\hat{A}}$ and $\mathbf{\hat{B}}$ also act in $L^{2}(\mathbb{R})$ and have respective kernels $A(x,y)$ and $B(x,y)$.
\end{definition4}
The first step towards applying Theorem \ref{eqn:stolz} to the estimation of the terms in (\ref{eqn:kupschbound}) is rewriting $\mathbf{\hat{P}}_{\Delta_{i,t}}\mathscr{E}_{t}\big(\boldsymbol{\hat{\rho}}_{S_{0}}\big)\mathbf{\hat{P}}_{\Delta_{j,t}}$ as a product of of two operators. Let us fix $t$, and note that the main challenge is the kernel $\Gamma(t,x,y)$. Expanding, we see that
\begin{equation}
\label{eqn:montreal1}
\mathbf{\hat{P}}_{\Delta_{i,t}}\mathscr{E}_{t}\big(\boldsymbol{\hat{\rho}}_{S_{0}}\big)\mathbf{\hat{P}}_{\Delta_{j,t}} = \int\int \Gamma(t,x,y)\sum_{n}p_{n}\psi_{n,0}(x)\psi_{n,0}^{*}(y)\chi_{\Delta_{i,t}}(x)\chi_{\Delta_{j,t}}(y)\big|x\big\rangle\big\langle y\big| dxdy.
\end{equation}
where $K(x,y) = \sum_{n}p_{n}\psi^{*}_{n,0}(x)\psi_{n,0}(y)$ is the kernel of $\boldsymbol{\hat{\rho}}_{S_{0}}= \sum_{n}p_{n}\big|\psi_{n,0}\big\rangle\big\langle\psi_{n,0}\big|$. 
If the kernel $\Gamma(t,x,y)$ were to have a decomposition of the form
\begin{equation}
\Gamma(t,x,y) = \int \phi(t,x,z)\eta(t,y,z) dz
\end{equation}
with 
\begin{equation}
\label{eqn:montreal2}
\int|\eta(t,x,z)\phi(t,y,z)|^{2}dz<\infty
\end{equation}
for all $ (t,x,y) \in\mathbb{R}^{2}$, then Theorem \ref{eqn:stolz} would be applicable for any $t$. Assuming (\ref{eqn:montreal2}) we have
\begin{equation}
\big\|\mathbf{\hat{P}}_{\Delta_{i,t}}\mathscr{E}_{t}\big(\boldsymbol{\hat{\rho}}_{S_{0}}\big)\mathbf{\hat{P}}_{\Delta_{j,t}}\big\|_{1} \leq 
\sum_{n}p_{n}\big\|\mathbf{\hat{P}}_{\Delta_{i,t}}\mathscr{E}_{t}\big(\big|\psi_{n,0}\big\rangle \big\langle \psi_{n,0}\big|\big)\mathbf{\hat{P}}_{\Delta_{j,t}}\big\|_{1} \leq 
\end{equation}
\begin{equation}
=\sum_{n}p_{n}\int\Bigg(\bigg[\int |\phi(t,x,z)\psi_{n,0}(x)\chi_{\Delta_{i,t}}(x)|^{2}dx\bigg]\bigg[\int |\eta(t,y,z)\psi_{n,0}(y)\chi_{\Delta_{j,t}}(y)|^{2}dy\bigg]\Bigg)dz = 
\end{equation}
\begin{equation}
\sum_{n}p_{n}\int \int\Bigg(\bigg[\int |\phi(t,x,z)\eta(t,y,z)|^{2}dz\bigg]|\psi_{n,0}(x)\chi_{\Delta_{i,t}}(x)|^{2}|\psi_{n,0}(y)\chi_{\Delta_{j,t}}(y)|^{2}\Bigg)dxdy\leq
\end{equation}
\begin{equation}
\sum_{n}p_{n}\bigg[\max_{(x,y)\in \mathbb{R}^{2}} \int|\phi(t,x,z)\eta(t,y,z)|^{2}dz\bigg]\int \int\Bigg(|\psi_{n,0}(x)\chi_{\Delta_{i,t}}(x)|^{2}|\psi_{n,0}(y)\chi_{\Delta_{j,t}}(y)|^{2}\Bigg)dxdy=  
\end{equation}
\begin{equation}
\sum_{n}p_{n}\bigg[\max_{(x,y) \mathbb{R}^{2}}  \int|\phi(t,x,z)\eta(t,y,z)|^{2}dz\bigg]\Big(\int_{\Delta_{i,t}}|\psi_{n,0}(x)|^{2}dx\Big)\Big(\int_{\Delta_{j,t}}|\psi_{n,0}(y)|^{2}dy\Big)\leq
\end{equation}
\begin{equation}
\max_{(x,y)\in \mathbb{R}^{2}} \int|\phi(t,x,z)\eta(t,y,z)|^{2}dz<\infty  
\end{equation}
The Hypothesis of Theorem \ref{eqn:stolz} is therefore satisfied, for all $t>0$, and we conclude that 
\begin{equation}
\label{eqn:montreal3}
 \big\| \mathbf{\hat{P}}_{\Delta_{i,t}}\mathscr{E}_{t}\big(\boldsymbol{\hat{\rho}}_{S_{0}}\big)\mathbf{\hat{P}}_{\Delta_{j,t}}  \big\|_{1} \leq   \sum_{n}p_{n}\int\|A^{n}_{i,t}(\cdot,z)\|_{L^{2}(\mathbb{R})}\|B^{n}_{j,t}(z,\cdot)\|_{L^{2}(\mathbb{R})}dz
\end{equation}
with $A^{n}_{i,t}(x,z):= \phi(t,x,z)\psi_{n,0}(x)\chi_{\Delta_{i,t}}(x)$ and $B^{n}_{j,t}(z,x):= \eta(t,x,z)\psi^{*}_{n,0}(x)\chi_{\Delta_{j,t}}(x)$.\\

We now formalize the above as a corollary to Theorem \ref{eqn:stolz}.
\begin{Co}
\label{eqn:montrealuzbek}
If the kernel $\Gamma(t,x,y)$ has a decomposition 
\begin{equation}
\Gamma(t,x,y) = \int \phi(t,x,z)\eta(t,y,z) dz
\end{equation}
with 
\begin{equation}
\int|\eta(t,x,z)\phi(t,y,z)|^{2}dz<\infty
\end{equation}
for all $ (t,x,y) \in [0,\infty)\times\mathbb{R}^{2}$, then 

\begin{equation}
 \big\| \mathbf{\hat{P}}_{\Delta_{i,t}}\mathscr{E}_{t}\big(\boldsymbol{\hat{\rho}}_{S_{0}}\big)\mathbf{\hat{P}}_{\Delta_{j,t}}  \big\|_{1} \leq   \sum_{n}p_{n}\int\|A^{n}_{i,t}(\cdot,z)\|_{L^{2}(\mathbb{R})}\|B^{n}_{j,t}(z,\cdot)\|_{L^{2}(\mathbb{R})}dz
\end{equation}
with $A^{n}_{i,t}(x,z):= \phi(t,x,z)\psi_{n,0}(x)\chi_{\Delta_{i,t}}(x)$ and $B^{n}_{j,t}(z,x):= \eta(t,x,z)\psi^{*}_{n,0}(x)\chi_{\Delta_{j,t}}(x)$.
\end{Co}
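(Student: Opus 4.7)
The plan is to reduce the bound to a pure-state problem and then apply Theorem \ref{eqn:stolz} term by term. First I would decompose the system's initial state as a convex combination $\boldsymbol{\hat{\rho}}_{S_{0}} = \sum_{n}p_{n}\big|\psi_{n,0}\big\rangle\big\langle\psi_{n,0}\big|$. Since $\mathscr{E}_{t}$ is linear and the projectors $\mathbf{\hat{P}}_{\Delta_{i,t}}$, $\mathbf{\hat{P}}_{\Delta_{j,t}}$ act from the left and right respectively, the triangle inequality for $\|\cdot\|_{1}$ gives
\begin{equation}
\big\|\mathbf{\hat{P}}_{\Delta_{i,t}}\mathscr{E}_{t}\big(\boldsymbol{\hat{\rho}}_{S_{0}}\big)\mathbf{\hat{P}}_{\Delta_{j,t}}\big\|_{1} \leq \sum_{n}p_{n}\big\|\mathbf{\hat{P}}_{\Delta_{i,t}}\mathscr{E}_{t}\big(\big|\psi_{n,0}\big\rangle\big\langle\psi_{n,0}\big|\big)\mathbf{\hat{P}}_{\Delta_{j,t}}\big\|_{1},
\end{equation}
so it suffices to bound each term in the sum.

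Next I would write out the kernel of each pure-state term using (\ref{eqn:montreal1}) and substitute the assumed factorization $\Gamma(t,x,y) = \int \phi(t,x,z)\eta(t,y,z)\,dz$. This displays $\mathbf{\hat{P}}_{\Delta_{i,t}}\mathscr{E}_{t}\big(\big|\psi_{n,0}\big\rangle\big\langle\psi_{n,0}\big|\big)\mathbf{\hat{P}}_{\Delta_{j,t}}$ as an operator whose integral kernel is exactly $\int A^{n}_{i,t}(x,z)B^{n}_{j,t}(z,y)\,dz$, with $A^{n}_{i,t}$ and $B^{n}_{j,t}$ as in the statement. To invoke Theorem \ref{eqn:stolz} I must verify that $A^{n}_{i,t}(\cdot,z),\,B^{n}_{j,t}(z,\cdot)\in L^{2}(\mathbb{R})$ for each $z$ and that $\int \|A^{n}_{i,t}(\cdot,z)\|_{L^{2}}\|B^{n}_{j,t}(z,\cdot)\|_{L^{2}}\,dz <\infty$. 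The former follows because $\psi_{n,0}\in L^{2}(\mathbb{R})$ and the characteristic functions are bounded; for the latter I would apply Cauchy--Schwarz in $z$ and use Tonelli to swap the $z$-integral with the $x,y$-integrals, then invoke the hypothesis $\int|\phi(t,x,z)\eta(t,y,z)|^{2}dz<\infty$ (pointwise in $x,y$) together with $\|\psi_{n,0}\|_{L^{2}}=1$ to conclude finiteness. Applying Theorem \ref{eqn:stolz} to each pure-state term and summing over $n$ with weights $p_{n}$ gives the claimed bound.

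The main obstacle is the integrability check in the last step: the hypothesis $\int|\phi(t,x,z)\eta(t,y,z)|^{2}dz<\infty$ is only pointwise in $(x,y)$, and one still needs a uniform-enough control in $(x,y)$ over $\Delta_{i,t}\times\Delta_{j,t}$ to justify the interchange of integrals and ensure that $\int \|A^{n}_{i,t}(\cdot,z)\|_{L^{2}}\|B^{n}_{j,t}(z,\cdot)\|_{L^{2}}\,dz$ is finite. The calculation displayed between (\ref{eqn:montreal2}) and (\ref{eqn:montreal3}) in the text preceding the corollary performs precisely this verification (via the pointwise maximum of $\int |\phi(t,x,z)\eta(t,y,z)|^{2}dz$ on $\mathbb{R}^{2}$ and the $L^{2}$-normalization of $\psi_{n,0}$), and I would simply formalize that computation as the proof of the corollary.
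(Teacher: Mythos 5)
Your proposal follows essentially the same route as the paper, whose proof of this corollary is literally the discussion preceding it: triangle inequality over the mixture $\sum_{n}p_{n}|\psi_{n,0}\rangle\langle\psi_{n,0}|$, factorization of the kernel via the assumed decomposition of $\Gamma(t,x,y)$, verification of the hypothesis of Theorem \ref{eqn:stolz}, and term-by-term application followed by resummation. The integrability subtlety you flag (pointwise finiteness of $\int|\phi(t,x,z)\eta(t,y,z)|^{2}dz$ versus the uniform control in $(x,y)$ implicitly invoked through $\max_{(x,y)\in\mathbb{R}^{2}}$ in the paper's computation) is present in the paper's own argument as well, so your proof matches the paper's in substance.
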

\begin{proof}
The proof is in the preceding discussion concluding with equation (\ref{eqn:montreal3}).
\end{proof}
\subsubsection{Example}
For the case where the kernel $\Gamma(t,x,y) = e^{-t^{n}\alpha(x-y)^{2}}$, $n> 0,t>0$, we may express such a function as a convolution of Gaussians. Namely, 
\begin{equation}
\Gamma(t,x,y) = e^{-t^{n}\alpha(x-y)^{2}} = 2\sqrt{\frac{t^{n}\alpha}{\pi}}\int e^{-2t^{n}\alpha (x-z)^{2}}e^{-2t^{n}\alpha (y-z)^{2}}dz. 
\end{equation}
In this case the $\phi$ and $\eta$ from (\ref{eqn:montreal2}) are just 
\begin{equation}
\phi(t,x,z)= \sqrt{2\sqrt{\frac{t^{n}\alpha}{\pi}}}e^{-2t^{n}\alpha (x-z)^{2}}
\end{equation}
and 
\begin{equation}
 \eta(t,y,z)= \sqrt{2\sqrt{\frac{t^{n}\alpha}{\pi}}}e^{-2t^{n}\alpha (y-z)^{2}}. 
\end{equation}
Notice that here 
\begin{equation}
\int|\phi(t,x,z)\eta(t,y,z)|^{2}dz = \frac{4t^{n}}{\pi}\int e^{-4t^{n}\alpha (x-z)^{2}}e^{-4t^{n}\alpha (y-z)^{2}} dz = \frac{4t^{n}\alpha}{\pi}\frac{\sqrt{\pi}}{2^{\frac{3}{2}}\sqrt{t^{n}\alpha}}e^{-2t^{n}\alpha(x-y)^{2}}=
\end{equation}
\begin{equation}
\frac{\sqrt{2t^{n}\alpha}}{\sqrt{\pi}}e^{-2t^{n}\alpha(x-y)^{2}} <\infty
\end{equation}
for all $t>0$ and $(x,y)\in \mathbb{R}^{2}$. With these $\phi$ and $\eta$, we can bound the left-hand side of (\ref{eqn:montreal3}) as follows. 
\begin{equation}
\big\| \mathbf{\hat{P}}_{\Delta_{i,t}}\mathscr{E}_{t}\big(\boldsymbol{\hat{\rho}}_{S_{0}}\big)\mathbf{\hat{P}}_{\Delta_{j,t}}  \big\|_{1} \leq 
\end{equation}
\begin{equation}
\sum_{n}p_{n}\int \int\Bigg(\bigg[\int |\phi(t,x,z)\eta(t,y,z)|^{2}dz\bigg]|\psi_{n,0}(x)\chi_{\Delta_{i,t}}(x)|^{2}|\psi_{n,0}(y)\chi_{\Delta_{j,t}}(y)|^{2}\Bigg)dxdy = 
\end{equation}
\begin{equation}
\sum_{n}p_{n} \int \int\Bigg(\frac{\sqrt{2t^{n}\alpha}}{\sqrt{\pi}}e^{-2t^{n}\alpha(x-y)^{2}}|\psi_{n,0}(x)\chi_{\Delta_{i,t}}(x)|^{2}|\psi_{n,0}(y)\chi_{\Delta_{j,t}}(y)|^{2}\Bigg)dxdy  = 
\end{equation}
\begin{equation}
\label{eqn:montreal4}
\sum_{n}p_{n}\int_{\Delta_{i,t}} \int_{\Delta_{j,t}}\Bigg(\frac{\sqrt{2t^{n}\alpha}}{\sqrt{\pi}}e^{-2t^{n}\alpha(x-y)^{2}}|\psi_{n,0}(x)|^{2}|\psi_{n,0}(y)|^{2}\Bigg)dxdy
\end{equation}
The kernels $\frac{\sqrt{2t^{n}\alpha}}{\sqrt{\pi}}e^{-2t^{n}\alpha(x-y)^{2}}$ form a sequence converging to the delta function as $t \to \infty$. Hence, whenever $x\in \Delta_{i,t}$, $y\in\Delta_{j,t}$, and (since $\Delta_{i,t}\cap\Delta_{j,t}=\emptyset$) (\ref{eqn:montreal4}) vanishes as $t\rightarrow \infty$.\\

The importance of Gaussian states warrants summarizing the above discussion as another corollary of Theorem \ref{eqn:stolz}.
\begin{Co}[\textbf{Theorem} \ref{eqn:stolz} \textbf{with Gaussian assumptions for} $\Gamma(t,x,y)$ ]
\label{eqn:montrealuzbek2}
Fix $t>0$. Now let 
\begin{equation}
\Gamma(t,x,y) = e^{-t^{n}\alpha(x-y)^{2}} = 2\sqrt{\frac{t^{n}\alpha}{\pi}}\int e^{-2t^{n}\alpha (x-z)^{2}}e^{-2t^{n}\alpha (y-z)^{2}}dz
\end{equation}
where $n>0$, and assume that $\Delta_{i,t}\cap\Delta_{j,t}=\emptyset$, 
then
\begin{equation}
1)\;\;\big\| \mathbf{\hat{P}}_{\Delta_{i,t}}\mathscr{E}_{t}\big(\boldsymbol{\hat{\rho}}_{S_{0}}\big)\mathbf{\hat{P}}_{\Delta_{j,t}}  \big\|_{1} \leq
\sum_{n}p_{n}\int_{\Delta_{i,t}} \int_{\Delta_{j,t}}\Bigg(\frac{\sqrt{2t^{n}\alpha}}{\sqrt{\pi}}e^{-2t^{n}\alpha(x-y)^{2}}|\psi_{n,0}(x)|^{2}|\psi_{n,0}(y)|^{2}\Bigg)dxdy
\end{equation}
\begin{equation}
2)\;\;\lim_{t\rightarrow\infty}\big\|\mathbf{\hat{P}}_{\Delta_{i,t}}\mathscr{E}_{t}\big(\boldsymbol{\hat{\rho}}_{S_{0}}\big)\mathbf{\hat{P}}_{\Delta_{j,t}}\big\|_{1} = 0
\end{equation}
\end{Co}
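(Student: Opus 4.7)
My plan is to derive both parts of the corollary from Corollary \ref{eqn:montrealuzbek} together with a concentration argument for the Gaussian kernel. For part (1), I would simply specialize Corollary \ref{eqn:montrealuzbek} to the explicit Gaussian decomposition $\phi(t,x,z) = \sqrt{2\sqrt{t^{n}\alpha/\pi}}\,e^{-2t^{n}\alpha(x-z)^{2}}$, $\eta(t,y,z) = \sqrt{2\sqrt{t^{n}\alpha/\pi}}\,e^{-2t^{n}\alpha(y-z)^{2}}$ already displayed in the preceding example. The finiteness of $\int|\phi(t,x,z)\eta(t,y,z)|^{2}\,dz$ has been verified there, and this integral evaluates to $\frac{\sqrt{2t^{n}\alpha}}{\sqrt{\pi}}e^{-2t^{n}\alpha(x-y)^{2}}$. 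Substituting this into the intermediate inequality from the proof of Corollary \ref{eqn:montrealuzbek} (the step immediately preceding (\ref{eqn:montreal3}), obtained by swapping the $dz$ integration with $dx\,dy$) and restricting the resulting domain to $\Delta_{i,t}\times\Delta_{j,t}$ via the indicator functions reproduces the bound claimed in~(1) verbatim.

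For part (2), the goal is to show that the right-hand side of (1) tends to zero. The family $K_{t}(x,y):=\frac{\sqrt{2t^{n}\alpha}}{\sqrt{\pi}}e^{-2t^{n}\alpha(x-y)^{2}}$ is an approximate identity concentrating on the diagonal $x=y$ as $t\to\infty$. Since $\Delta_{i,t}\cap\Delta_{j,t}=\emptyset$, the integration region in (1) avoids this diagonal. Setting $d_{t}:=\mathrm{dist}(\Delta_{i,t},\Delta_{j,t})$, the pointwise estimate $K_{t}(x,y)\le \frac{\sqrt{2t^{n}\alpha}}{\sqrt{\pi}}e^{-2t^{n}\alpha d_{t}^{2}}$ valid on $\Delta_{i,t}\times\Delta_{j,t}$, together with the normalization bound $\sum_{n}p_{n}\|\chi_{\Delta_{i,t}}\psi_{n,0}\|_{2}^{2}\|\chi_{\Delta_{j,t}}\psi_{n,0}\|_{2}^{2}\le 1$, yields
\[
\big\|\mathbf{\hat{P}}_{\Delta_{i,t}}\mathscr{E}_{t}(\boldsymbol{\hat{\rho}}_{S_{0}})\mathbf{\hat{P}}_{\Delta_{j,t}}\big\|_{1}\le \frac{\sqrt{2t^{n}\alpha}}{\sqrt{\pi}}\,e^{-2t^{n}\alpha d_{t}^{2}},
\]
which vanishes as soon as $t^{n}d_{t}^{2}\to\infty$.

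The main obstacle is precisely this last condition: the hypothesis $\Delta_{i,t}\cap\Delta_{j,t}=\emptyset$ at each $t$ is not by itself sufficient, since the Gaussian prefactor grows like $t^{n/2}$ and a merely disjoint pair of intervals could in principle be shrinking toward each other fast enough to defeat the exponential. A clean sufficient assumption is $t^{n}d_{t}^{2}\to\infty$; alternatively one can take $\Delta_{i},\Delta_{j}$ independent of $t$, in which case $d_{t}=d>0$ and the exponential decay is automatic. This separation condition should be stated explicitly, or else read off from the physical resolution limits referenced earlier in the paper when motivating the PVM $\{\mathbf{\hat{P}}_{\Delta_{i,t}}\}$. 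Once it is in place, dominated convergence (or simply the pointwise estimate above) delivers the limit.
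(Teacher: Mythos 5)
Your part (1) is exactly the paper's argument: the paper's ``proof'' of Corollary \ref{eqn:montrealuzbek2} is by reference to the preceding Example, which plugs the explicit Gaussian $\phi,\eta$ into the chain of inequalities behind Corollary \ref{eqn:montrealuzbek}, evaluates $\int|\phi(t,x,z)\eta(t,y,z)|^{2}dz=\tfrac{\sqrt{2t^{n}\alpha}}{\sqrt{\pi}}e^{-2t^{n}\alpha(x-y)^{2}}$, and restricts the domain to $\Delta_{i,t}\times\Delta_{j,t}$, arriving at (\ref{eqn:montreal4}). For part (2) you diverge from the paper, and usefully so: the paper simply asserts that the kernels $\tfrac{\sqrt{2t^{n}\alpha}}{\sqrt{\pi}}e^{-2t^{n}\alpha(x-y)^{2}}$ converge to a delta function and that disjointness of $\Delta_{i,t}$ and $\Delta_{j,t}$ therefore forces (\ref{eqn:montreal4}) to vanish, with no domination or uniformity argument. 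Your quantitative bound $\tfrac{\sqrt{2t^{n}\alpha}}{\sqrt{\pi}}e^{-2t^{n}\alpha d_{t}^{2}}$ with $d_{t}=\mathrm{dist}(\Delta_{i,t},\Delta_{j,t})$ makes the mechanism explicit and correctly identifies that mere disjointness at each $t$ does not literally imply the stated limit when the partition is $t$-dependent: one needs a separation condition such as $t^{n}d_{t}^{2}\to\infty$, or fixed intervals. This is a genuine imprecision in the paper's statement, not in your proof. Two small caveats: your sufficient condition is slightly stronger than necessary --- for fixed intervals that merely touch ($d_{t}=0$) the limit still holds, since $\sup_{t}\tfrac{\sqrt{2t^{n}\alpha}}{\sqrt{\pi}}e^{-2t^{n}\alpha(x-y)^{2}}\leq C|x-y|^{-1}$ gives an integrable dominating function on a rectangle (for, say, bounded $\psi_{n,0}$), so dominated convergence applies without any positive gap; and your final display drops the factor $\sum_{n}p_{n}\int_{\Delta_{i,t}}|\psi_{n,0}|^{2}\int_{\Delta_{j,t}}|\psi_{n,0}|^{2}\leq 1$, which is harmless but worth stating as the step where the weights are discarded.
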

\begin{proof}
The proof is in the preceding discussion concluding with equation (\ref{eqn:montreal4}).
\end{proof}

\section{Estimating the Diagonal Terms (\ref{eqn:uzbek8})}
\label{eqn:section4.2}
\;\;\; We have hitherto developed the tools necessary to estimate the trace norm of the off-diagonal terms (\ref{eqn:uzbek9}) arising from the optimization problem (\ref{eqn:uzbek5}). We now need to study the diagonal terms (\ref{eqn:uzbek8}), i.e. estimate
\begin{equation}
\label{eqn:thediag}
\min_{PVM}\bigg\|\sum_{i}\Big(\mathbf{\hat{P}}_{\Delta_{i,t}}\otimes \mathbb{I}\Big) \boldsymbol{\hat{\rho}}_{t}\Big(\mathbf{\hat{P}}_{\Delta_{i,t}}\otimes \mathbb{I}\Big)- \frac{1}{\mathscr{N}(t)}\sum_{i}\Big(\mathbf{\hat{P}}_{\Delta_{i,t}}\otimes \mathbf{\hat{P}}^{E,t}_{i}\Big) \boldsymbol{\hat{\rho}}_{t}\Big(\mathbf{\hat{P}}_{\Delta_{i,t}}\otimes \mathbf{\hat{P}}^{E,t}_{i}\Big)    \bigg\|_{1}
\end{equation}
The minimization is taken over all PVMs resolving the identity operator in the space associated with the environmental degrees of freedom.
Recall that $\sum_{i}\sum_{j}\Big(\mathbf{\hat{P}}_{\Delta_{i,t}}\otimes \mathbb{I}\Big)\boldsymbol{\hat{\rho}}_{t}\Big(\mathbf{\hat{P}}_{\Delta_{j,t}}\otimes \mathbb{I}\Big)$ is just another way of writing $\boldsymbol{\hat{\rho}}_{t}$ (see (\ref{eqn:partitionuzbek})). The use of the PVM $\big\{\mathbf{\hat{P}}_{\Delta_{i,t}}\big\}_{i}$ in the first term in (\ref{eqn:thediag}) is just technical. However, the usage of the same PVM in the second term implies a measurement of the von Neumann type performed on the system. When estimating the off-diagonal terms (\ref{eqn:uzbek9}), we only needed to study their asymptotic behavior in $t$, since the families of PVM acting on the system's degree of freedom, $\mathbf{\hat{P}}_{\Delta_{i,t}}$ were assumed to be predetermined. In the case of the diagonal terms (\ref{eqn:uzbek8}), the PVM will vary with $t$, which poses a more challenging $t$-dependent optimization problem.\\

The reader might have already noted that the map $\sum_{i}\mathbf{\hat{P}}_{\Delta_{i,t}}\otimes \mathbf{\hat{P}}^{E,t}_{i}\big(...\big)\mathbf{\hat{P}}_{\Delta_{i,t}}\otimes \mathbf{\hat{P}}^{E,t}_{i}$ is unlike measurements of von Neumann type seen in 2.2.5 of \cite{Nielsen} since it does not preserve the trace. Both are indeed completely positive maps, but the latter map turns out to reduce the trace in general, i.e. 
\begin{equation}
Tr\bigg\{\sum_{i}\Big(\mathbf{\hat{P}}_{\Delta_{i,t}}\otimes \mathbf{\hat{P}}^{E,t}_{i}\Big)\boldsymbol{\hat{\rho}}_{t}\Big(\mathbf{\hat{P}}_{\Delta_{i,t}}\otimes \mathbf{\hat{P}}^{E,t}_{i}\Big)\bigg\}\leq Tr\{\boldsymbol{\hat{\rho}}_{t}\} = 1. \end{equation}
Indeed, the PVM $\{\mathbf{\hat{P}}_{\Delta_{i}}\otimes \mathbf{\hat{P}}^{E,t}_{i}\}_{i}$ by itself does not describe a measurement for the product of the system's and environment's Hilbert spaces because it does not resolve the identity operator acting over the entire Hilbert space $\mathscr{H}_{S}\otimes\mathscr{H}_{E}$ but rather the identity of a subspace of $\mathscr{H}_{S}\otimes\mathscr{H}_{E}$; hence the need for the normalization constant $\mathscr{N}(t)$ in (\ref{eqn:thediag}). The associated PVM, which preserves trace, and resolves the identity of $\mathscr{H}_{S}\otimes\mathscr{H}_{E}$, is the family of projectors $\big\{\mathbf{\hat{P}}_{\Delta_{i,t}}\otimes \mathbf{\hat{P}}^{E,t}_{j}\big\}_{i,j}$. This set includes outcomes pertaining to the case where the environment $E$ is measured to be in the state labeled by the index $j$ which differs from the outcome $i\neq j$ measured by the system $S$. Hence, we exclude the $i\neq j$ terms when constructing the approximating SBS state (\ref{eqn:SBSCVuzbek}). 

Let us now estimate (\ref{eqn:thediag}). We begin by rewriting the operator $\sum_{i}\Big(\mathbf{\hat{P}}_{\Delta_{i,t}}\otimes \mathbb{I}\Big)\boldsymbol{\hat{\rho}}_{t}\Big(\mathbf{\hat{P}}_{\Delta_{i,t}}\otimes \mathbb{I}\Big)$ in the form described by the following Lemma:
\begin{definition2}[\textbf{Rewriting} (\ref{eqn:thediag})]
\label{eqn:rewrittingthingstask}
\begin{equation}
\sum_{i}\Big(\mathbf{\hat{P}}_{\Delta_{i,t}}\otimes \mathbb{I}\Big)\boldsymbol{\hat{\rho}}_{t}\Big(\mathbf{\hat{P}}_{\Delta_{i,t}}\otimes \mathbb{I}\Big) = \sum_{i}\bar{p}_{i}(t)\mathscr{U}_{t}\big(\mathscr{E}_{t}\big(\boldsymbol{\hat{\rho}}_{S_{i,t}}\big)\otimes \boldsymbol{\hat{\rho}}^{E,0}\big)
\end{equation}
where $\boldsymbol{\hat{\rho}}_{t}$ is given by \ref{eqn:jarekcom} and 
\begin{equation}
\boldsymbol{\hat{\rho}}_{S_{i,t}}=\int_{\mathbb{R}}\int_{\mathbb{R}} \sum_{n}q_{n}K^{n}_{i,t}(x,y)|x\rangle\langle y|dxdy
\end{equation}
\begin{equation}
K^{n}_{i,t}(x,y):= \frac{\mathbbm{1}_{\Delta_{i,t}}(x)\psi_{n,0}(x)}{\sqrt{\bar{p}_{i,n}(t)}}\frac{\mathbbm{1}_{\Delta_{i,t}}(y)\psi^{*}_{n,0}(y)}{\sqrt{\bar{p}_{i,n}(t)}}
\end{equation}
\begin{equation}
\bar{p}_{i,n}(t) := \int_{\Delta_{i,t}}K^{n}(x,x)dx,
\end{equation}
\begin{equation}
K^{n}(x,y)  := \psi_{n,0}(x)\psi^{*}_{n,0}(y)\end{equation}
Recalling that 
\begin{equation}
\mathscr{U}_{t}(\mathbf{\hat{A}}):= e^{-it\gamma \mathbf{\hat{X}}\otimes \mathbf{\hat{B}}}\mathbf{\hat{A}}e^{it\gamma \mathbf{\hat{X}}\otimes \mathbf{\hat{B}}}
\end{equation}
\end{definition2}

\begin{proof}
At $t=0$ $\boldsymbol{\hat{\rho}}_{S_{0}}$ is in general a mixture. Let
\begin{equation}
\boldsymbol{\hat{\rho}}_{S_{0}} = \int\int\Big(\sum_{n}p_{n}K^{n}(x,y)\Big)\big|x\big\rangle\big\langle y\big|dxdy
\end{equation}
Hence, 
\begin{equation}
\label{eqn:thestart}
\sum_{i}\Big(\mathbf{\hat{P}}_{\Delta_{i,t}}\otimes \mathbb{I}\Big)\boldsymbol{\hat{\rho}}_{t}\Big(\mathbf{\hat{P}}_{\Delta_{i,t}}\otimes \mathbb{I}\Big)= \sum_{i}\int_{\Delta_{i,t}}\int_{\Delta_{i,t}}\Big(\sum_{n}p_{n}K^{n}(x,y)\Big)\Gamma(t,x,y)\big|x\big\rangle\big\langle y\big|\otimes \boldsymbol{\hat{\rho}}^{E,t}_{x,y}dxdy  = 
\end{equation}
\begin{equation}
\label{eqn:thediagrecast}
\sum_{n}q_{n}\sum_{i}\bar{p}_{i,n}(t)\int_{\Delta_{i,t}}\int_{\Delta_{i,t}}\frac{K^{n}(x,y)}{\bar{p}_{i,n}(t)}\Gamma(t,x,y)|x\rangle\langle y|\otimes\boldsymbol{\hat{\rho}}^{E,t}_{x,y}dxdy= 
\end{equation}
where again
\begin{equation}
\bar{p}_{i,n}(t) := \int_{\Delta_{i,t}}K^{n}(x,x)dx.
\end{equation}
That is, 
\begin{equation}
\label{eqn:thediagrecast2}
(\ref{eqn:thediagrecast}) = \sum_{n}q_{n}\sum_{i}\bar{p}_{i,n}(t)\int_{\mathbb{R}}\int_{\mathbb{R}} K^{n}_{i,t}(x,y)\Gamma(t,x,y)|x\rangle\langle y|\otimes \boldsymbol{\hat{\rho}}^{E,t}_{x,y}dxdy
\end{equation}
Recalling that
$K^{n}(x,y)=\psi_{n,0}(x)\psi^{*}_{n,0}(y)$ we now define 
\begin{equation}
K^{n}_{i,t}(x,y):= \mathbbm{1}_{\Delta_{i,t}}(x)\mathbbm{1}_{\Delta_{i,t}}(y)\frac{K^{n}(x,y)}{\bar{p}_{i,n}(t)} = \frac{\mathbbm{1}_{\Delta_{i,t}}(x)\psi_{n,0}(x)}{\sqrt{\bar{p}_{i,n}(t)}}\frac{\mathbbm{1}_{\Delta_{i,t}}(y)\psi_{n,0}^{*}(y)}{\sqrt{\bar{p}_{i,n}(t)}}.
\end{equation} 
Furthermore, define
\begin{equation}
\label{eqn:joy4}
\psi_{S^{n}_{i,t}}(x):=\frac{\mathbbm{1}_{\Delta_{i,t}}(x)\psi_{n,0}(x)}{\sqrt{\bar{p}_{i,n}(t)}} 
\end{equation}
and write 
\begin{equation}
\label{eqn:braketshinanigans}
K^{n}_{i,t}(x,y) =\psi_{S^{n}_{i,t}}(x)\psi_{S^{n}_{i,t}}^{*}(y)\;\; \big(Kernel\;\;defining\;\;\boldsymbol{\hat{\rho}}_{S_{i,t}^{n}}\;\;)
\end{equation}
Finally, 
\begin{equation}
(\ref{eqn:thediagrecast2}) = \sum_{n}q_{n}\sum_{i}\bar{p}_{i}(t)\mathscr{U}_{t}\big(\mathscr{E}_{t}\big(\boldsymbol{\hat{\rho}}_{S^{n}_{i,t}}\big)\otimes \boldsymbol{\hat{\rho}}^{E,0}\big)
\end{equation}
\begin{equation}
\sum_{i}\bar{p}_{i}(t)\mathscr{U}_{t}\big(\mathscr{E}_{t}\big(\sum_{n}q_{n}\boldsymbol{\hat{\rho}}_{S^{n}_{i,t}}\big)\otimes \boldsymbol{\hat{\rho}}^{E,0}\big) =
\end{equation}
\begin{equation}
\sum_{i}\bar{p}_{i}(t)\mathscr{U}_{t}\big(\mathscr{E}_{t}\big(\boldsymbol{\hat{\rho}}_{S_{i,t}}\big)\otimes \boldsymbol{\hat{\rho}}^{E,0}\big) =
\end{equation}
where we define
\begin{equation}
\label{eqn:joy}
\boldsymbol{\hat{\rho}}_{S_{i,t}}:= \int_{\mathbb{R}}\int_{\mathbb{R}} \sum_{n}q_{n}K^{n}_{i,t}(x,y)\big|x\big\rangle\big\langle y\big|dxdy = \frac{\mathbf{\hat{P}}_{\Delta_{i,t}}\boldsymbol{\hat{\rho}}_{S_{0}}\mathbf{\hat{P}}_{\Delta_{i,t}}}{Tr\big\{ \mathbf{\hat{P}}_{\Delta_{i,t}}\boldsymbol{\hat{\rho}}_{S_{0}}\mathbf{\hat{P}}_{\Delta_{i,t}}\big\}} 
\end{equation}
and apply the definition of $\mathscr{U}_{t}$ presented in (\ref{eqn:coca4}) with $g = \gamma$ and $N_{E}=1$, i.e. 
\begin{equation}
\mathscr{U}_{t}(\mathbf{\hat{A}}):= e^{-it\gamma \mathbf{\hat{X}}\otimes \mathbf{\hat{B}}}\mathbf{\hat{A}}e^{it\gamma \mathbf{\hat{X}}\otimes \mathbf{\hat{B}}}.
\end{equation}
\end{proof}
Using Lemma \ref{eqn:rewrittingthingstask} we may now write 
\begin{equation}
\label{eqn:thestart2}
\sum_{i}\Big(\mathbf{\hat{P}}_{\Delta_{i,t}}\otimes \mathbf{\hat{P}}^{E,t}_{i}\Big)\boldsymbol{\hat{\rho}}_{t}\Big(\mathbf{\hat{P}}_{\Delta_{i,t}}\otimes \mathbf{\hat{P}}^{E,t}_{i}\Big) = \sum_{i}\bar{p}_{i}(t)\Big(\mathbb{I}\otimes \mathbf{\hat{P}}^{E,t}_{i}\Big)\mathscr{U}_{t}\big(\mathscr{E}_{t}\big(\boldsymbol{\hat{\rho}}_{S_{i,t}}\big)\otimes\boldsymbol{\hat{\rho}}^{E,0}\big)\Big(\mathbb{I}\otimes \mathbf{\hat{P}}^{E,t}_{i}\Big)
\end{equation}
Finally, normalizing the operator (\ref{eqn:thestart2}) we obtain an SBSCV state which approximates $\boldsymbol{\hat{\rho}}_{t}$.
\begin{equation}
\label{eqn:thestart3}
\boldsymbol{\hat{\rho}}_{SBSCV,t}:=\frac{1}{\mathscr{N}(t)}\sum_{i}\bar{p}_{i}(t)\Big(\mathbb{I}\otimes \mathbf{\hat{P}}^{E,t}_{i}\Big)\mathscr{U}_{t}\big(\mathscr{E}_{t}\big(\boldsymbol{\hat{\rho}}_{S_{i,t}}\big)\otimes\boldsymbol{\hat{\rho}}^{E,0}\big)\Big(\mathbb{I}\otimes \mathbf{\hat{P}}^{E,t}_{i}\Big).
\end{equation}
To proceed, we will need the following lemma. 
\begin{definition2}[\textbf{Trace Distance Lemma}]
\label{eqn:reversetin}
\vspace{3mm}
$\big\|\boldsymbol{\hat{\rho}}-\eta\boldsymbol{\hat{\sigma}}\big\|_{1}\leq L$ implies $\big\|\boldsymbol{\hat{\rho}}-\boldsymbol{\hat{\sigma}}\big\|_{1}\leq 2L$ for constants $L\geq 0$ and $\eta \in [0,1]$.
\end{definition2}
\begin{proof}
Using reverse triangle inequality we see that 
\begin{equation}
L\geq \|\boldsymbol{\hat{\rho}}-\eta\boldsymbol{\hat{\sigma}}\|_{1}\geq \big|\|\boldsymbol{\hat{\rho}}\|_{1}-\|\eta\boldsymbol{\hat{\sigma}}\|_{1}\big| = \|\boldsymbol{\hat{\rho}}\|_{1}-\|\eta\boldsymbol{\hat{\sigma}}\|_{1} = 1-\eta
\end{equation}
furthermore
\begin{equation}
 \|\boldsymbol{\hat{\rho}}-\boldsymbol{\hat{\sigma}}\|_{1} =   \|\boldsymbol{\hat{\rho}}-\eta\boldsymbol{\hat{\sigma}}+\eta\boldsymbol{\hat{\sigma}}-\hat{\sigma}\|_{1}   \leq   \|\boldsymbol{\hat{\rho}}-\eta\boldsymbol{\hat{\sigma}}\|_{1}+\|\eta\boldsymbol{\hat{\sigma}}-\boldsymbol{\hat{\sigma}}\|_{1} \leq
\end{equation}
\begin{equation}
L +(1-\eta)\|\hat{\sigma}\|_{1} = L+(1-\eta)\leq L+L= 2L
\end{equation}
\end{proof}
We will be estimating $\big\|\boldsymbol{\hat{\rho}}_{t}-\mathscr{N}(t)\boldsymbol{\hat{\rho}}_{SBSCV,t}\big\|_{1}$ first, and then, using Lemma \ref{eqn:reversetin} we shall bound $\big\|\boldsymbol{\hat{\rho}}_{t}-\boldsymbol{\hat{\rho}}_{SBSCV,t}\big\|_{1}$.

The representation (\ref{eqn:thestart3}) makes transparent the structure of the dynamics being imposed on the total initial states $\boldsymbol{\hat{\rho}}_{S_{0}}\otimes\boldsymbol{\hat{\rho}}^{E,0}$ by making explicit all of the quantum maps generating the dynamics, i.e. 
\begin{equation}
\boldsymbol{\hat{\rho}}_{SBSCV,t}=\Lambda_{t}\circ\mathscr{U}_{t}\circ\big(\mathscr{E}_{t}\otimes\mathcal{I}_{E}\big)\Big(\boldsymbol{\hat{\rho}}_{S_{0}}\otimes\boldsymbol{\hat{\rho}}^{E,0}\Big)
\end{equation}
where $\mathcal{I}_{E}$ is the identity map on the environmental degrees of freedom and 
\begin{equation}
\Lambda_{t}(\mathbf{\hat{A}}):=\sum_{i}\frac{1}{\mathscr{N}(t)}\sum_{i}\bar{p}_{i}(t)\Big(\mathbb{I}\otimes \mathbf{\hat{P}}^{E,t}_{i}\Big)\mathbf{\hat{A}}\Big(\mathbb{I}\otimes \mathbf{\hat{P}}^{E,t}_{i}\Big).
\end{equation}
It is clear that the quantum maps $\Lambda_{t}$ and $\mathscr{E}_{t}\otimes\mathcal{I}_{E}$ commute since they respectively act non-trivially on different factors of a tensor product space. What is more interesting, and less obvious, is the commutativity between $\mathscr{E}_{t}\otimes\mathcal{I}_{E}$ and the unitary map $\mathscr{U}_{t}$. Proving this is the content of the following lemma (Lemma \ref{eqn:removingmaps}); a lemma that we shall need for the proof of the main result of this work. 

\begin{definition2}[\textbf{Commutation of} $\mathscr{E}_{t}\otimes\mathcal{I}_{E}$ \textbf{and} $\mathscr{U}_{t}$]
\label{eqn:removingmaps}
\begin{equation}
\mathscr{U}_{t}\circ\big(\mathscr{E}_{t}\otimes\mathcal{I}_{E}\big)\Big(\boldsymbol{\hat{\rho}}_{S_{0}}\otimes\boldsymbol{\hat{\rho}}^{E,0}\Big) = \big(\mathscr{E}_{t}\otimes\mathcal{I}_{E}\big)\circ\mathscr{U}_{t}\Big(\boldsymbol{\hat{\rho}}_{S_{0}}\otimes\boldsymbol{\hat{\rho}}^{E,0}\Big)
\end{equation}
\end{definition2}
\begin{proof}
We remind the reader that we are always working within the framework discussed in the introduction to this paper (see equations (\ref{eqn:intham2}) through (\ref{eqn:evo12})). With this in mind, define $\mathscr{V}_{t}\big(\mathbf{\hat{A}}\big):= e^{-it\gamma^{\prime}\mathbf{\hat{X}}\otimes\mathbb{I}_{E}\otimes\mathbf{\hat{B}}^{\prime}}\big( \mathbf{\hat{A}}\big)e^{it\gamma^{\prime}\mathbf{\hat{X}}\otimes\mathbb{I}_{E}\otimes\mathbf{\hat{B}}^{\prime}}$. Then,
\begin{equation}
\mathscr{U}_{t}\circ\big(\mathscr{E}_{t}\otimes\mathcal{I}_{E}\big)\Big(\boldsymbol{\hat{\rho}}_{S_{0}}\otimes\boldsymbol{\hat{\rho}}^{E,0}\Big) = \mathscr{U}_{t}\Bigg( Tr_{E^{\prime}}\bigg\{\mathscr{V}_{t}\Big(\boldsymbol{\hat{\rho}}_{S_{0}}\otimes\boldsymbol{\hat{\rho}}^{E,0}\otimes\boldsymbol{\hat{\rho}}^{E^{\prime},0}\Big)\bigg\}\Bigg) = 
\end{equation}
\begin{equation}
\label{eqn:park1}
\Big(\mathscr{U}_{t}\otimes\mathcal{I}_{E^{\prime}}\Big)\circ\Big( \mathcal{I}_{SE}\otimes Tr_{E^{\prime}}\Big)\circ\mathscr{V}_{t}\Big(\boldsymbol{\hat{\rho}}_{S_{0}}\otimes\boldsymbol{\hat{\rho}}^{E,0}\otimes\boldsymbol{\hat{\rho}}^{E^{\prime},0}\Big)
\end{equation}
where $\mathcal{I}_{SE}$ is the identity quantum map acting in the tensor product of spaces, describing the system $S$ and the environment $E$. Since the quantum maps $\mathcal{I}_{SE}\otimes Tr_{E^{\prime}}$ and $\mathscr{U}_{t}\otimes\mathcal{I}_{E^{\prime}}$ act nontrivially on distinct factors of the tensor product, they commute. Hence 
\begin{equation}
\label{eqn:park2}
(\ref{eqn:park1}) = \Big( \mathcal{I}_{SE}\otimes Tr_{E^{\prime}}\Big)\circ\Big(\mathscr{U}_{t}\otimes\mathcal{I}_{E^{\prime}}\Big)\circ\mathscr{V}_{t}\Big(\boldsymbol{\hat{\rho}}_{S_{0}}\otimes\boldsymbol{\hat{\rho}}^{E,0}\otimes\boldsymbol{\hat{\rho}}^{E^{\prime},0}\Big) 
\end{equation}
Furthermore, it is easy to see that the generators of the unitary maps $\mathscr{U}_{t}\otimes\mathcal{I}_{E^{\prime}}$ and $\mathscr{V}_{t}$, namely $\mathbf{\hat{X}}\otimes\mathbf{\hat{B}}\otimes\mathbb{I}_{E^{\prime}}$ and $\mathbf{\hat{X}}\otimes\mathbb{I}_{E}\otimes\mathbf{\hat{B}^{\prime}}$, commute. Therefore, for any $\boldsymbol{\hat{\rho}}_{S_{0}}$ and $\boldsymbol{\hat{\rho}}^{E,0}$,
\begin{equation}
(\ref{eqn:park2}) = \Big( \mathcal{I}_{SE}\otimes Tr_{E^{\prime}}\Big)\circ\mathscr{V}_{t}\circ\Big(\mathscr{U}_{t}\otimes\mathcal{I}_{E^{\prime}}\Big)\Big(\boldsymbol{\hat{\rho}}_{S_{0}}\otimes\boldsymbol{\hat{\rho}}^{E,0}\otimes\boldsymbol{\hat{\rho}}^{E^{\prime},0}\Big)  =   
\end{equation}
\begin{equation}
 \Big( \mathcal{I}_{SE}\otimes Tr_{E^{\prime}}\Big)\circ\mathscr{V}_{t}\Big(\mathscr{U}_{t}\Big(\boldsymbol{\hat{\rho}}_{S_{0}}\otimes\boldsymbol{\hat{\rho}}^{E,0}\Big)\otimes\boldsymbol{\hat{\rho}}^{E^{\prime},0}\Big) =
\end{equation}
\begin{equation}
Tr_{E^{\prime}}\bigg\{\mathscr{V}_{t}\Big(\mathscr{U}_{t}\Big(\boldsymbol{\hat{\rho}}_{S_{0}}\otimes\boldsymbol{\hat{\rho}}^{E,0}\Big)\otimes\boldsymbol{\hat{\rho}}^{E^{\prime},0}\Big)\bigg\} = \big(\mathscr{E}_{t}\otimes\mathcal{I}_{E}\big)\Big(\mathscr{U}_{t}\Big(\boldsymbol{\hat{\rho}}_{S_{0}}\otimes\boldsymbol{\hat{\rho}}^{E,0}\Big)\Big) =
\end{equation}
\begin{equation}
\big(\mathscr{E}_{t}\otimes\mathcal{I}_{E}\big)\circ\mathscr{U}_{t}\Big(\boldsymbol{\hat{\rho}}_{S_{0}}\otimes\boldsymbol{\hat{\rho}}^{E,0}\Big)
\end{equation}
\end{proof}

Lemma \ref{eqn:removingmaps} has  the following corollary:

\begin{Co}[\textbf{An} $\mathscr{E}_{t}$- \textbf{independent estimate}]
\label{eqn:mapsaregoneforthe}
\begin{equation}
\frac{1}{2}\bigg\|\sum_{i}\Big(\mathbf{\hat{P}}_{\Delta_{i,t}}\otimes \mathbb{I}\Big)\boldsymbol{\hat{\rho}}_{t}\Big(\mathbf{\hat{P}}_{\Delta_{i,t}}\otimes \mathbb{I}\Big)- \sum_{i}\Big(\mathbf{\hat{P}}_{\Delta_{i,t}}\otimes \mathbf{\hat{P}}^{E,t}_{i}\Big)\boldsymbol{\hat{\rho}}_{t}\Big(\mathbf{\hat{P}}_{\Delta_{i,t}}\otimes \mathbf{\hat{P}}^{E,t}_{i}\Big)   \bigg\|_{1} \leq 
\end{equation}
\begin{equation}
\label{mapremoved}
\frac{1}{2}\sum_{i}\bar{p}_{i}(t)\bigg\|\mathscr{U}_{t}\big(\boldsymbol{\hat{\rho}}_{S_{i,t}}\otimes \boldsymbol{\hat{\rho}}^{E,0}\big) -\big(\mathbb{I}\otimes \mathbf{\hat{P}}^{E,t}_{i}\big)\mathscr{U}_{t}\big(\boldsymbol{\hat{\rho}}_{S_{i,t}}\otimes \boldsymbol{\hat{\rho}}^{E,0}\big)\big(\mathbb{I}\otimes \mathbf{\hat{P}}^{E,t}_{i}\big)\bigg\|_{1}    
\end{equation}
\end{Co}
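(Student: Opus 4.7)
The plan is to reduce the estimate to the identity operator case by peeling off the decoherence channel $\mathscr{E}_{t}$ using the commutation result of Lemma \ref{eqn:removingmaps}, then using the fact that any trace-preserving completely positive map is a contraction in trace norm. First, I would invoke Lemma \ref{eqn:rewrittingthingstask} to replace $\sum_{i}(\mathbf{\hat{P}}_{\Delta_{i,t}}\otimes\mathbb{I})\boldsymbol{\hat{\rho}}_{t}(\mathbf{\hat{P}}_{\Delta_{i,t}}\otimes\mathbb{I})$ with $\sum_{i}\bar{p}_{i}(t)\,\mathscr{U}_{t}(\mathscr{E}_{t}(\boldsymbol{\hat{\rho}}_{S_{i,t}})\otimes\boldsymbol{\hat{\rho}}^{E,0})$ and, via (\ref{eqn:thestart2}), replace the second term with the same sum conjugated by $\mathbb{I}\otimes\mathbf{\hat{P}}_{i}^{E,t}$. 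Since the coefficients $\bar{p}_{i}(t)$ are nonnegative, the triangle inequality pulls the sum outside the trace norm, leaving one to bound, for each $i$,
\begin{equation}
\bigl\|\mathscr{U}_{t}(\mathscr{E}_{t}(\boldsymbol{\hat{\rho}}_{S_{i,t}})\otimes\boldsymbol{\hat{\rho}}^{E,0})-\bigl(\mathbb{I}\otimes\mathbf{\hat{P}}_{i}^{E,t}\bigr)\mathscr{U}_{t}(\mathscr{E}_{t}(\boldsymbol{\hat{\rho}}_{S_{i,t}})\otimes\boldsymbol{\hat{\rho}}^{E,0})\bigl(\mathbb{I}\otimes\mathbf{\hat{P}}_{i}^{E,t}\bigr)\bigr\|_{1}.
\end{equation}

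Next I would apply Lemma \ref{eqn:removingmaps} to move $\mathscr{E}_{t}\otimes\mathcal{I}_{E}$ through $\mathscr{U}_{t}$, rewriting the first term as $(\mathscr{E}_{t}\otimes\mathcal{I}_{E})\circ\mathscr{U}_{t}(\boldsymbol{\hat{\rho}}_{S_{i,t}}\otimes\boldsymbol{\hat{\rho}}^{E,0})$. The key observation is then that the conjugation $\mathbf{\hat{A}}\mapsto(\mathbb{I}\otimes\mathbf{\hat{P}}_{i}^{E,t})\mathbf{\hat{A}}(\mathbb{I}\otimes\mathbf{\hat{P}}_{i}^{E,t})$ commutes with $\mathscr{E}_{t}\otimes\mathcal{I}_{E}$, since one map acts nontrivially only on $\mathscr{H}_{S}$ and the other only on $\mathscr{H}_{E}$; this is immediate from any Kraus representation $\mathscr{E}_{t}(\cdot)=\sum_{\alpha}K_{\alpha}(\cdot)K_{\alpha}^{\dagger}$, since then $(K_{\alpha}\otimes\mathbb{I})$ and $(\mathbb{I}\otimes\mathbf{\hat{P}}_{i}^{E,t})$ commute. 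Therefore the conjugated term equals $(\mathscr{E}_{t}\otimes\mathcal{I}_{E})\bigl[(\mathbb{I}\otimes\mathbf{\hat{P}}_{i}^{E,t})\mathscr{U}_{t}(\boldsymbol{\hat{\rho}}_{S_{i,t}}\otimes\boldsymbol{\hat{\rho}}^{E,0})(\mathbb{I}\otimes\mathbf{\hat{P}}_{i}^{E,t})\bigr]$, and $\mathscr{E}_{t}\otimes\mathcal{I}_{E}$ may be factored out of the entire difference.

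Finally, I would invoke the well-known fact (see, e.g., \cite{Nielsen}) that trace-preserving completely positive maps are contractive under the trace norm, so that applying $\mathscr{E}_{t}\otimes\mathcal{I}_{E}$ to the difference can only decrease its $\|\cdot\|_{1}$. This yields
\begin{equation}
\bigl\|(\mathscr{E}_{t}\otimes\mathcal{I}_{E})[\mathbf{\hat{D}}_{i,t}]\bigr\|_{1}\leq\|\mathbf{\hat{D}}_{i,t}\|_{1},\qquad \mathbf{\hat{D}}_{i,t}:=\mathscr{U}_{t}(\boldsymbol{\hat{\rho}}_{S_{i,t}}\otimes\boldsymbol{\hat{\rho}}^{E,0})-\bigl(\mathbb{I}\otimes\mathbf{\hat{P}}_{i}^{E,t}\bigr)\mathscr{U}_{t}(\boldsymbol{\hat{\rho}}_{S_{i,t}}\otimes\boldsymbol{\hat{\rho}}^{E,0})\bigl(\mathbb{I}\otimes\mathbf{\hat{P}}_{i}^{E,t}\bigr),
\end{equation}
which, summed against $\bar{p}_{i}(t)/2$, is exactly the right-hand side of (\ref{mapremoved}). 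I expect the only subtle step to be the verification that conjugation by $\mathbb{I}\otimes\mathbf{\hat{P}}_{i}^{E,t}$ genuinely commutes with the partial channel $\mathscr{E}_{t}\otimes\mathcal{I}_{E}$ in the infinite-dimensional, unbounded setting; the bookkeeping through a Kraus decomposition or, equivalently, through the Stinespring dilation used implicitly in Lemma \ref{eqn:removingmaps} should suffice, but should be spelled out carefully.
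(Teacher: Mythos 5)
Your proof follows essentially the same route as the paper's: rewrite both terms via Lemma \ref{eqn:rewrittingthingstask} and (\ref{eqn:thestart2}), pull the sum out by the triangle inequality, commute $\mathscr{E}_{t}\otimes\mathcal{I}_{E}$ past $\mathscr{U}_{t}$ using Lemma \ref{eqn:removingmaps}, and finish with trace-norm contractivity of quantum channels. You even make explicit the commutation of the conjugation by $\mathbb{I}\otimes\mathbf{\hat{P}}_{i}^{E,t}$ with $\mathscr{E}_{t}\otimes\mathcal{I}_{E}$, a step the paper uses silently when it factors $\mathscr{E}_{t}$ out of the entire difference, so your argument is correct and, if anything, slightly more careful.
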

\begin{proof}
 \begin{equation}
\bigg\|\sum_{i}\Big(\mathbf{\hat{P}}_{\Delta_{i,t}}\otimes \mathbb{I}\Big)\boldsymbol{\hat{\rho}}_{t}\Big(\mathbf{\hat{P}}_{\Delta_{i,t}}\otimes \mathbb{I}\Big)- \sum_{i}\Big(\mathbf{\hat{P}}_{\Delta_{i,t}}\otimes \mathbf{\hat{P}}^{E,t}_{i}\Big)\boldsymbol{\hat{\rho}}_{t}\Big(\mathbf{\hat{P}}_{\Delta_{i,t}}\otimes \mathbf{\hat{P}}^{E,t}_{i}\Big)   \bigg\|_{1}=
\end{equation}
\begin{equation}
\bigg\|\sum_{i}\bar{p}_{i}(t)\bigg(\mathscr{U}_{t}\big(\mathscr{E}_{t}\big(\boldsymbol{\hat{\rho}}_{S_{i,t}}\big)\otimes \boldsymbol{\hat{\rho}}^{E,0}\big) -\big(\mathbb{I}\otimes \mathbf{\hat{P}}^{E,t}_{i}\big)\mathscr{U}_{t}\big(\mathscr{E}_{t}\big(\boldsymbol{\hat{\rho}}_{S_{i,t}}\big)\otimes \boldsymbol{\hat{\rho}}^{E,0}\big)\big(\mathbb{I}\otimes \mathbf{\hat{P}}^{E_{t}}_{i}\big)\bigg)\bigg\|_{1}\leq
\end{equation}
\begin{equation}
\label{eqn:whereweelevatemap}
\sum_{i}\bar{p}_{i}\bigg\|\mathscr{U}_{t}\big(\mathscr{E}_{t}\big(\boldsymbol{\hat{\rho}}_{S_{i,t}}\big)\otimes \boldsymbol{\hat{\rho}}^{E,0}\big) -\big(\mathbb{I}\otimes \mathbf{\hat{P}}^{E,t}_{i}\big)\mathscr{U}_{t}\big(\mathscr{E}_{t}\big(\boldsymbol{\hat{\rho}}_{S_{i,t}}\big)\otimes \boldsymbol{\hat{\rho}}^{E,0}\big)\big(\mathbb{I}\otimes \mathbf{\hat{P}}^{E,t}_{i}\big)\bigg\|_{1} =
\end{equation}
\begin{equation}
\label{eqn:whereweelevatemap2}
\sum_{i}\bar{p}_{i}\bigg\|\mathscr{E}_{t}\Bigg(\mathscr{U}_{t}\big(\boldsymbol{\hat{\rho}}_{S_{i,t}}\otimes \boldsymbol{\hat{\rho}}^{E,0}\big) -\big(\mathbb{I}\otimes \mathbf{\hat{P}}^{E,t}_{i}\big)\mathscr{U}_{t}\big(\boldsymbol{\hat{\rho}}_{S_{i,t}}\otimes \boldsymbol{\hat{\rho}}^{E,0}\big)\big(\mathbb{I}\otimes \mathbf{\hat{P}}^{E,t}_{i}\big)\Bigg)\bigg\|_{1}\leq
\end{equation}
\begin{equation}
\label{eqn:tobeestimated}
\sum_{i}\bar{p}_{i}\bigg\|\mathscr{U}_{t}\big(\boldsymbol{\hat{\rho}}_{S_{i,t}}\otimes \boldsymbol{\hat{\rho}}^{E,0}\big) -\big(\mathbb{I}\otimes \mathbf{\hat{P}}^{E,t}_{i}\big)\mathscr{U}_{t}\big(\boldsymbol{\hat{\rho}}_{S_{i,t}}\otimes \boldsymbol{\hat{\rho}}^{E,0}\big)\big(\mathbb{I}\otimes \mathbf{\hat{P}}^{E,t}_{i}\big)\bigg\|_{1}
\end{equation} 
where we have used Theorem \ref{eqn:removingmaps} going from (\ref{eqn:whereweelevatemap}) to (\ref{eqn:whereweelevatemap2}), and in going from (\ref{eqn:whereweelevatemap2}) to (\ref{eqn:tobeestimated}) we used the contractivity property of quantum maps (i.e. $\big\|\mathscr{E}\big(\boldsymbol{\hat{\rho}}\big)- \mathscr{E}\big(\boldsymbol{\hat{\sigma}}\big)\big\|_{1}\leq \big\|\boldsymbol{\hat{\rho}}-\boldsymbol{\hat{\sigma}}\big\|_{1}$, see \cite{Nielsen}). 
\end{proof}
For the sake of clarity in the following development, we will now turn to the case where $\boldsymbol{\hat{\rho}}^{E,0}$ and $\boldsymbol{\rho}_{S_{0}}$ are pure states; generalities to the respective mixed-state version of the results presented below are simply attained via linearity and concavity arguments. These assumptions in turn mean that the sate $\boldsymbol{\hat{\rho}}_{S_{i,t}}$ (\ref{eqn:joy}) is also pure.  As such, without the effects of the quantum map $\mathscr{E}_{t}$, the state $\mathscr{U}_{t}\big(\boldsymbol{\hat{\rho}}_{S_{i,t}}\otimes \boldsymbol{\hat{\rho}}^{E,0}\big)$ in (\ref{mapremoved}) is pure. To accentuate the latter we write $\boldsymbol{\hat{\rho}}_{S_{i,t}}=\big|\psi_{S_{i,t}}\big\rangle\big\langle\psi_{S_{i,t}}\big|$, $\boldsymbol{\rho}^{E,0} = \big|\psi_{E,0}\big\rangle\big\langle\psi_{E,0}\big|$ and we use the definition of the map $\mathscr{U}_{t}$ to express it as a conjugation by the unitary operator $\mathbf{\hat{U}}_{t} : = e^{-it\gamma \mathbf{\hat{X}}\otimes \mathbf{\hat{B}}}$.
\begin{equation}
\mathscr{U}_{t}\big(\boldsymbol{\hat{\rho}}_{S_{i,t}}\otimes \boldsymbol{\hat{\rho}}^{E,0}\big)= \mathbf{\hat{U}}_{t}\Big(\big|\psi_{S_{i,t}}\big\rangle\big\langle \psi_{S_{i,t}}\big|\otimes \big|\psi_{E,0}\big\rangle\big\langle \psi_{E,0}\big|\Big)\mathbf{\hat{U}}^{\dagger}_{t} =  
\end{equation}
\begin{equation}
\bigg(\mathbf{\hat{U}}_{t}\Big(\big|\psi_{S_{i,t}}\big\rangle\otimes \big|\psi_{E,0}\big\rangle\Big)\bigg)\bigg(\mathbf{\hat{U}}_{t}\Big(\big|\psi_{S_{i,t}}\big\rangle\otimes \big|\psi_{E,0}\big\rangle\Big)\bigg)^{\dagger}
\end{equation}
It therefore follows that
\begin{equation}
\Big(\mathbb{I}\otimes \mathbf{\hat{P}}^{E,t}_{i}\Big)\mathscr{U}_{t}\big(\boldsymbol{\hat{\rho}}_{S_{i,t}}\otimes \boldsymbol{\hat{\rho}}^{E,0}\big)\Big(\mathbb{I}\otimes \mathbf{\hat{P}}^{E,t}_{i}\Big) = 
\end{equation}
\begin{equation}
\bigg(\mathbb{I}\otimes\mathbf{\hat{P}}^{E,t}_{i}\mathbf{\hat{U}}_{t}\Big(\big|\psi_{S_{i,t}}\big\rangle\otimes \big|\psi_{E,0}\big\rangle\Big)\bigg)\bigg(\mathbb{I}\otimes \mathbf{\hat{P}}^{E,t}_{i}\mathbf{\hat{U}}_{t}\Big(\big|\psi_{S_{i,t}}\big\rangle\otimes \big|\psi_{E,0}\big\rangle\Big)\bigg)^{\dagger}
\end{equation}

We are now ready to present the main theorem of this section.

\begin{definition4}[\textbf{Estimating the Diagonal Terms} (\ref{eqn:uzbek8})]
\label{eqn:themain}
\smaller
\begin{equation}
\min_{PVM}\frac{1}{2}\bigg\|\sum_{i}\Big(\mathbf{\hat{P}}_{\Delta_{i,t}}\otimes \mathbb{I}\Big)\boldsymbol{\hat{\rho}}_{t}\Big(\mathbf{\hat{P}}_{\Delta_{i,t}}\otimes \mathbb{I}\Big)- \frac{1}{\mathscr{N}(t)}\sum_{i}\Big(\mathbf{\hat{P}}_{\Delta_{i,t}}\otimes \mathbf{\hat{P}}^{E,t}_{i}\Big)\boldsymbol{\hat{\rho}}_{t}\Big(\mathbf{\hat{P}}_{\Delta_{i,t}}\otimes \mathbf{\hat{P}}^{E,t}_{i}\Big)   \bigg\|_{1} \leq 
\end{equation}
\begin{equation}
\label{eqn:non-pureb4}
4\min_{PVM}\sqrt{\sum_{i}\bar{p}_{i}(t)\bigg(1-Tr\bigg\{ \mathbf{\hat{P}}^{E,t}_{i}\Lambda_{t,i}\Big(\big|\psi_{E,0}\big\rangle\big\langle\psi_{E,0}\big|\Big)\mathbf{\hat{P}}^{E,t}_{i}\bigg\}\bigg)}
\end{equation}
Here we have defined $\Lambda_{i,t}$ as follows.  
\begin{equation}
\Lambda_{i,t}\big(\boldsymbol{\hat{\rho}}\big):=\int  |\psi_{S_{i,t}}(x)|^{2}\Big(e^{-it\gamma x\mathbf{\hat{B}}}\boldsymbol{\hat{\rho}}e^{it\gamma x\mathbf{\hat{B}}}\Big)dx  
\end{equation}
We have also defined $\big|\psi_{E,0}\big\rangle\big\langle \psi_{E,0}\big|:= \boldsymbol{\hat{\rho}}^{E,0}$. The rest of the notation has already been presented in Lemma \ref{eqn:rewrittingthingstask}.
\normalsize
\end{definition4}
\begin{proof}
First, we will compute the following traces. Recall that $K_{i,t}(x,y):=\psi_{S_{i,t}}(x)\psi_{S_{i,t}}^{*}(y)$
\begin{equation}
\mathscr{N}_{i}(t):= Tr\bigg\{\Big(\mathbb{I}\otimes \mathbf{\hat{P}}^{E,t}_{i}\Big)\mathscr{U}_{t}\big(\boldsymbol{\hat{\rho}}_{S_{i,t}}\otimes \boldsymbol{\hat{\rho}}^{E,0}\big)\Big(\mathbb{I}\otimes \mathbf{\hat{P}}^{E,t}_{i}\Big)\bigg\} =  
\end{equation}
\begin{equation}
\big\langle \psi_{S_{i,t}}\big|\otimes \big\langle \psi_{E,0}\big|\mathbf{\hat{U}}_{t}^{\dagger}\big(\mathbb{I}\otimes \mathbf{\hat{P}}^{E,t}_{i}\big)\mathbf{\hat{U}}_{t}\big|\psi_{S_{i,t}}\big\rangle\otimes\big|\psi_{E,0}\big\rangle = 
\end{equation}

\begin{equation}
\int \big|\psi_{S_{i,t}}(x)\big|^{2}\Big(\big\langle \psi_{E,0}\big|e^{it\gamma y\mathbf{\hat{B}}}\mathbf{\hat{P}}^{E,t}_{i}e^{-it\gamma x\mathbf{\hat{B}}} \big|\psi_{E,0}\big\rangle \Big)dx = 
\end{equation}

\begin{equation}
\int |\psi_{S_{i,t}}(x)|^{2} \bigg(Tr\big\{\mathbf{\hat{P}}^{E,t}_{i}e^{-it\gamma x\mathbf{\hat{B}}}\big|\psi_{E,0}\big\rangle\big\langle \psi_{E,0}\big|e^{it\gamma x\mathbf{\hat{B}}}\mathbf{\hat{P}}^{E,t}_{i}\big\}\bigg) dx =
\end{equation}
\begin{equation}
Tr\Bigg\{\mathbf{\hat{P}}^{E,t}_{i}\bigg(\int |\psi_{S_{i,t}}(x)|^{2} \Big(e^{-it\gamma x\mathbf{\hat{B}}}\big|\psi_{E,0}\big\rangle\big\langle \psi_{E,0}\big|e^{it\gamma x\mathbf{\hat{B}}}\Big) dx\bigg) \mathbf{\hat{P}}^{E,t}_{i}\Bigg\} =
\end{equation}
\begin{equation}
Tr\bigg\{ \mathbf{\hat{P}}^{E,t}_{i}\Lambda_{i,t}\Big(\big|\psi_{E,0}\big\rangle\big\langle \psi_{E,0}\big|\Big)\mathbf{\hat{P}}^{E,t}_{i}\bigg\}
\end{equation}
Now, let us compute the following trace distance using the formula $\frac{1}{2}\Big\|\big|\psi\big\rangle\big\langle \psi\big|-\big|\psi\big\rangle\big\langle\phi\big|\Big\|_{1} = \sqrt{1-\big|\big\langle \phi\big|\psi\big\rangle\big|^{2}}$ (\cite{Nielsen} section 9.2.3). 
\begin{equation}
\frac{1}{2}\bigg\|\mathscr{U}_{t}\big(\boldsymbol{\hat{\rho}}_{S_{i,t}}\otimes \boldsymbol{\hat{\rho}}^{E,0}\big) -\frac{1}{\mathscr{N}_{i}(t)}\big(\mathbb{I}\otimes \mathbf{\hat{P}}^{E,t}_{i}\big)\mathscr{U}_{t}\big(\boldsymbol{\hat{\rho}}_{S_{i,t}}\otimes \boldsymbol{\hat{\rho}}^{E,0}\big)\big(\mathbb{I}\otimes \mathbf{\hat{P}}^{E,t}_{i}\big)\bigg\|_{1} =
\end{equation}
\smaller
\begin{equation}
\frac{1}{2}\Bigg\|\Bigg(\mathbf{\hat{U}}_{t}\Big(\big|\psi_{S_{i,t}}\big\rangle\otimes \big|\psi_{E,0}\big\rangle\Big)\bigg)\bigg(\mathbf{\hat{U}}_{t}\Big(\big|\psi_{S_{i,t}}\big\rangle\otimes \big|\psi_{E,0}\big\rangle\Big)\Bigg)^{\dagger}-
\Bigg(\frac{\mathbb{I}\otimes \mathbf{\hat{P}}^{E,t}_{i}\mathbf{\hat{U}}_{t}\Big(\big|\psi_{S_{i,t}}\big\rangle\otimes \big|\psi_{E,0}\big\rangle\Big)}{\sqrt{\mathscr{N}_{i}(t)}}\Bigg)\Bigg(\frac{\mathbb{I}\otimes \mathbf{\hat{P}}^{E,t}_{i}\mathbf{\hat{U}}_{t}\Big(\big|\psi_{S_{i,t}}\big\rangle\otimes \big|\psi_{E,0}\big\rangle\Big)}{\sqrt{\mathscr{N}_{i}(t)}}\Bigg)^{\dagger}\Bigg\|_{1} =
\end{equation}
\normalsize
\begin{equation}
\sqrt{1-\Bigg|\frac{\big\langle \psi_{S_{i,t}}\big|\otimes \big\langle \psi_{E,0}\big|\mathbf{\hat{U}}^{\dagger}_{t}\Big(\mathbb{I}\otimes \mathbf{\hat{P}}^{E,t}_{i}\Big)\mathbf{\hat{U}}_{t}\big|\psi_{S_{i,t}}\big\rangle\otimes \big|\psi_{E,0}\big\rangle}{\sqrt{\mathscr{N}_{i}(t)}}\Bigg|^{2}} =   
\end{equation}
\begin{equation}
 \sqrt{1-\bigg|\frac{\mathscr{N}_{i}(t)}{\sqrt{\mathscr{N}_{i}(t)}}\bigg|^{2}} =   \sqrt{1-\mathscr{N}_{i}(t)} 
\end{equation}
Recapitulating, we have 
\begin{equation}
\label{first}
\mathscr{N}_{i}(t) = Tr\bigg\{ \mathbf{\hat{P}}^{E,t}_{i}\Lambda_{i,t}\Big(\big|\psi_{E,0}\big\rangle\big\langle \psi_{E,0}\big|\Big)\mathbf{\hat{P}}^{E,t}_{i}\bigg\}
\end{equation}
and
\begin{equation}
\label{second}
\frac{1}{2}\bigg\|\mathscr{U}_{t}\big(\boldsymbol{\hat{\rho}}_{S_{i,t}}\otimes \boldsymbol{\hat{\rho}}^{E,0}\big) -\frac{1}{\mathscr{N}_{i}(t)}\big(\mathbb{I}\otimes \mathbf{\hat{P}}^{E,t}_{i}\big)\mathscr{U}_{t}\big(\boldsymbol{\hat{\rho}}_{S_{i,t}}\otimes \boldsymbol{\hat{\rho}}^{E,0}\big)\big(\mathbb{I}\otimes \mathbf{\hat{P}}^{E,t}_{i}\big)\bigg\|_{1} = \sqrt{1-\mathscr{N}_{i}(t)} 
\end{equation}
An application of Corollary (\ref{eqn:mapsaregoneforthe}), reduces the proof of the Theorem \ref{eqn:themain} to estimating (\ref{mapremoved}). Using (\ref{first}) and (\ref{second}) we get  
\begin{equation}
\frac{1}{2}\sum_{i}\bar{p}_{i}(t)\bigg\|\mathscr{U}_{t}\big(\boldsymbol{\hat{\rho}}_{S_{i,t}}\otimes \boldsymbol{\hat{\rho}}^{E,0}\big) -\big(\mathbb{I}\otimes \mathbf{\hat{P}}^{E,t}_{i}\big)\mathscr{U}_{t}\big(\boldsymbol{\hat{\rho}}_{S_{i,t}}\otimes \boldsymbol{\hat{\rho}}^{E,0}\big)\big(\mathbb{I}\otimes \mathbf{\hat{P}}^{E,t}_{i}\big)\bigg\|_{1}\leq
\end{equation}
\begin{equation}
\frac{1}{2}\sum_{i}\bar{p}_{i}(t)\Bigg[\bigg\|\mathscr{U}_{t}\big(\boldsymbol{\hat{\rho}}_{S_{i,t}}\otimes \boldsymbol{\hat{\rho}}^{E,0}\big) -\frac{1}{\mathscr{N}_{i}(t)}\big(\mathbb{I}\otimes \mathbf{\hat{P}}^{E,t}_{i}\big)\mathscr{U}_{t}\big(\boldsymbol{\hat{\rho}}_{S_{i,t}}\otimes \boldsymbol{\hat{\rho}}^{E,0}\big)\big(\mathbb{I}\otimes \mathbf{\hat{P}}^{E,t}_{i}\big)\bigg\|_{1} + 
\end{equation}
\begin{equation}
\bigg\|\frac{1}{\mathscr{N}_{i}(t)}\big(\mathbb{I}\otimes \mathbf{\hat{P}}^{E,t}_{i}\big)\mathscr{U}_{t}\big(\boldsymbol{\hat{\rho}}_{S_{i,t}}\otimes \boldsymbol{\hat{\rho}}^{E,0}\big)\big(\mathbb{I}\otimes \mathbf{\hat{P}}^{E,t}_{i}\big)-\big(\mathbb{I}\otimes \mathbf{\hat{P}}^{E,t}_{i}\big)\mathscr{U}_{t}\big(\boldsymbol{\hat{\rho}}_{S_{i,t}}\otimes \boldsymbol{\hat{\rho}}^{E,0}\big)\big(\mathbb{I}\otimes \mathbf{\hat{P}}^{E,t}_{i}\big)\bigg\|_{1}\Bigg] =
\end{equation}
\smaller
\begin{equation}
\frac{1}{2}\sum_{i}\bar{p}_{i}(t)\Bigg[\bigg\|\mathscr{U}_{t}\big(\boldsymbol{\hat{\rho}}_{S_{i,t}}\otimes \boldsymbol{\hat{\rho}}^{E,0}\big) -\frac{1}{\mathscr{N}_{i}(t)}\big(\mathbb{I}\otimes \mathbf{\hat{P}}^{E,t}_{i}\big)\mathscr{U}_{t}\big(\boldsymbol{\hat{\rho}}_{S_{i,t}}\otimes \boldsymbol{\hat{\rho}}^{E,0}\big)\big(\mathbb{I}\otimes \mathbf{\hat{P}}^{E,t}_{i}\big)\bigg\|_{1} +  \big|\frac{1}{\mathscr{N}_{i}(t)}-1\big|\bigg\|\big(\mathbb{I}\otimes \mathbf{\hat{P}}^{E,t}_{i}\big)\mathscr{U}_{t}\big(\boldsymbol{\hat{\rho}}_{S_{i,t}}\otimes \boldsymbol{\hat{\rho}}^{E,0}\big)\big(\mathbb{I}\otimes \mathbf{\hat{P}}^{E,t}_{i}\big)\bigg\|_{1}\Bigg]=
\end{equation}
\normalsize
\begin{equation}
\frac{1}{2}\sum_{i}\bar{p}_{i}(t)\Bigg[\bigg\|\mathscr{U}_{t}\big(\boldsymbol{\hat{\rho}}_{S_{i,t}}\otimes \boldsymbol{\hat{\rho}}^{E,0}\big) -\frac{1}{\mathscr{N}_{i}(t)}\big(\mathbb{I}\otimes \mathbf{\hat{P}}^{E,t}_{i}\big)\mathscr{U}_{t}\big(\boldsymbol{\hat{\rho}}_{S_{i,t}}\otimes \boldsymbol{\hat{\rho}}^{E,0}\big)\big(\mathbb{I}\otimes \mathbf{\hat{P}}^{E,t}_{i}\big)\bigg\|_{1}  +  \big|\frac{1}{\mathscr{N}_{i}(t)}-1\big|\mathscr{N}_{i}(t)\Bigg] =  
\end{equation}
\begin{equation}
\label{eqn:blume}
\frac{1}{2}\sum_{i}\bar{p}_{i}(t)\Bigg[2\sqrt{1-\mathscr{N}_{i}(t)}+1-\mathscr{N}_{i}(t)\Bigg]\leq
\sum_{i}\bar{p}_{i}(t)\Bigg[2\sqrt{1-\mathscr{N}_{i}(t)}\Bigg] =
\end{equation}
\begin{equation}
\label{eqn:Jensens}
2\sum_{i}\bar{p}_{i}(t)\sqrt{1-\mathscr{N}_{i}(t)}\leq 2\sqrt{\sum_{i}\bar{p}_{i}(t)\big(1-\mathscr{N}_{i}(t)\big)}=
\end{equation}
\begin{equation}
\label{eqn:postjens}
2\sqrt{\sum_{i}\bar{p}_{i}(t)\bigg(1-Tr\bigg\{ \mathbf{\hat{P}}^{E,t}_{i}\Lambda_{i,t}\Big(\big|\psi_{E,0}\big\rangle\big\langle\psi_{E,0}\big|\Big)\mathbf{\hat{P}}^{E,t}_{i}\bigg\}\bigg)}
\end{equation}
Here we have employed Jensen's inequality for concave functions in going from (\ref{eqn:Jensens}) to (\ref{eqn:postjens}).

By virtue of Corollary \ref{eqn:mapsaregoneforthe} we therefore have
\begin{equation}
\frac{1}{2}\bigg\|\sum_{i}\Big(\mathbf{\hat{P}}_{\Delta_{i,t}}\otimes \mathbb{I}\Big)\boldsymbol{\hat{\rho}}_{t}\Big(\mathbf{\hat{P}}_{\Delta_{i,t}}\otimes \mathbb{I}\Big)- \sum_{i}\Big(\mathbf{\hat{P}}_{\Delta_{i,t}}\otimes \mathbf{\hat{P}}^{E,t}_{i}\Big)\boldsymbol{\hat{\rho}}_{t}\Big(\mathbf{\hat{P}}_{\Delta_{i,t}}\otimes \mathbf{\hat{P}}^{E,t}_{i}\Big)   \bigg\|_{1} \leq 
\end{equation}
\begin{equation}
2\sqrt{\sum_{i}\bar{p}_{i}(t)\bigg(1-Tr\bigg\{ \mathbf{\hat{P}}^{E,t}_{i}\Lambda_{i,t}\Big(\big|\psi_{E,0}\big\rangle\big\langle\psi_{E,0}\big|\Big)\mathbf{\hat{P}}^{E,t}_{i}\bigg\}\bigg)} 
\end{equation}
Finally, a simple application of Lemma \ref{eqn:reversetin} leads to 
\begin{equation}
\label{spaceso}
\frac{1}{2}\bigg\|\sum_{i}\Big(\mathbf{\hat{P}}_{\Delta_{i,t}}\otimes \mathbb{I}\Big)\boldsymbol{\hat{\rho}}_{t}\Big(\mathbf{\hat{P}}_{\Delta_{i,t}}\otimes \mathbb{I}\Big)- \frac{1}{\mathscr{N}(t)}\sum_{i}\Big(\mathbf{\hat{P}}_{\Delta_{i,t}}\otimes \mathbf{\hat{P}}^{E,t}_{i}\Big)\boldsymbol{\hat{\rho}}_{t}\Big(\mathbf{\hat{P}}_{\Delta_{i,t}}\otimes \mathbf{\hat{P}}^{E,t}_{i}\Big)   \bigg\|_{1} \leq 
\end{equation}
\begin{equation}
\label{spacesod}
4\sqrt{\sum_{i}\bar{p}_{i}(t)\bigg(1-Tr\bigg\{ \mathbf{\hat{P}}^{E,t}_{i}\Lambda_{i,t}\Big(\big|\psi_{E,0}\big\rangle\big\langle\psi_{E,0}\big|\Big)\mathbf{\hat{P}}^{E,t}_{i}\bigg\}\bigg)} 
\end{equation}
Taking the minimum over all PVM acting on the environmental degrees of freedom on both sides of inequality (\ref{spaceso})-(\ref{spacesod}) we get the result we set out to prove. 
\end{proof}

For the cases where instead of $\boldsymbol{\hat{\rho}}^{E,0}=\big|\psi_{E,0}\big\rangle\big\langle \psi_{E,0}\big|$ we have a mixed state $\boldsymbol{\hat{\rho}}^{E,0} = \sum_{j}\mu_{j}\big|\psi_{j,0}\big\rangle\big\langle \psi_{j,0}\big|$ it can easily be shown that Theorem \ref{eqn:themain} still holds; replacing $\big|\psi_{E,0}\big\rangle\big\langle \psi_{E,0}\big|$ with $\sum_{j}\mu_{j}\big|\psi_{j,0}\big\rangle\big\langle \psi_{j,0}\big|$. The key things to notice are that we may pull out the sum over $j$ and the $\mu_{j}$ out of the norm in (\ref{eqn:non-pureb4}) before doing any estimates using triangle inequality. Ignoring the $\mu_{j}$ and the sum over $j$ we can arrive the same result (\ref{eqn:blume}) now for a state $\big|\psi_{j,0}\big\rangle\big\langle \psi_{j,0}\big|$ rather than the state $\big|\psi_{E,0}\big\rangle\big\langle \psi_{E,0}\big|$. Namely,
\begin{equation}
\label{eqn:jenjen}
2\sum_{j}\mu_{j}\sum_{i}\bar{p}_{i}(t)\sqrt{1-\mathscr{N}_{i,j}(t)} = 2\sum_{j}\sum_{i}\mu_{j}\bar{p}_{i}(t)\sqrt{1-\mathscr{N}_{i,j}(t)}
\end{equation}
where now 
\begin{equation}
\mathscr{N}_{i,j}(t) = Tr\bigg\{ \mathbf{\hat{P}}^{E,t}_{i}\Lambda_{i,t}\Big(\big|\psi_{j,0}\big\rangle\big\langle \psi_{j,0}\big|\Big)\mathbf{\hat{P}}^{E,t}_{i}\bigg\}
\end{equation}
However, the $\mu_{j}\bar{p}_{i}(t)$ form a probability distribution, hence we may apply Jensen's inequality for concave functions to the right-hand side of (\ref{eqn:jenjen}). 

\begin{equation}
2\sum_{j}\sum_{i}\mu_{j}\bar{p}_{i}(t)\sqrt{1-\mathscr{N}_{i,j}(t)}\leq 2\sqrt{\sum_{j}\sum_{i}\mu_{j}\bar{p}_{i}(t)\big(1-\mathscr{N}_{i,j}(t)\big)} = 2\sqrt{\sum_{i}\bar{p}_{i}(t)\big(1-\sum_{j}\mu_{j}\mathscr{N}_{i,j}(t)\big)}=
\end{equation}
\begin{equation}
2\sqrt{\sum_{i}\bar{p}_{i}(t)\big(1- Tr\bigg\{ \mathbf{\hat{P}}^{E,t}_{i}\Lambda_{i,t}\Big(\sum_{j}\mu_{j}\big|\psi_{j,0}\big\rangle\big\langle \psi_{j,0}\big|\Big)\mathbf{\hat{P}}^{E,t}_{i}\bigg\}\big)}= 
\end{equation}

\begin{equation}
2\sqrt{\sum_{i}\bar{p}_{i}(t)\big(1- Tr\bigg\{ \mathbf{\hat{P}}^{E,t}_{i}\Lambda_{i,t}\Big(\boldsymbol{\hat{\rho}}^{E,0}\Big)\mathbf{\hat{P}}^{E,t}_{i}\bigg\}\big)} =2\sqrt{\sum_{i}\bar{p}_{i}(t)\big(1-\mathscr{N}_{i}(t)\big)}    
\end{equation}
and so we have our result. We present the latter below formally as a corollary.  
\begin{Co}[\textbf{Estimating the Diagonal Terms} (\ref{eqn:uzbek8}), with $\boldsymbol{\hat{\rho}}^{E,0}$ a mixture.]
\label{eqn:joy2}
\smaller
\begin{equation}
\min_{PVM}\frac{1}{2}\bigg\|\sum_{i}\Big(\mathbf{\hat{P}}_{\Delta_{i,t}}\otimes \mathbb{I}\Big)\boldsymbol{\hat{\rho}}_{t}\Big(\mathbf{\hat{P}}_{\Delta_{i,t}}\otimes \mathbb{I}\Big)- \frac{1}{\mathscr{N}(t)}\sum_{i}\Big(\mathbf{\hat{P}}_{\Delta_{i,t}}\otimes \mathbf{\hat{P}}^{E,t}_{i}\Big)\boldsymbol{\hat{\rho}}_{t}\Big(\mathbf{\hat{P}}_{\Delta_{i,t}}\otimes \mathbf{\hat{P}}^{E,t}_{i}\Big)   \bigg\|_{1} \leq 
\end{equation}
\begin{equation}
\label{eqn:non-pureb4}
4\min_{PVM}\sqrt{\sum_{i}\bar{p}_{i}(t)\bigg(1-Tr\bigg\{ \mathbf{\hat{P}}^{E,t}_{i}\Lambda_{t,i}\Big(\boldsymbol{\hat{\rho}}^{E,0}\Big)\mathbf{\hat{P}}^{E,t}_{i}\bigg\}\bigg)}
\end{equation}
\normalsize
Here we have defined 
\smaller
$\Lambda_{i,t}$ as follows.  
\begin{equation}
\label{eqn:joy3}
\Lambda_{i,t}\big(\boldsymbol{\hat{\rho}}\big):=\int  |\psi_{S_{i,t}}(x)|^{2}\Big(e^{-it\gamma x\mathbf{\hat{B}}}\boldsymbol{\hat{\rho}}e^{it\gamma x\mathbf{\hat{B}}}\Big)dx  
\end{equation}
\normalsize
The rest of the notation has already been presented in Lemma \ref{eqn:rewrittingthingstask}.
\end{Co}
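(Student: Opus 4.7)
The plan is to reduce the mixed-state case to the pure-state case already handled by Theorem \ref{eqn:themain}, exploiting linearity of the map $\Lambda_{i,t}$ in its input state and Jensen's inequality to recombine terms. Concretely, I would write $\boldsymbol{\hat{\rho}}^{E,0} = \sum_{j}\mu_{j}\big|\psi_{j,0}\big\rangle\big\langle\psi_{j,0}\big|$, substitute this decomposition into the density operator $\boldsymbol{\hat{\rho}}_{t}$ appearing inside the trace distance on the left-hand side, and pull the sum over $j$ and the weights $\mu_{j}$ outside the trace norm via the triangle inequality. After also invoking Corollary \ref{eqn:mapsaregoneforthe} (which is insensitive to whether the environment state is pure or mixed), this reduces the problem to estimating a sum of pure-state trace distances, one for each $j$, of the form already handled in the proof of Theorem \ref{eqn:themain}.

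Next, I would repeat the chain of estimates from the proof of Theorem \ref{eqn:themain} up through the step analogous to \eqref{eqn:blume}, but with $\big|\psi_{E,0}\big\rangle$ replaced by $\big|\psi_{j,0}\big\rangle$ throughout. This yields the bound
\begin{equation}
2\sum_{j}\sum_{i}\mu_{j}\bar{p}_{i}(t)\sqrt{1-\mathscr{N}_{i,j}(t)}
\end{equation}
where $\mathscr{N}_{i,j}(t) := Tr\big\{\mathbf{\hat{P}}^{E,t}_{i}\Lambda_{i,t}\big(\big|\psi_{j,0}\big\rangle\big\langle\psi_{j,0}\big|\big)\mathbf{\hat{P}}^{E,t}_{i}\big\}$. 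The critical observation is that the collection $\{\mu_{j}\bar{p}_{i}(t)\}_{i,j}$ is itself a probability distribution (since $\sum_{j}\mu_{j}=1$ and $\sum_{i}\bar{p}_{i}(t)=1$), so Jensen's inequality applied to the concave function $\sqrt{\,\cdot\,}$ pulls the double sum inside the square root.

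The final step is to use linearity of $\Lambda_{i,t}$ and the cyclicity of the trace to recombine $\sum_{j}\mu_{j}\mathscr{N}_{i,j}(t) = Tr\big\{\mathbf{\hat{P}}^{E,t}_{i}\Lambda_{i,t}\big(\boldsymbol{\hat{\rho}}^{E,0}\big)\mathbf{\hat{P}}^{E,t}_{i}\big\} = \mathscr{N}_{i}(t)$, producing exactly the bound stated in \eqref{eqn:non-pureb4}. A closing application of Lemma \ref{eqn:reversetin} upgrades this to the normalized version with the $1/\mathscr{N}(t)$ factor, as in the final step of the proof of Theorem \ref{eqn:themain}, and taking the minimum over PVM on both sides completes the argument.

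I do not expect any serious obstacle: the proof is essentially a linearity-plus-Jensen wrapper around Theorem \ref{eqn:themain}. The only point requiring mild care is verifying that the step $\tfrac{1}{2}\big\||\psi\rangle\langle\psi| - |\psi\rangle\langle\phi|\big\|_{1} = \sqrt{1-|\langle\phi|\psi\rangle|^{2}}$ used in Theorem \ref{eqn:themain} really does apply to each fixed $j$ component (it does, because for each fixed $j$ the inner state $\mathscr{U}_{t}\big(\boldsymbol{\hat{\rho}}_{S_{i,t}}\otimes\big|\psi_{j,0}\big\rangle\big\langle\psi_{j,0}\big|\big)$ is still pure), and that the probability-distribution structure of $\{\mu_{j}\bar{p}_{i}(t)\}_{i,j}$ justifies a single application of Jensen rather than two sequential ones giving a weaker bound.
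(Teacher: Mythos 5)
Your proposal is correct and follows essentially the same route as the paper: decompose $\boldsymbol{\hat{\rho}}^{E,0}=\sum_{j}\mu_{j}\big|\psi_{j,0}\big\rangle\big\langle\psi_{j,0}\big|$, pull the $j$-sum and weights out of the trace norm by the triangle inequality, run the pure-state chain of Theorem \ref{eqn:themain} for each $j$ to reach $2\sum_{i,j}\mu_{j}\bar{p}_{i}(t)\sqrt{1-\mathscr{N}_{i,j}(t)}$, apply Jensen's inequality once to the joint distribution $\{\mu_{j}\bar{p}_{i}(t)\}_{i,j}$, and recombine $\sum_{j}\mu_{j}\mathscr{N}_{i,j}(t)=\mathscr{N}_{i}(t)$ by linearity of $\Lambda_{i,t}$ before closing with Lemma \ref{eqn:reversetin} and the minimization over PVMs. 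This matches the paper's argument in both structure and detail.
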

\normalsize
Using similar linearity and concavity arguments we may extend to further generalities for the case where $\boldsymbol{\hat{\rho}}_{S_{0}}$ is a mixed state. In such a case Corrolary \ref{eqn:joy2} remains virtually unchanged except for the fact that the quantum map $\Lambda_{i,t}$ (\ref{eqn:joy3}) is now replaced by one which involves a sum $\sum_{n}q_{n}|\psi_{S^{n}_{i,t}}(x)|^{2}$ (see \ref{eqn:joy4}) in the integrand instead of a single term. We will not present the latter formally as a result since the remainder of the paper is developed for a pure state $\boldsymbol{\hat{\rho}}_{S_{0}}$.

\section{Gneralizing to $N_{E}$ Environments}
As mentioned earlier, in the discussion spanning equations (\ref{eqn:intham2}) through (\ref{eqn:evo12}), Theorem \ref{eqn:themain} may be easily generalized to the setting where $N_{E}$ environments are present and $M_{E}$ environments have been traced out. We present this without proof since the steps are analogous to all of the steps involved in proving Theorem \ref{eqn:themain} and the generalization Corrolary \ref{eqn:joy2}.

\begin{definition4}[\textbf{Estimating the Diagonal Terms} (\ref{eqn:uzbek8}) \textbf{for} $N_{E}$ \textbf{Environments}]
\label{eqn:themain2}
\smaller
\begin{equation}
\min_{PVM}\frac{1}{2}\bigg\|\sum_{i}\Big(\mathbf{\hat{P}}_{\Delta_{i,t}}\otimes \mathbb{I}\Big)\boldsymbol{\hat{\rho}}_{t}\Big(\mathbf{\hat{P}}_{\Delta_{i,t}}\otimes \mathbb{I}\Big)- \frac{1}{\mathscr{N}(t)}\sum_{i}\Big(\mathbf{\hat{P}}_{\Delta_{i,t}}\otimes\bigotimes_{k=1}^{N_{E}} \mathbf{\hat{P}}^{E^{k},t}_{i}\Big)\boldsymbol{\hat{\rho}}_{t}\Big(\mathbf{\hat{P}}_{\Delta_{i,t}}\otimes\bigotimes_{k=1}^{N_{E}} \mathbf{\hat{P}}^{E^{k},t}_{i}\Big)   \bigg\|_{1} \leq 
\end{equation}
\begin{equation}
\label{eqn:non-pureb}
\min_{PVM}4\sqrt{\sum_{i}\bar{p}_{i}(t)\Bigg(1-Tr\Bigg\{ \bigotimes_{k=1}^{N_{E}}\mathbf{\hat{P}}^{E^{k},t}_{i}\Lambda_{i,t}\Bigg(\bigotimes_{k=1}^{N_{E}}\boldsymbol{\hat{\rho}}^{E^{k},0}\Bigg)\bigotimes_{k=1}^{N_{E}}\mathbf{\hat{P}}^{E^{k},t}_{i}\Bigg\}\Bigg)}
\end{equation}
Here, we have defined $\Lambda_{i,t}$ as follows.  
\begin{equation}
\Lambda_{i,t}\big(\boldsymbol{\hat{\rho}}\big):=\int  |\psi_{S_{i,t}}(x)|^{2}\Big(e^{-it x\sum_{k=1}^{N_{E}}g_{k}\mathbf{\hat{B}}_{k}}\boldsymbol{\hat{\rho}}e^{it x\sum_{k=1}^{N_{E}}g_{k}\mathbf{\hat{B}}_{k}}\Big)dx  
\end{equation}
Each $\mathbf{\hat{B}}_{k}$ acts on the respective Hilbert space. 
\normalsize
\end{definition4}

In the special case where all of the $\boldsymbol{\hat{\rho}}^{E^{k},0}$ are identical and pure,  and all of the $g_{k}\mathbf{\hat{B}}_{k}$ are identical, Theorem \ref{eqn:themain2} takes the following simpler form.   
\begin{Co}[\textbf{Estimating the Diagonal Terms} (\ref{eqn:uzbek8}) \textbf{for} $N_{E}$ \textbf{identical Environments with identical} $\mathbf{\hat{B}}_{k}$ \textbf{and identical} $g_{k}$]
\label{eqn:themain3}
\smaller
\begin{equation}
\min_{PVM}\frac{1}{2}\bigg\|\sum_{i}\Big(\mathbf{\hat{P}}_{\Delta_{i,t}}\otimes \mathbb{I}\Big)\boldsymbol{\hat{\rho}}_{t}\Big(\mathbf{\hat{P}}_{\Delta_{i,t}}\otimes \mathbb{I}\Big)- \frac{1}{\mathscr{N}(t)}\sum_{i}\Big(\mathbf{\hat{P}}_{\Delta_{i,t}}\otimes\bigotimes_{k=1}^{N_{E}} \mathbf{\hat{P}}^{E^{k},t}_{i}\Big)\boldsymbol{\hat{\rho}}_{t}\Big(\mathbf{\hat{P}}_{\Delta_{i,t}}\otimes\bigotimes_{k=1}^{N_{E}} \mathbf{\hat{P}}^{E^{k},t}_{i}\Big)   \bigg\|_{1} \leq 
\end{equation}
\begin{equation}
\label{eqn:non-pureb2}
\min_{PVM}4\sqrt{\sum_{i}\bar{p}_{i}(t)\Bigg(1-  \int|\psi_{S_{i,t}}(x)|^{2} \big\langle\psi_{E,t}(x)\big|\mathbf{\hat{P}}^{E,t}_{i}\big|\psi_{E,t}(x)\big\rangle^{N_{E}}dx\Bigg)}
\end{equation}
where
\begin{equation}
\big|\psi_{E,t}(x)\big\rangle:=e^{-itxg\mathbf{\hat{B}}}\big|\psi_{E,0}\big\rangle     
\end{equation}
\normalsize
\end{Co}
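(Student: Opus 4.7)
The plan is to derive Corollary \ref{eqn:themain3} as a straightforward specialisation of Theorem \ref{eqn:themain2} by exploiting the two extra hypotheses: every environment begins in the same pure state $\boldsymbol{\hat{\rho}}^{E^{k},0}=\big|\psi_{E,0}\big\rangle\big\langle\psi_{E,0}\big|$, and every coupling coincides, $g_{k}\mathbf{\hat{B}}_{k}=g\mathbf{\hat{B}}$, with $\mathbf{\hat{B}}$ acting on the respective $\mathscr{H}_{E^{k}}$. Under these symmetry assumptions, both the environmental unitary and the joint initial state factorise across the $N_{E}$ tensor factors, and this factorisation is exactly what will turn the trace appearing in Theorem \ref{eqn:themain2} into an $N_{E}$-th power.

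The first step is to observe that the generators $g_{k}\mathbf{\hat{B}}_{k}=g\,\mathbb{I}\otimes\cdots\otimes\mathbf{\hat{B}}\otimes\cdots\otimes\mathbb{I}$ act on distinct tensor factors and therefore pairwise commute, so the exponential appearing inside the map $\Lambda_{i,t}$ of Theorem \ref{eqn:themain2} splits as
\begin{equation}
e^{-itx\sum_{k=1}^{N_{E}}g_{k}\mathbf{\hat{B}}_{k}} \;=\; \bigotimes_{k=1}^{N_{E}}e^{-itxg\mathbf{\hat{B}}}.
\end{equation}
Applied to $\bigotimes_{k=1}^{N_{E}}\big|\psi_{E,0}\big\rangle$, this yields $\bigotimes_{k=1}^{N_{E}}\big|\psi_{E,t}(x)\big\rangle$, and consequently
\begin{equation}
\Lambda_{i,t}\Bigg(\bigotimes_{k=1}^{N_{E}}\big|\psi_{E,0}\big\rangle\big\langle\psi_{E,0}\big|\Bigg) \;=\; \int|\psi_{S_{i,t}}(x)|^{2}\bigotimes_{k=1}^{N_{E}}\big|\psi_{E,t}(x)\big\rangle\big\langle\psi_{E,t}(x)\big|\,dx.
\end{equation}

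Second, I will restrict the minimum in Theorem \ref{eqn:themain2} to the subclass of PVMs that are symmetric across environments, $\mathbf{\hat{P}}^{E^{k},t}_{i}=\mathbf{\hat{P}}^{E,t}_{i}$ for every $k$; this can only increase the minimum and hence still provides a valid upper bound. For such symmetric PVMs the conjugation $\bigotimes_{k}\mathbf{\hat{P}}^{E,t}_{i}\,(\cdot)\,\bigotimes_{k}\mathbf{\hat{P}}^{E,t}_{i}$ acts factor-by-factor, and the trace of a tensor product is the product of traces. Combining these two observations with the previous display produces
\begin{equation}
Tr\Bigg\{\bigotimes_{k=1}^{N_{E}}\mathbf{\hat{P}}^{E,t}_{i}\,\Lambda_{i,t}\Bigg(\bigotimes_{k=1}^{N_{E}}\boldsymbol{\hat{\rho}}^{E^{k},0}\Bigg)\bigotimes_{k=1}^{N_{E}}\mathbf{\hat{P}}^{E,t}_{i}\Bigg\} \;=\; \int|\psi_{S_{i,t}}(x)|^{2}\big\langle\psi_{E,t}(x)\big|\mathbf{\hat{P}}^{E,t}_{i}\big|\psi_{E,t}(x)\big\rangle^{N_{E}}dx,
\end{equation}
which is exactly the quantity appearing in the target bound. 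Substituting this identity into the right-hand side of Theorem \ref{eqn:themain2} and then taking the minimum over a single environmental PVM yields the stated inequality.

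The argument is essentially pure bookkeeping, with no new estimates beyond those already used in Theorem \ref{eqn:themain2}. The only conceptual point meriting care is the interpretation of $\min_{PVM}$ on the right-hand side as a minimum over symmetric PVMs, which gives an upper bound on the unconstrained minimum over all $N_{E}$-tuples of PVMs appearing on the left-hand side. This restriction is natural in light of the symmetry of the model (identical environments, identical couplings), and it is what allows the product structure $\langle\psi_{E,t}(x)|\mathbf{\hat{P}}^{E,t}_{i}|\psi_{E,t}(x)\rangle^{N_{E}}$ to emerge; apart from this interpretational remark, I do not anticipate any substantive technical obstacle.
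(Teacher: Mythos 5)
Your proposal is correct and matches the paper's intent: the paper states Corollary \ref{eqn:themain3} as an immediate specialisation of Theorem \ref{eqn:themain2} with no written proof, and your argument simply makes explicit the factorisation of $e^{-itx\sum_{k}g_{k}\mathbf{\hat{B}}_{k}}$ across tensor factors, the resulting product form of $\Lambda_{i,t}\big(\bigotimes_{k}|\psi_{E,0}\rangle\langle\psi_{E,0}|\big)$, and the multiplicativity of the trace over tensor factors that yields the $N_{E}$-th power. Your remark that restricting to symmetric PVMs still yields a valid upper bound on the unconstrained minimum is a correct and worthwhile clarification of a point the paper leaves implicit.
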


Theorems \ref{eqn:themain} and \ref{eqn:themain2}, and Corollaries \ref{eqn:joy2} and  \ref{eqn:themain3} are tools that may aid in the estimation of the diagonal term (\ref{eqn:uzbek8}). Their drawback is that they require one to find an approximately optimal PVM acting on the environmental degrees of freedom.  This is in contrast to the discrete variables case, where a PVM-independent bound exists (see Theorem 4 of \cite{AA}). It is important to note that, since the density operators $\Lambda_{t,i}\big(\boldsymbol{\hat{\rho}}^{E,0}\big)$ are not pure,  Theorem  4 of \cite{AA} cannot be applied in this case.  Nevertheless, it may be applicable in an approximate manner when the $\Lambda_{i,t}\big(\boldsymbol{\hat{\rho}}^{E,0}\big)$ are approximately pure. To see this, let us consider the term under the square root of (\ref{eqn:non-pureb}). Defining $\boldsymbol{\hat{\rho}}^{E,t}_{x_{i}}:= e^{-it\gamma x_{i}\mathbf{\hat{B}}}\boldsymbol{\hat{\rho}}^{E,0}e^{it\gamma x_{i}\mathbf{\hat{B}}} $, where $x_{i}:= \int x|\psi_{S_{i,t}}(x)|^{2}dx$, we have 
\begin{equation}
1- Tr\big\{ \mathbf{\hat{P}}^{E,t}_{i}\Lambda_{i,t}\big(\boldsymbol{\hat{\rho}}^{E,0}\big)\mathbf{\hat{P}}^{E,t}_{i} \big\} \leq
\end{equation}
\begin{equation}
\bigg\|\Lambda_{i,t}\big(\boldsymbol{\hat{\rho}}^{E,0}\big)-\mathbf{\hat{P}}^{E,t}_{i}\Lambda_{i,t}\big(\boldsymbol{\hat{\rho}}^{E,0}\big)\mathbf{\hat{P}}^{E,t}_{i}\bigg\|_{1} = 
\end{equation}
\begin{equation}
\bigg\|\Lambda_{i,t}\big(\boldsymbol{\hat{\rho}}^{E,0}\big)-\boldsymbol{\hat{\rho}}^{E,t}_{x_{i}}+\boldsymbol{\hat{\rho}}^{E,t}_{x_{i}}-\mathbf{\hat{P}}^{E,t}_{i}\boldsymbol{\hat{\rho}}^{E,t}_{x_{i}}\mathbf{\hat{P}}^{E,t}_{i}+\mathbf{\hat{P}}^{E,t}_{i}\boldsymbol{\hat{\rho}}^{E,t}_{x_{i}}\mathbf{\hat{P}}^{E,t}_{i}-\mathbf{\hat{P}}^{E,t}_{i}\Lambda_{i,t}\big(\boldsymbol{\hat{\rho}}^{E,0}\big)\mathbf{\hat{P}}^{E,t}_{i}\bigg\|_{1}\leq
\end{equation}
\begin{equation}
\bigg\|\Lambda_{i,t}\big(\boldsymbol{\hat{\rho}}^{E,0}\big)-\boldsymbol{\hat{\rho}}^{E,t}_{x_{i}}\bigg\|_{1}+\bigg\|\boldsymbol{\hat{\rho}}^{E,t}_{x_{i}}-\mathbf{\hat{P}}^{E,t}_{i}\boldsymbol{\hat{\rho}}^{E,t}_{x_{i}}\mathbf{\hat{P}}^{E,t}_{i}\bigg\|_{1}+\bigg\|\mathbf{\hat{P}}^{E,t}_{i}\boldsymbol{\hat{\rho}}^{E,t}_{x_{i}}\mathbf{\hat{P}}^{E,t}_{i}-\mathbf{\hat{P}}^{E,t}_{i}\Lambda_{i,t}\big(\boldsymbol{\hat{\rho}}^{E,0}\big)\mathbf{\hat{P}}^{E,t}_{i}\bigg\|_{1}\leq
\end{equation}
\begin{equation}
\bigg\|\Lambda_{i,t}\big(\boldsymbol{\hat{\rho}}^{E,0}\big)-\boldsymbol{\hat{\rho}}^{E,t}_{x_{i}}\bigg\|_{1}+\bigg\|\boldsymbol{\hat{\rho}}^{E,t}_{x_{i}}-\mathbf{\hat{P}}^{E,t}_{i}\boldsymbol{\hat{\rho}}^{E,t}_{x_{i}}\mathbf{\hat{P}}^{E,t}_{i}\bigg\|_{1}+\bigg\|\Lambda_{i,t}\big(\boldsymbol{\hat{\rho}}^{E,0}\big)-\boldsymbol{\hat{\rho}}^{E,t}_{x_{i}}\bigg\|_{1}= 
\end{equation}
\begin{equation}
2\bigg\|\Lambda_{i,t}\big(\boldsymbol{\hat{\rho}}^{E,0}\big)-\boldsymbol{\hat{\rho}}^{E,t}_{x_{i}}\bigg\|_{1}+\bigg\|\boldsymbol{\hat{\rho}}^{E,t}_{x_{i}}-\mathbf{\hat{P}}^{E,t}_{i}\boldsymbol{\hat{\rho}}^{E,t}_{x_{i}}\mathbf{\hat{P}}^{E,t}_{i}\bigg\|_{1} 
\end{equation}
This allows to estimate (\ref{eqn:non-pureb}) as follows:
\begin{equation}
4\min_{PVM} \sqrt{\sum_{i}\bar{p}_{i}(t)\bigg(1- Tr\big\{ \mathbf{\hat{P}}^{E,t}_{i}\Lambda_{i,t}\big(\boldsymbol{\hat{\rho}}^{E,0}\big)\mathbf{\hat{P}}^{E,t}_{i} \big\}\bigg)} \leq
\end{equation}
\begin{equation}
4\min_{PVM}\sqrt{\sum_{i}\bar{p}_{i}(t)\bigg(2\bigg\|\Lambda_{i,t}\big(\boldsymbol{\hat{\rho}}^{E,0}\big)-\boldsymbol{\hat{\rho}}^{E,t}_{x_{i}}\bigg\|_{1}+\bigg\|\boldsymbol{\hat{\rho}}^{E,t}_{x_{i}}-\mathbf{\hat{P}}^{E,t}_{i}\boldsymbol{\hat{\rho}}^{E,t}_{x_{i}}\mathbf{\hat{P}}^{E,t}_{i}\bigg\|_{1}\bigg)}\leq
\end{equation}
\begin{equation}
4\sqrt{2\sum_{i}\bar{p}_{i}(t)\bigg\|\Lambda_{i,t}\big(\boldsymbol{\hat{\rho}}^{E,0}\big)-\boldsymbol{\hat{\rho}}^{E,t}_{x_{i}}\bigg\|_{1}}+4\min_{PVM}\sqrt{\sum_{i}\bar{p}_{i}(t)\bigg\|\boldsymbol{\hat{\rho}}^{E,t}_{x_{i}}-\mathbf{\hat{P}}^{E,t}_{i}\boldsymbol{\hat{\rho}}^{E,t}_{x_{i}}\mathbf{\hat{P}}^{E,t}_{i}\bigg\|_{1}}
\end{equation}
We state the last result as a lemma. 
\begin{definition2}[\textbf{Diagonal terms 
of the SBS problem for continuous variables, further estimates}]
\label{eqn:diagterms}
\begin{equation}
\min_{PVM}\frac{1}{2}\bigg\|\sum_{i}\Big(\mathbf{\hat{P}}_{\Delta_{i,t}}\otimes \mathbb{I}\Big)\boldsymbol{\hat{\rho}}_{t}\Big(\mathbf{\hat{P}}_{\Delta_{i,t}}\otimes \mathbb{I}\Big)- \frac{1}{\mathscr{N}(t)}\sum_{i}\Big(\mathbf{\hat{P}}_{\Delta_{i,t}}\otimes \mathbf{\hat{P}}^{E,t}_{i}\Big)\boldsymbol{\hat{\rho}}_{t}\Big(\mathbf{\hat{P}}_{\Delta_{i,t}}\otimes \mathbf{\hat{P}}^{E,t}_{i}\Big)   \bigg\|_{1} \leq
\end{equation}
\begin{equation}
4\sqrt{2\sum_{i}\bar{p}_{i}(t)\bigg\|\Lambda_{i,t}\big(\boldsymbol{\hat{\rho}}^{E,0}\big)-\boldsymbol{\hat{\rho}}^{E,t}_{x_{i}}\bigg\|_{1}}+4\min_{PVM}\sqrt{\sum_{i}\bar{p}_{i}(t)\bigg\|\boldsymbol{\hat{\rho}}^{E,t}_{x_{i}}-\mathbf{\hat{P}}^{E,t}_{i}\boldsymbol{\hat{\rho}}^{E,t}_{x_{i}}\mathbf{\hat{P}}^{E,t}_{i}\bigg\|_{1}}
\end{equation}
\end{definition2}
This can be easily extended to the case where we have more than one environmental degree of freedom. In such a case, Lemma \ref{eqn:diagterms} becomes the following.  

\begin{Co}[\textbf{Diagonal terms for continuous variables} in the case of $N_{E}$ \textbf{environments; further estimates}]
\label{eqn:tensdiagg}
\begin{equation}
\min_{PVM}\frac{1}{2}\bigg\|\sum_{i}\Big(\mathbf{\hat{P}}_{\Delta_{i,t}}\otimes \mathbb{I}\Big)\boldsymbol{\hat{\rho}}_{t}\Big(\mathbf{\hat{P}}_{\Delta_{i,t}}\otimes \mathbb{I}\Big)- \frac{1}{\mathscr{N}(t)}\sum_{i}\Big(\mathbf{\hat{P}}_{\Delta_{i,t}}\otimes\bigotimes_{k=1}^{N_{E}} \mathbf{\hat{P}}^{E^{k},{t}}_{i}\Big)\boldsymbol{\hat{\rho}}_{t}\Big(\mathbf{\hat{P}}_{\Delta_{i,t}}\otimes\bigotimes_{k=1}^{N_{E}} \mathbf{\hat{P}}^{E^{k},{t}}_{i}\Big) \bigg\|_{1}\leq
\end{equation}
\begin{equation}
4\sqrt{2\sum_{i}\bar{p}_{i}(t)\Bigg\|\Lambda_{i,t}\Bigg(\bigotimes_{k=1}^{N_{E}}\boldsymbol{\hat{\rho}}^{E^{k},0}\Bigg)-\bigotimes_{k=1}^{N_{E}}\boldsymbol{\hat{\rho}}^{E^{k},t}_{x_{i}}\Bigg\|_{1}}+4\min_{PVM} \sqrt{\sum_{i}\bar{p}_{i}(t)\Bigg\|\bigotimes_{k=1}^{N_{E}}\boldsymbol{\hat{\rho}}^{E^{k},t}_{x_{i}}-\bigotimes_{k=1}^{N_{E}}\mathbf{\hat{P}}^{E^{k},t}_{i}\boldsymbol{\hat{\rho}}^{E^{k},t}_{x_{i}}\mathbf{\hat{P}}^{E^{k},t}_{i}\Bigg\|_{1}}
\end{equation}
\end{Co}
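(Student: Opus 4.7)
The plan is to carry out a verbatim lifting of the argument behind Lemma \ref{eqn:diagterms} to the multi-environment setting, using Theorem \ref{eqn:themain2} as the starting bound instead of Theorem \ref{eqn:themain}. Concretely, Theorem \ref{eqn:themain2} already upper-bounds the minimized trace distance on the left-hand side by
\begin{equation*}
4\min_{PVM}\sqrt{\sum_{i}\bar{p}_{i}(t)\Bigl(1-\mathrm{Tr}\Bigl\{\bigotimes_{k=1}^{N_{E}}\mathbf{\hat{P}}^{E^{k},t}_{i}\,\Lambda_{i,t}\Bigl(\bigotimes_{k=1}^{N_{E}}\boldsymbol{\hat{\rho}}^{E^{k},0}\Bigr)\bigotimes_{k=1}^{N_{E}}\mathbf{\hat{P}}^{E^{k},t}_{i}\Bigr\}\Bigr)},
\end{equation*}
so the only thing left to do is control the quantity under the square root by the two more transparent terms that appear in the corollary.

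The first step is the pointwise estimate, for each $i$, of $1-\mathrm{Tr}\{\bigotimes_{k}\mathbf{\hat{P}}^{E^{k},t}_{i}\,\Lambda_{i,t}(\bigotimes_{k}\boldsymbol{\hat{\rho}}^{E^{k},0})\bigotimes_{k}\mathbf{\hat{P}}^{E^{k},t}_{i}\}$. Exactly as in Lemma \ref{eqn:diagterms}, I would use $1-\mathrm{Tr}\{\boldsymbol{\hat{\sigma}}'\}\leq\|\boldsymbol{\hat{\sigma}}-\boldsymbol{\hat{\sigma}}'\|_{1}$ whenever $\mathrm{Tr}\{\boldsymbol{\hat{\sigma}}\}=1$, take $\boldsymbol{\hat{\sigma}}=\Lambda_{i,t}(\bigotimes_{k}\boldsymbol{\hat{\rho}}^{E^{k},0})$ and $\boldsymbol{\hat{\sigma}}'=\bigotimes_{k}\mathbf{\hat{P}}^{E^{k},t}_{i}\boldsymbol{\hat{\sigma}}\bigotimes_{k}\mathbf{\hat{P}}^{E^{k},t}_{i}$, and then insert $\pm\bigotimes_{k}\boldsymbol{\hat{\rho}}^{E^{k},t}_{x_{i}}$ and $\pm\bigotimes_{k}\mathbf{\hat{P}}^{E^{k},t}_{i}\boldsymbol{\hat{\rho}}^{E^{k},t}_{x_{i}}\bigotimes_{k}\mathbf{\hat{P}}^{E^{k},t}_{i}$ inside the trace norm. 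The triangle inequality, together with the contractivity $\|\mathbf{\hat{P}}\mathbf{\hat{A}}\mathbf{\hat{P}}\|_{1}\leq\|\mathbf{\hat{A}}\|_{1}$ for a projector $\mathbf{\hat{P}}$, will collapse the resulting four terms into
\begin{equation*}
2\,\Bigl\|\Lambda_{i,t}\Bigl(\bigotimes_{k}\boldsymbol{\hat{\rho}}^{E^{k},0}\Bigr)-\bigotimes_{k}\boldsymbol{\hat{\rho}}^{E^{k},t}_{x_{i}}\Bigr\|_{1}+\Bigl\|\bigotimes_{k}\boldsymbol{\hat{\rho}}^{E^{k},t}_{x_{i}}-\bigotimes_{k}\mathbf{\hat{P}}^{E^{k},t}_{i}\boldsymbol{\hat{\rho}}^{E^{k},t}_{x_{i}}\bigotimes_{k}\mathbf{\hat{P}}^{E^{k},t}_{i}\Bigr\|_{1},
\end{equation*}
which is the multi-environment analogue of the estimate proved in the text just before Lemma \ref{eqn:diagterms}.

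Next I would substitute this back, multiply by $\bar{p}_{i}(t)$, sum over $i$, take the square root and apply $\sqrt{a+b}\leq\sqrt{a}+\sqrt{b}$ to split the expression into the two square roots that appear in the statement. Finally, I would take the $\min_{PVM}$ over PVMs on the joint environmental Hilbert space: the first summand depends only on $\Lambda_{i,t}(\bigotimes_{k}\boldsymbol{\hat{\rho}}^{E^{k},0})$ and $\bigotimes_{k}\boldsymbol{\hat{\rho}}^{E^{k},t}_{x_{i}}$, neither of which involves the PVM $\{\mathbf{\hat{P}}^{E^{k},t}_{i}\}$, so the minimum pulls through only the second summand, producing exactly the right-hand side of Corollary \ref{eqn:tensdiagg}.

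There is no genuine obstacle in this reduction — all steps are line-by-line analogues of the single-environment argument once one is careful with tensor products. The only subtlety worth double-checking is that the tensor product of projectors $\bigotimes_{k}\mathbf{\hat{P}}^{E^{k},t}_{i}$ remains a projector (obvious) and that it still satisfies $\|\bigotimes_{k}\mathbf{\hat{P}}^{E^{k},t}_{i}\mathbf{\hat{A}}\bigotimes_{k}\mathbf{\hat{P}}^{E^{k},t}_{i}\|_{1}\leq\|\mathbf{\hat{A}}\|_{1}$, which follows immediately from the two-sided submultiplicativity of the trace norm against the operator norm and the fact that each projector has operator norm one. Granting this, the corollary is an immediate consequence of Theorem \ref{eqn:themain2} combined with the triangle-inequality estimate just described, and the preceding discussion in the text already exhibits the proof in the case $N_{E}=1$.
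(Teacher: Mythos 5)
Your proposal is correct and follows exactly the route the paper intends: the paper states Corollary \ref{eqn:tensdiagg} without a separate proof, presenting it as the line-by-line extension of the argument preceding Lemma \ref{eqn:diagterms} (the $1-\mathrm{Tr}\{\cdot\}\leq\|\cdot\|_{1}$ step, the insertion of $\pm\bigotimes_{k}\boldsymbol{\hat{\rho}}^{E^{k},t}_{x_{i}}$ and its projected counterpart, contractivity of $\mathbf{\hat{A}}\mapsto\mathbf{\hat{P}}\mathbf{\hat{A}}\mathbf{\hat{P}}$, then $\sqrt{a+b}\leq\sqrt{a}+\sqrt{b}$), now launched from Theorem \ref{eqn:themain2} rather than Theorem \ref{eqn:themain}. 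Your handling of the $\min_{PVM}$ (pulling it through the PVM-independent first summand, noting that the minimization is only over environmental PVMs) matches the paper's treatment, so nothing is missing.
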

In the sequel, the following simple inequality will be useful. 
\begin{definition2}[\textbf{Telescopic inequality} \cite{JKthree}]
Let $\mathbf{\hat{A}}^{k}$ and $\mathbf{\hat{B}}^{k}$ be trace class operators for all $k$. Then,
\label{eqn:telescoping}
\begin{equation}
   \Big\| \bigotimes_{k=1}^{N}\mathbf{\hat{A}}^{k}-\bigotimes_{k=1}^{N}\mathbf{\hat{B}}^{k}  \Big\|_{1} \leq
\end{equation}
\begin{equation}
\sum_{j=1}^{N}\bigg(\prod_{k=1}^{j-1}\big\|\mathbf{\hat{A}}^{k}\big\|_{1}\bigg)\times \big\|\mathbf{\hat{A}}^{j}-\mathbf{\hat{B}}^{j}\big\|_{1}\times \bigg(\prod_{k=j+1}^{N}\big\|\mathbf{\hat{B}}^{k} 
 \big\|_{1}\bigg)
\end{equation}
\end{definition2}
Using Corollary \ref{eqn:tensdiagg} and Lemma \ref{eqn:telescoping} we obtain the following useful corollary. 

\begin{Co}[\textbf{Further estimates}]
\label{eqn:thecorrolaryfordiags}
\begin{equation}
\frac{1}{2}\min_{PVM}\bigg\|\sum_{i}\Big(\mathbf{\hat{P}}_{\Delta_{i,t}}\otimes \mathbb{I}\Big)\boldsymbol{\hat{\rho}}_{t}\Big(\mathbf{\hat{P}}_{\Delta_{i,t}}\otimes \mathbb{I}\Big)- \frac{1}{\mathscr{N}(t)}\sum_{i}\Big(\mathbf{\hat{P}}_{\Delta_{i,t}}\otimes\bigotimes_{k=1}^{N_{E}} \mathbf{\hat{P}}^{E^{k},t}_{i}\Big)\boldsymbol{\hat{\rho}}_{t}\Big(\mathbf{\hat{P}}_{\Delta_{i,t}}\otimes\bigotimes_{k=1}^{N_{E}} \mathbf{\hat{P}}^{E^{k},t}_{i}\Big)\bigg\|_{1}\leq
\end{equation}
\begin{equation}
\label{eqn:further}
4\sqrt{2\sum_{i}\bar{p}_{i}(t)\sum_{k=1}^{N_{E}}\int|\psi_{S_{i,t}}(x)|^{2}\Big\|\boldsymbol{\hat{\rho}}^{E^{k},{t}}_{x}-\boldsymbol{\hat{\rho}}^{E^{k},t}_{x_{i}}\Big\|_{1}dx}+4\min_{PVM} \sqrt{\sum_{i}\bar{p}_{i}(t)\sum_{k=1}^{N_{E}}\Big\|\boldsymbol{\hat{\rho}}^{E^{k},t}_{x_{i}}-\mathbf{\hat{P}}^{E^{k},t}_{i}\boldsymbol{\hat{\rho}}^{E^{k},t}_{x_{i}}\mathbf{\hat{P}}^{E^{k},t}_{i}\Big\|_{1}}
\end{equation}
\end{Co}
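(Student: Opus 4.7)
The plan is to derive this statement directly from Corollary~\ref{eqn:tensdiagg} by applying the telescopic inequality (Lemma~\ref{eqn:telescoping}) to each of the two tensor-product differences appearing under the two square roots on its right-hand side. No new dynamical estimates are needed; the structural work—removing the $\mathscr{E}_t$ maps via Lemma~\ref{eqn:removingmaps}, absorbing the normalization constant via Lemma~\ref{eqn:reversetin}, and reducing to the pure-state formula for the trace distance—has already been carried out upstream, so the present step is purely about collapsing tensor-product differences.

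For the first term under the square root, I would first use the definition of $\Lambda_{i,t}$ (Theorem~\ref{eqn:themain2}) to rewrite $\Lambda_{i,t}\big(\bigotimes_{k=1}^{N_E}\boldsymbol{\hat{\rho}}^{E^{k},0}\big)$ as $\int |\psi_{S_{i,t}}(x)|^{2}\,\bigotimes_{k=1}^{N_E}\boldsymbol{\hat{\rho}}^{E^{k},t}_{x}\,dx$. Since $|\psi_{S_{i,t}}(x)|^{2}$ is a probability density, the constant term $\bigotimes_{k}\boldsymbol{\hat{\rho}}^{E^{k},t}_{x_{i}}$ can be pulled inside the integral, after which the triangle inequality for the trace norm brings the norm inside the integral. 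Applying Lemma~\ref{eqn:telescoping} pointwise in $x$ with $\mathbf{\hat{A}}^{k}=\boldsymbol{\hat{\rho}}^{E^{k},t}_{x}$ and $\mathbf{\hat{B}}^{k}=\boldsymbol{\hat{\rho}}^{E^{k},t}_{x_{i}}$ then finishes the term: both operators are unitary conjugations of density operators, hence have trace norm exactly $1$, so all prefactors in the telescopic bound collapse and only the simple sum $\sum_{k}\|\boldsymbol{\hat{\rho}}^{E^{k},t}_{x}-\boldsymbol{\hat{\rho}}^{E^{k},t}_{x_{i}}\|_{1}$ survives.

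For the second term under the square root, I would apply Lemma~\ref{eqn:telescoping} directly, with $\mathbf{\hat{A}}^{k}=\boldsymbol{\hat{\rho}}^{E^{k},t}_{x_{i}}$ and $\mathbf{\hat{B}}^{k}=\mathbf{\hat{P}}^{E^{k},t}_{i}\boldsymbol{\hat{\rho}}^{E^{k},t}_{x_{i}}\mathbf{\hat{P}}^{E^{k},t}_{i}$. The only point deserving a line of justification is that $\|\mathbf{\hat{B}}^{k}\|_{1}=\mathrm{Tr}\{\mathbf{\hat{P}}^{E^{k},t}_{i}\boldsymbol{\hat{\rho}}^{E^{k},t}_{x_{i}}\mathbf{\hat{P}}^{E^{k},t}_{i}\}\leq 1$, which follows because the operator is positive and its trace is dominated by $\mathrm{Tr}\,\boldsymbol{\hat{\rho}}^{E^{k},t}_{x_{i}}=1$. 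Again the multiplicative prefactors in the telescopic bound are $\leq 1$, leaving only $\sum_{k}\|\boldsymbol{\hat{\rho}}^{E^{k},t}_{x_{i}}-\mathbf{\hat{P}}^{E^{k},t}_{i}\boldsymbol{\hat{\rho}}^{E^{k},t}_{x_{i}}\mathbf{\hat{P}}^{E^{k},t}_{i}\|_{1}$.

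Plugging both collapsed bounds into the two terms on the right of Corollary~\ref{eqn:tensdiagg}, with the first one still weighted by $|\psi_{S_{i,t}}(x)|^{2}$ and integrated over $x$ before the $\sum_{i}\bar{p}_{i}(t)$ weighting, yields exactly the inequality~(\ref{eqn:further}). I do not foresee any genuine obstacle: the main delicacy is simply checking that the trace-norm factors entering Lemma~\ref{eqn:telescoping} are bounded by $1$ in each of the two applications, after which the bookkeeping is automatic.
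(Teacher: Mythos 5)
Your proposal is correct and follows essentially the same route as the paper: pull the constant tensor product inside the integral using that $|\psi_{S_{i,t}}(x)|^{2}$ is a probability density, move the trace norm inside by the triangle inequality, and then collapse both tensor-product differences via Lemma \ref{eqn:telescoping}, using $\|\boldsymbol{\hat{\rho}}^{E^{k},t}_{x}\|_{1}=1$ and $\|\mathbf{\hat{P}}^{E^{k},t}_{i}\boldsymbol{\hat{\rho}}^{E^{k},t}_{x_{i}}\mathbf{\hat{P}}^{E^{k},t}_{i}\|_{1}\leq 1$ to discard the multiplicative prefactors. Your explicit justification of the second trace-norm bound is a detail the paper leaves implicit, but the argument is identical in substance.
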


\begin{proof}
First note that 
\begin{equation}
\Bigg\|\Lambda_{i,t}\Bigg(\bigotimes_{k=1}^{N_{E}}\boldsymbol{\hat{\rho}}^{E^{k},0}\Bigg)-\bigotimes_{k=1}^{N_{E}}\boldsymbol{\hat{\rho}}^{E^{k},t}_{x_{i}}\Bigg\|_{1}=\Bigg\|\int|\psi_{S_{i,t}}(x)|^{2}\bigg(\bigotimes_{k=1}^{N_{E}}\boldsymbol{\hat{\rho}}^{E^{k},t}_{x}\bigg)dx-\bigotimes_{k=1}^{N_{E}}\boldsymbol{\hat{\rho}}^{E^{k},t}_{x_{i}}\Bigg\|_{1}=
\end{equation}
\begin{equation}
 \Bigg\|\int|\psi_{S_{i,t}}(x)|^{2}\bigg(\bigotimes_{k=1}^{N_{E}}\boldsymbol{\hat{\rho}}^{E^{k},t}_{x}-\bigotimes_{k=1}^{N_{E}}\boldsymbol{\hat{\rho}}^{E^{k},t}_{x_{i}}\bigg)dx\Bigg\|_{1}\leq
\end{equation}
\begin{equation}
\int|\psi_{S_{i,t}}(x)|^{2}\Bigg\|\bigg(\bigotimes_{k=1}^{N_{E}}\boldsymbol{\hat{\rho}}^{E^{k},t}_{x}-\bigotimes_{k=1}^{N_{E}}\boldsymbol{\hat{\rho}}^{E^{k},t}_{x_{i}}\bigg)\Bigg\|_{1}dx.
\end{equation}
Using the latter, the proof follows directly from Lemma \ref{eqn:telescoping} and Theorem \ref{eqn:diagterms} by noting that 
\begin{equation}
\big\|\boldsymbol{\hat{\rho}}^{E^{k},t}_{x}\big\|_{1} = 1
\end{equation}
for all $t$, $k$ and $x$. 
\end{proof}
If the first term of (\ref{eqn:further}) is small, then we may benefit from the use of Theorem 4 \cite{AA} in estimating the second term of (\ref{eqn:further}) in a PVM-independent way.

\begin{section}{Conclusion}
The results of this paper provide tools for the estimation of the diagonal and off-diagonal terms first seen in (\ref{eqn:uzbek8}) and (\ref{eqn:uzbek9}). Estimating these terms is necessary to understand the asymptotic behavior of the optimization (\ref{eqn:uzbek}) which is the primary focus of this paper. Assuming that the partial tracing that takes place in section \ref{eqn:ptrace}, via implementation of Lemma \ref{eqn:partialtrace} in appendix \ref{thelemmainthispape}, leads to a kernel $\Gamma(t,x,y)$ which, along with its derivative for with respect to $y$, decays to zero as $t\rightarrow \infty$ we may apply Theorem \ref{eqn:theoremkupsch} to prove that the off-diagonal terms decay to zero as $t\rightarrow \infty$. A word of warning is now merited, the attentive reader may have noticed that Theorem \ref{eqn:theoremkupsch} is not useful in the case where the kernel of the density operator $\boldsymbol{\hat{\rho}}_{S_{0}}$ is not supported on a set of finite measure. This is because one would then up having to sum over an infinite amount of terms of the form (\ref{eqn:kupschkupsch}) producing a term that will not decay to zero as $t\rightarrow \infty$ even if $\Gamma(t,x,y)$ does decay to zero as $t\rightarrow \infty$ for all $x,y\in\mathbb{R}$. The way to work around this is to just assume that the kernel of $\boldsymbol{\hat{\rho}}_{S_{0}}$ is supported over a set of finite measures; future work could perhaps focus on strengthening the bound afforded by Theorem \ref{eqn:theoremkupsch} amongst other things.  

Now, Assuming that the off-diagonal terms decay to zero; for our setting, it suffices to assume that the kernel of $\boldsymbol{\hat{\rho}}_{S_{0}}$ is supported on a set of finite measure and that the states $\boldsymbol{\hat{\rho}}^{E^{k},0}$ are supported on the \emph{Rajchman} subspace of $\mathbf{\hat{B}}_{k}$ (see section 7 of \cite{AA}). In \cite{AA} analogous results to those produced in this work are produced for the case of discrete variables, i.e. $\mathbf{\hat{X}}$ is assumed to have purely discrete spectrum, but there the bounds where all independent of the minimization over all PVM. As such, arguments about the convergence of the corresponding (\ref{eqn:uzbek}) to zero were made based solely on the spectral properties of the operators $\mathbf{\hat{B}}_{k}$ and on the initial environmental states $\boldsymbol{\hat{\rho}}^{E^{k},0}$ being supported on the \emph{Rajchman} subspace of $\mathbf{\hat{B}}_{k}$ respectively; this is the content of the final Theorem in \cite{AA}. Unfortunately, an analog to the latter can not be obtained within the parallel development for continuous variables presented in this work since in the bound provided for the diagonal terms is not independent of the minimization over all PVM. This being the case, it is not immediately clear what is the relationship between the spectral properties of the $\mathbf{\hat{B}}_{k}$ and the convergence of (\ref{eqn:uzbek}) to an SBS state (in the sense of Definition (\ref{eqn:turkcv}). Given that the decay of the off-diagonal terms (\ref{eqn:uzbek9}) as $t\rightarrow \infty$ can be cast in terms of the spectral properties of the dynamics generators corresponding to the environmental degrees of freedom that were traced out as seen at the beginning of this paragraph, it is tempting to hypothesize that the same conditions imposed on $\mathbf{\hat{B}}_{k}$ and the $\boldsymbol{\hat{\rho}}^{E^{k},0}$ in \cite{AA} leading to the convergence to SBS in discrete variables will be sufficient to guarantee that (\ref{eqn:uzbek}) decays to zero but proving this at the moment remains a challenge to the authors and therefore we leave it as a conjecture. What can be said conclusively about the diagonal terms is that they become zero if the respective supports of the density operators $\Lambda_{i,t}\Bigg(\bigotimes_{k=1}^{N_{E}}\boldsymbol{\hat{\rho}}^{E^{k},0}\Bigg)$ in Theorem \ref{eqn:themain2} are non-overlapping. Of course, this is an idealization and the best that one can hope for is that the overlap amongst the latter density operators (in the fidelity sense) becomes arbitrarily small as $t$ becomes arbitrarily large. However, our attempts to construct a time and initial data dependent PVM explicitly as was done in \cite{AA} for a generic setting have not been met with success; hence, the PVM used to analyze the asymptotic behavior of Theorems \ref{eqn:themain} and \ref{eqn:themain2} have to be constructed on a case-by-case basis.
\end{section}

\newpage
\appendix 
\addcontentsline{toc}{section}{Appendices}
\section{Quantum State Discrimination}
\label{app:QSD}
In this Appendix, we introduce the Quantum State Discrimination optimization problem (QSD)  \cite{hellstrom} \cite{Montanaro} \cite{qiu} \cite{bae} \cite{barnett}.  Let $\mathscr{H}$ be an arbitrary Hilbert space and let $\mathcal{S}(\mathcal{\mathscr{H}}\big)$ be the space of density operators acting in $\mathscr{H}$. Given a mixture of density operators, 
\begin{equation}
\label{eqn:countablemixture}
\boldsymbol{\hat{\rho}} = \sum_{i=1}^{N}p_{i} \boldsymbol{\hat{\rho}}_{i} 
\end{equation}
where $\sum_{i = 1}^{N}p_{i} = 1$, the theory of QSD aims to find a POVM $\{\mathbf{\hat{M}}_{l}^{\dagger}\mathbf{\hat{M}}_{l}\}_{l=1}^{K}\subset \mathcal{B}(\mathscr{H})$ ( $K\geq N$) which resolves the identity operator of $\mathcal{B}(\mathscr{H})$, and minimizes the object below which we will be referring to as a \emph{probability error}. 
\begin{equation}
\label{eqn:minerror2}
p_{E}\big\{\{p_{i},\boldsymbol{\hat{\rho}}_{i}\}_{i=1}^{N}, \{\mathbf{\hat{M}}_{l}\big\}_{l=1}^{K} \big\}:=1-\sum_{i=1}^{N}p_{i}Tr\big\{\mathbf{\hat{M}}_{i}\boldsymbol{\hat{\rho}}_{i}\mathbf{\hat{M}}^{\dagger}_{i} \big\} 
\end{equation}
\section{Multiple Partial Traces for Von Neumann Type Interaction}
\label{thelemmainthispape}
\begin{definition2}[Multiple Partial Traces]
\label{eqn:partialtrace}
\begin{equation}
Tr_{E^{N_{E}+1},E^{N_{E}+2},...,E^{N}}\big\{\boldsymbol{\hat{\rho}}_{t}\big\} =\mathscr{U}_{N_{E},t}\bigg( \mathscr{E}_{t}\big(\boldsymbol{\hat{\rho}}_{s}\big)\otimes\bigotimes_{k=1}^{N_{E}}\boldsymbol{\hat{\rho}}^{E^{k}_{0}}\bigg).
\end{equation}
where 
\begin{equation}
 \mathscr{U}_{n,t}\big( \mathbf{\hat{A}}\big) : = e^{-it \mathbf{\hat{X}}\otimes  \hat{\mathbf{S}}_{n}}\big( \mathbf{\hat{A}}\big)e^{it \mathbf{\hat{X}}\otimes\hat{\mathbf{S}}_{n}} 
\end{equation}
\begin{equation}
\hat{\mathbf{S}}_{n} := \sum_{k=1}^{n}g_{k}\mathbf{\hat{B}}_{k} 
\end{equation}
and
\begin{equation}
 \mathscr{E}_{t}^{M_{E}}\{ \boldsymbol{\hat{\sigma}}\} : =\int\int \langle x|\boldsymbol{\hat{\sigma}}|y\rangle\Gamma_{M_{E}}(t,x,y)|x\rangle\langle y| dxdy.
\end{equation}
Here
\begin{equation}
\label{eqn:scrmyonew}
\Gamma_{M_{E}}(t,x,y):=  \prod_{k=N_{E}+1}^{N}Tr_{k}\bigg\{e^{-itxg_{k}\mathbf{\hat{B}}_{k}} \boldsymbol{\hat{\rho}}^{E^{k}_{0}}e^{ityg_{k}\mathbf{\hat{B}}_{k}}\bigg\} 
\end{equation}
$M_{E}= N-N_{E}$, the number of traces being taken in equation (\ref{eqn:scrmyonew}). 
\end{definition2}
\begin{proof}
\begin{equation}
\label{eqn:tracemultiplenew}
Tr_{E_{N_{E}+1},E_{N_{E}+2},...,E_{N}}\big\{\boldsymbol{\hat{\rho}}_{t}\big\} =
\end{equation}
\begin{equation}
Tr_{E_{N_{E}+1},E_{N_{E}+2},...,E_{N}}\bigg\{e^{-it\mathbf{\hat{X}}\otimes\sum_{k=1}^{N}g_{k}\mathbf{\hat{B}}_{k}}\Bigg(\boldsymbol{\hat{\rho}}_{S_{0}}\otimes \bigotimes_{k=1}^{N}\boldsymbol{\hat{\rho}}^{E^{k}_{0}}\Bigg)e^{it\mathbf{\hat{X}}\otimes\sum_{k=1}^{N}g_{k}\mathbf{\hat{B}}_{k}}\bigg\}=
\end{equation}
\begin{equation}
\label{eqn:tokyonew}
\mathscr{U}_{N_{E},t}\Bigg(Tr_{E_{N_{E}+1},E_{N_{E}+2},...,E_{N}}\bigg\{e^{-it\mathbf{\hat{X}}\otimes\sum_{k=N_{E}+1}^{N}g_{k}\mathbf{\hat{B}}_{k}}\Bigg(\boldsymbol{\hat{\rho}}_{S_{0}}\otimes \bigotimes_{k=N_{E}+1}^{N}\boldsymbol{\hat{\rho}}^{E^{k}_{0}}\Bigg)e^{it\mathbf{\hat{X}}\otimes\sum_{k=N_{E}+1}^{N}g_{k}\mathbf{\hat{B}}_{k}}\bigg\}\bigotimes_{k=1}^{N_{E}}\boldsymbol{\hat{\rho}}^{E^{k}_{0}}\Bigg)
\end{equation}
Let us now use the generalized eigenvectors of $\mathbf{\hat{X}}$ in order to write $\boldsymbol{\hat{\rho}}_{S} = \int\int K_{S}(x,y)|x\rangle\langle y| dxdy$ where $K_{S}(x,y) = \langle x|\hat{\rho}_{S}|y\rangle$. We have
\begin{equation}
\label{eqn:okayamanew}
e^{-it\mathbf{\hat{X}}\otimes\sum_{k=N_{E}+1}^{N}g_{k}\mathbf{\hat{B}}_{k}}\Bigg(\boldsymbol{\hat{\rho}}_{S_{0}}\otimes \bigotimes_{k=N_{E}+1}^{N}\boldsymbol{\hat{\rho}}^{E^{k}_{0}}\Bigg)e^{it\mathbf{\hat{X}}\otimes\sum_{k=N_{E}+1}^{N}g_{k}\mathbf{\hat{B}}_{k}}=
\end{equation}
\begin{equation}
\int\int K_{S}(x,y)|x\rangle\langle y|\bigg(e^{-itx\sum_{k=N_{E}+1}^{N}g_{k}\mathbf{\hat{B}}_{k}} \Bigg(\bigotimes_{k=N_{E}+1}^{N}\boldsymbol{\hat{\rho}}^{E^{k}_{0}}\Bigg)e^{ity\sum_{k=N_{E}+1}^{N}g_{k}\mathbf{\hat{B}}_{k}}\bigg)dxdy=
\end{equation}
\begin{equation}
\int\int K_{S}(x,y)|x\rangle\langle y|\otimes\bigotimes_{k=N_{E}+1}^{N}e^{-itxg_{k}\mathbf{\hat{B}}_{k}}\boldsymbol{\hat{\rho}}^{E^{k}_{0}}e^{ityg_{k}\mathbf{\hat{B}}_{k}}dxdy.
\end{equation}
Furthermore 
\begin{equation}
Tr_{E_{N_{E}+1},E_{N_{E}+2},...,E_{N}}\bigg\{\int\int K_{S}(x,y)|x\rangle\langle y|\otimes\bigotimes_{k=N_{E}+1}^{N}e^{-itxg_{k}\mathbf{\hat{B}}_{k}}\boldsymbol{\hat{\rho}}^{E^{k}_{0}}e^{ityg_{k}\mathbf{\hat{B}}_{k}}dxdy\bigg\} = 
\end{equation}
\begin{equation}
 \int\int K_{S}(x,y)|x\rangle\langle y|Tr_{E_{N_{E}+1},E_{N_{E}+2},...,E_{N}}\bigg\{\bigotimes_{k=N_{E}+1}^{N}e^{-itxg_{k}\mathbf{\hat{B}}_{k}}\boldsymbol{\hat{\rho}}^{E^{k}_{0}}e^{ityg_{k}\mathbf{\hat{B}}_{k}}\bigg\}dxdy =
\end{equation}
\begin{equation}
\label{eqn:osakanew}
\int\int K_{S}(x,y)\Gamma_{M_{E}}(t,x,y)|x\rangle\langle y|dxdy= \mathscr{E}_{t}^{M_{E}}\big(\boldsymbol{\hat{\rho}}_{S_{0}}\big)
\end{equation}
Finally, using (\ref{eqn:tokyonew}) and (\ref{eqn:osakanew}), we have
\begin{equation}
\label{eqn:kyotonew}
(\ref{eqn:tokyonew})= \mathscr{U}_{N_{E},t}\Bigg(\mathscr{E}_{t}^{M_{E}}\big(\boldsymbol{\hat{\rho}}_{S_{0}}\big)\otimes\bigotimes_{k=1}^{N_{E}}\boldsymbol{\hat{\rho}}^{E^{k}_{0}}\Bigg)
\end{equation}
\end{proof}

\end{document}